\documentclass{article}

\providecommand{\neuripsoption}{preprint}
\usepackage[\neuripsoption]{neurips_2026}

\usepackage[utf8]{inputenc}
\usepackage[T1]{fontenc}
\usepackage{url}
\usepackage{booktabs}
\usepackage{amsfonts,amsmath,amssymb,amsthm,mathtools,bm}
\usepackage{nicefrac}
\usepackage{microtype}
\usepackage{xcolor}
\usepackage{tabularx,array,multirow}
\usepackage{enumitem}
\usepackage{thmtools,thm-restate}
\usepackage{algorithm}
\usepackage[noend]{algpseudocode}
\algrenewcommand\algorithmiccomment[1]{\hfill{\color{gray!70!black}\footnotesize #1}}
\usepackage{tikz}
\usetikzlibrary{positioning,arrows.meta,calc,decorations.markings,shapes.geometric,decorations.pathreplacing}
\usepackage{pgfplots}
\pgfplotsset{compat=1.16}
\usepgfplotslibrary{groupplots,fillbetween}
\usepackage{hyperref}
\usepackage[capitalise,noabbrev]{cleveref}

\hypersetup{colorlinks=true,linkcolor=blue!50!black,citecolor=blue!50!black,urlcolor=blue!50!black,
  pdftitle={Say, Echo, Do: Strategic Narratives and Revealed Positioning in Financial Markets},pdfauthor={Ali Atiah Alzahrani}}

\definecolor{say}{HTML}{8E44AD}
\definecolor{echo}{HTML}{C0392B}
\definecolor{do}{HTML}{1F618D}
\definecolor{price}{HTML}{222222}
\definecolor{good}{HTML}{1E8449}

\theoremstyle{plain}
\newtheorem{theorem}{Theorem}
\newtheorem{proposition}[theorem]{Proposition}
\newtheorem{lemma}[theorem]{Lemma}

\theoremstyle{definition}
\newtheorem{definition}[theorem]{Definition}
\newtheorem{assumption}{Assumption}
\theoremstyle{remark}
\newtheorem{remark}[theorem]{Remark}
\crefname{assumption}{Assumption}{Assumptions}
\crefname{definition}{Definition}{Definitions}
\crefname{remark}{Remark}{Remarks}
\setlist{itemsep=1pt,topsep=2pt}

\newcommand{\R}{\mathbb{R}}
\newcommand{\E}{\mathbb{E}}
\newcommand{\Prob}{\mathbb{P}}
\newcommand{\N}{\mathcal{N}}
\newcommand{\F}{\mathcal{F}}
\newcommand{\z}{\bm{z}}
\newcommand{\w}{\bm{w}}
\newcommand{\y}{\bm{y}}
\newcommand{\h}{\bm{h}}
\newcommand{\bphi}{\bm{\varphi}}
\newcommand{\ind}{\mathbf{1}}
\DeclareMathOperator{\sgn}{sgn}
\DeclareMathOperator{\Var}{Var}
\DeclareMathOperator{\Cov}{Cov}
\DeclareMathOperator{\KL}{KL}
\newcommand{\SAY}{\textcolor{say}{\textsf{Say}}}
\newcommand{\ECHO}{\textcolor{echo}{\textsf{Echo}}}
\newcommand{\DO}{\textcolor{do}{\textsf{Do}}}

\title{Say, Echo, Do: Strategic Narratives and\\ Revealed Positioning in Financial Markets}

\author{%
  Ali Atiah Alzahrani\thanks{First draft, September 2026. Corresponding author: \href{mailto:aliatiah100@gmail.com}{aliatiah100@gmail.com}. The views expressed are those of the author and do not reflect the views of any other individual or entity. This material is for research purposes only and does not constitute investment advice.}\\
  {\normalfont Riyadh, Saudi Arabia}
}

\begin{document}
\maketitle

\begin{abstract}
Machine-learning signals built from financial text treat what institutions say, and what the media repeat, as evidence about value. But whoever shapes a narrative may be trading against it. We study markets with three observable voices: institutional statements (\emph{Say}), media repetition (\emph{Echo}) and revealed positioning (\emph{Do}). We ask when words should be followed and when they should be faded. In a linear--quadratic model of an informed institution that speaks and trades before a partly credulous crowd, talking an asset down while buying it is optimal exactly when $\varphi^2<2\lambda k<\varphi$. A distribution-free identity then shows that when the observable Say--Do covariance is negative, words carry negative predictive content and should be faded. For measurement, we derive (i)~an exact factorised posterior over which articles are echoes, combining arrival times with embedding similarity; (ii)~a return-aligned contrastive objective that attains its bound exactly when squared embedding distances are an increasing affine function of squared outcome distances, with the tightest loss-based certificate of which neighbour rankings survive imperfect training; and (iii)~a path-signature statistic for who moved first. In a controlled market with known ground truth, echo sentiment predicts returns with a significantly negative sign in all $29$ simulated markets, the rolling Say--Do correlation flags false-alarm events with an AUC of $0.90$, and return-aligned embeddings organise headlines by consequence rather than topic. We also report where the tools fail.
\end{abstract}

\section{Introduction}\label{sec:intro}

A bank downgrades a stock. Within hours newswires rewrite the note, commentators repeat it and social media amplifies it. The price falls, but by less than the volume of bad news would suggest, and weeks later it is higher than before. The next round of disclosures shows that some institutions were buying throughout. Traders call this a \emph{false alarm}.

Most machine-learning pipelines for financial text would score this episode as strongly negative. Sentiment classifiers, domain encoders and prompted language models map a document to a number that is assumed to move with value \citep{ke2019,araci2019,lopezlira2023}. That assumption fails when text is \emph{strategic}. The author of a narrative may profit from the crowd's reaction to it, and repetition inflates a signal without adding information. The useful question is therefore not only what a text says. It is also whether its author's money agrees, and whether the text is news or an echo of earlier news.

We separate three voices that any market participant can observe (\cref{fig:overview}). \SAY\ is what institutions state: ratings, notes, management tone. \ECHO\ is how the media repeat and amplify it. \DO\ is where capital is revealed to have gone: holdings, insider trades, futures positioning, order flow. Words are cheap and easy to observe, repetition is mostly redundant, and positions are costly, slow to be disclosed and hard to fake.

\begin{figure}[t]
\centering
\begin{tikzpicture}[
  voice/.style={draw, rounded corners=2pt, thick, minimum width=1.9cm, minimum height=1.05cm, align=center, font=\scriptsize},
  box/.style={draw=gray!65, rounded corners=2pt, fill=gray!4, minimum height=1.05cm, align=center, font=\scriptsize},
  mbox/.style={box, text width=3.6cm, minimum width=3.8cm},
  tbox/.style={box, text width=4.2cm, minimum width=4.4cm},
  arr/.style={-{Stealth[length=1.6mm]}, thick, gray!65!black}]
\foreach \x/\t in {0.95/Voices, 4.25/Measurement (\S\ref{sec:measure}), 8.8/Theory (\S\ref{sec:theory}), 12.5/Decision}
  \node[font=\scriptsize\bfseries, text=gray!55!black] at (\x,1.95) {\t};
\node[voice, draw=say, fill=say!7, text=say] (say) at (0.95,1.25) {\textbf{Say}\\ statements, notes};
\node[voice, draw=echo, fill=echo!7, text=echo] (echo) at (0.95,0) {\textbf{Echo}\\ rewrites, reposts};
\node[voice, draw=do, fill=do!7, text=do] (do) at (0.95,-1.25) {\textbf{Do}\\ holdings, flow};
\node[mbox] (m1) at (4.25,1.25) {return-aligned embeddings: what\\ similar stories did next (\cref{thm:racl,thm:robustnn})};
\node[mbox] (m2) at (4.25,0) {semantic declustering: is each\\ article news or echo? (\cref{thm:semantic})};
\node[mbox] (m3) at (4.25,-1.25) {rolling $\widehat{\Cov}(\text{Say},\text{Do})$; L\'evy area:\\ who moved first? (\cref{thm:levy})};
\node[tbox] (t1) at (8.8,1.25) {optimal speech: false alarms\\ iff $\varphi^2<2\lambda k<\varphi$ (\cref{thm:speech})};
\node[tbox] (t2) at (8.8,0) {fade the echo, follow the news\\ (\cref{thm:followfade,cor:echo})};
\node[tbox] (t3) at (8.8,-1.25) {words vs deeds (\cref{thm:identity}):\\ $\Cov(r,\text{Say})=\lambda\Cov(\text{Do},\text{Say})-\varphi\sigma_\varepsilon^2$};
\node[draw, rounded corners=2pt, very thick, fill=good!10, minimum width=2.1cm, minimum height=3.55cm, align=center, font=\scriptsize] (dec) at (12.5,0) {\textbf{follow} or \textbf{fade}\\[3pt] words, echo\\ and news;\\[3pt] always\\ follow deeds};
\draw[arr] (say) -- (m1); \draw[arr] (echo) -- (m2); \draw[arr] (do) -- (m3);
\draw[arr] (m1) -- (t1); \draw[arr] (m2) -- (t2); \draw[arr] (m3) -- (t3);
\draw[arr] (t1) -- (t1 -| dec.west); \draw[arr] (t2) -- (dec); \draw[arr] (t3) -- (t3 -| dec.west);
\end{tikzpicture}
\caption{Overview. Three observable voices are turned into measurements, and the theory says which sign each measurement should carry. The central object is the Say--Do covariance: when institutions' words and positions move in opposite directions, their words should be faded.}
\label{fig:overview}
\end{figure}

\paragraph{Contributions.}
\begin{itemize}[leftmargin=1.3em]
  \item \textbf{A theory of strategic narratives} (\cref{sec:theory}). We solve for the optimal message and trade of an informed institution facing a partly credulous crowd. False alarms, in which the institution talks an asset down while buying it, are optimal in an explicit wedge of parameters. A distribution-free identity shows that a negative Say--Do covariance means words should be faded. Media virality selects which kind of manipulation appears, and adaptive trust can cycle and, numerically, become chaotic: repeated re-estimation of the reading rule fails to converge, as in repeated risk minimisation under performativity \citep{perdomo2020}.
  \item \textbf{Measurement with guarantees} (\cref{sec:measure}). An exact posterior over echo parentage that uses both timing and meaning. A return-aligned contrastive objective that attains its lower bound exactly when embedding geometry reproduces outcome geometry, together with the tightest certificate, among those based on the excess loss alone, of which neighbour rankings survive imperfect training. A signature statistic that identifies lead and lag.
  \item \textbf{A controlled evaluation} (\cref{sec:experiments}). A simulated market in which every ground-truth quantity is known, a synthetic headline benchmark, and a report of both successes and failures, including where the theory adds nothing a linear model cannot learn. Code, tests and every experiment are released at \url{https://github.com/AliAtiah/say-echo-do}.
\end{itemize}

\paragraph{What we do not claim.} A ``strategic'' reading means only that words and money moved in opposite directions, in a particular order. It is not evidence of deception. We do not test the theory on market data; our experiments are simulated so that the truth is known.

\section{Related work}\label{sec:related}

\paragraph{Financial text.} Media pessimism predicts price pressure and reversal \citep{tetlock2007}, and prices drift after news but reverse after no-news moves \citep{chan2003}. Supervised text factors \citep{ke2019}, domain encoders \citep{araci2019} and prompted language models \citep{lopezlira2023} predict returns, subject to look-ahead leakage \citep{glasserman2023}. These methods take text at face value. We model the author's incentive to shape it, and use revealed positions to discount it.

\paragraph{Strategic behaviour and performativity.} Cheap talk \citep{crawford1982}, costly talk with credulous receivers \citep{kartik2007} and reputational manipulation \citep{benabou1992} study how informed agents shape beliefs. Closest to us, informed traders can profit by spreading rumours and trading against them \citep{vanbommel2003}, and promotional articles move prices \citep{kogan2023}. Order flow reveals private information \citep{kyle1985}. In machine learning, strategic classification \citep{hardt2016} and performative prediction \citep{perdomo2020} study predictions that change the data they are evaluated on. Our institution games a known reading rule, and our credibility cycles are a closed-form case in which repeated re-estimation of that rule fails to converge.

\paragraph{Point processes, representations and signatures.} We build on the cluster representation of Hawkes processes \citep{hawkes1971,hawkesoakes1974}. Our declustering posterior has the structure of space--time ETAS declustering \citep{zhuang2002}, with embeddings in place of locations and offspring marks that depend on the parent's. Text-aware point processes cluster document streams or infer diffusion networks \citep{du2015,he2015}, and neural Hawkes models learn intensities \citep{mei2017}, but none attributes each article to news or echo. Our embedding objective is the stochastic neighbour embedding loss \citep{hinton2002} with market outcomes as the input space, a link between contrastive learning and SNE that is known \citep{hu2023,damrich2023}. It is close to contrastive learning with continuous labels \citep{oord2018,khosla2020,dufumier2021,zha2023}. Our additions are the characterisation of when the bound is attained, which is elementary, and a tight certificate for imperfect training. L\'evy area is the antisymmetric part of the level-two path signature \citep{lyons1998,chevyrev2016,kidger2019}, and has been used to detect lead--lag between asset returns \citep{bennett2022}.

\section{Theory: three voices and strategic speech}\label{sec:theory}

\subsection{A three-voice market}\label{sec:model}

Value is $v\sim\N(0,\sigma_v^2)$. The market observes $K$ public channels $\y=\h v+\bm\nu$ with Gaussian noise, labelled by voice: statements, independent news and echo. Informed traders submit $x=\beta(v-p)$, noise traders submit $u\sim\N(0,\sigma_u^2)$, and a narrative-sensitive crowd sets the price $p=\bphi^\top\y+\lambda(x+u)$, where $\bphi$ collects the crowd's \emph{narrative multipliers} (\cref{ass:signals,ass:price}, \cref{app:model}). Write $r=v-p$ for the forward return, $\w^\star=\sigma_v^2\Sigma_y^{-1}\h$ for the Bayes weights, and $\kappa=(1+\lambda\beta)^{-1}$.

\begin{restatable}[Follow or fade]{theorem}{thmfollowfade}\label{thm:followfade}
Under \cref{ass:signals,ass:price}, $\E[r\mid\y]=\kappa(\w^\star-\bphi)^\top\y$ and $\Var(r\mid\y)=\kappa^2(\sigma^2_{v|y}+\lambda^2\sigma_u^2)\eqqcolon\varsigma^2$, where $\sigma^2_{v|y}=\Var(v\mid\y)$. The return loads negatively on channel $k$ (fade) iff $\varphi_k>w^\star_k$, and positively (follow) iff $\varphi_k<w^\star_k$.
\end{restatable}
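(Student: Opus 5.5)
The plan is to solve the price equation for $p$ in closed form and then read off the conditional moments from Gaussian conditioning. Substituting $x=\beta(v-p)$ into $p=\bphi^\top\y+\lambda(x+u)$ gives $p(1+\lambda\beta)=\bphi^\top\y+\lambda\beta v+\lambda u$, so $p=\kappa(\bphi^\top\y+\lambda\beta v+\lambda u)$. This requires $1+\lambda\beta\neq 0$, which I take to be part of \cref{ass:price}; it holds, for example, when $\lambda,\beta\ge 0$. The forward return is then
\[
r=v-p=\kappa\bigl((1+\lambda\beta)v-\lambda\beta v-\bphi^\top\y-\lambda u\bigr)=\kappa\bigl(v-\bphi^\top\y-\lambda u\bigr).
\]
Note that the $\lambda\beta v$ terms cancel.

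The second step is to condition on $\y$. By \cref{ass:signals}, $(v,\bm\nu)$ are jointly Gaussian and $u$ is independent of $(v,\bm\nu)$, hence of $\y$. Standard Gaussian projection then gives
\[
\E[v\mid\y]=\Cov(v,\y)\Sigma_y^{-1}\y=\sigma_v^2\h^\top\Sigma_y^{-1}\y=(\w^\star)^\top\y,
\]
using the symmetry of $\Sigma_y$, and $\Var(v\mid\y)=\sigma^2_{v|y}$, which is deterministic. Since $\bphi^\top\y$ is $\y$-measurable, $\E[r\mid\y]=\kappa(\w^\star-\bphi)^\top\y$. For the variance, the residual $v-\E[v\mid\y]$ is independent of $\y$ by Gaussianity and independent of $u$. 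Hence $\Var(r\mid\y)=\kappa^2(\sigma^2_{v|y}+\lambda^2\sigma_u^2)$, with no cross term.

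The third step is the sign statement. The loading of $\E[r\mid\y]$ on channel $k$ is $\kappa(w^\star_k-\varphi_k)$. Because $\kappa>0$ under the standing sign assumptions ($\lambda\beta>-1$), the loading is negative iff $\varphi_k>w^\star_k$ and positive iff $\varphi_k<w^\star_k$.

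The only delicate point is confirming exactly what \cref{ass:signals,ass:price} in \cref{app:model} guarantee. These are the independence of $u$ from $(v,\bm\nu)$, the invertibility of $\Sigma_y$, and the positivity of $1+\lambda\beta$. If the appendix treats $\beta$ as an equilibrium object, I would check that it is fixed (non-random) so that $\kappa$ is a constant. Everything else is routine Gaussian algebra.
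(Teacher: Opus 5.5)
Your proposal is correct and follows essentially the same route as the paper: solve the price equation to get $r=\kappa(v-\bphi^\top\y-\lambda u)$ (the paper's \cref{lem:price}), use Gaussian conditioning for $\E[v\mid\y]=\w^{\star\top}\y$ and independence for $\E[u\mid\y]=0$, then split the residual $\kappa\bigl((v-\E[v\mid\y])-\lambda u\bigr)$ into two pieces that are independent given $\y$. The points you flag are settled by \cref{ass:price}, which takes $\lambda>0$ and $\beta>0$ as fixed constants, so $\kappa\in(0,1)$ and the loading $\kappa(w^\star_k-\varphi_k)$ has the sign of $w^\star_k-\varphi_k$.
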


\begin{restatable}[Fade the echo, follow the news]{corollary}{corecho}\label{cor:echo}\label{cor:news}
Under \cref{ass:signals,ass:price}: (i)~if channel $e$ is a garbling of the others, $e=\bm a^\top\y_{-e}+\xi$ with $\xi\sim\N(0,\sigma_\xi^2)$, $\sigma_\xi^2>0$, independent of $(v,\y_{-e},u)$, then $w^\star_e=0$ and the return loads on $e$ with coefficient $-\kappa\varphi_e$, so any crowd weight on echo is reversed; (ii)~if channel $n=v+\nu_n$ with $\nu_n\sim\N(0,\sigma_n^2)$ independent of $(v,\y_{-n},u)$, and $s^2\coloneqq\Var(v\mid\y_{-n})>0$, then $w^\star_n=s^2/(s^2+\sigma_n^2)\in(0,1)$, and returns drift in the direction of $n$ whenever the crowd underweights it, $\varphi_n<w^\star_n$.
\end{restatable}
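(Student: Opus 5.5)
The plan is to derive both parts from \cref{thm:followfade}. That theorem gives $\E[r\mid\y]=\kappa(\w^\star-\bphi)^\top\y$ with $\kappa>0$, so the loading on any channel $j$ is $\kappa(w^\star_j-\varphi_j)$. Each part then reduces to computing the Bayes weight $w^\star_j$ on the relevant channel. Since $\w^\star=\sigma_v^2\Sigma_y^{-1}\h$ is the coefficient vector of the Gaussian regression $\E[v\mid\y]=\w^{\star\top}\y$, I will work with conditional expectations rather than inverting $\Sigma_y$ blockwise. Throughout, I assume $\Sigma_y$ is nonsingular, which holds under \cref{ass:signals}. This makes the regression coefficients unique, so any representation $\E[v\mid\y]=\bm c^\top\y$ identifies $\bm c=\w^\star$.

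For (i), I will show that $\E[v\mid\y]=\E[v\mid\y_{-e}]$. The key point is that $e=\bm a^\top\y_{-e}+\xi$ with $\xi$ independent of $(v,\y_{-e})$. Hence, conditional on $\y_{-e}$, the variable $e$ carries no further information about $v$: the conditional law of $(v,e)$ given $\y_{-e}$ factorises. So $\E[v\mid\y]=\bm b^\top\y_{-e}$ for some $\bm b$, and this is linear in $\y$ with zero coefficient on $e$. Uniqueness, which uses nonsingularity and hence $\sigma_\xi^2>0$, then forces $w^\star_e=0$. Substituting into \cref{thm:followfade} gives the loading $-\kappa\varphi_e$. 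Since $\kappa>0$, this loading has the sign opposite to $\varphi_e$, so any crowd weight on echo is reversed.

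For (ii), I will use sequential Gaussian updating. Let $m=\E[v\mid\y_{-n}]$, so that $v\mid\y_{-n}\sim\N(m,s^2)$. The noise $\nu_n$ is independent of $(v,\y_{-n})$, so observing $n=v+\nu_n$ is a standard normal--normal update. This gives
\[
\E[v\mid\y]=m+\frac{s^2}{s^2+\sigma_n^2}(n-m).
\]
Here $m$ is linear in $\y_{-n}$, so the coefficient on $n$ is $s^2/(s^2+\sigma_n^2)$. By uniqueness again, this coefficient equals $w^\star_n$. It lies in $(0,1)$ because $s^2>0$ and $\sigma_n^2>0$. Finally, \cref{thm:followfade} gives the loading $\kappa(w^\star_n-\varphi_n)>0$ whenever $\varphi_n<w^\star_n$, which is the stated drift.

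The main obstacle is the identification step: equating the coefficient in the conditional-expectation representation with the entry of $\sigma_v^2\Sigma_y^{-1}\h$. This needs $\Sigma_y$ invertible. It also needs the fact that the conditional expectation is linear in $\y$ with a unique coefficient vector, which I will check from the normal equations $\Cov(v-\bm c^\top\y,\y)=0$. In part (i), $\E[v\mid\y_{-e}]$ is a function of $\y_{-e}$ only, and the remaining check is that the independence of $\xi$ makes $\Cov(v-\bm b^\top\y_{-e},e)=0$.
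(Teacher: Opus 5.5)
Your proposal is correct and follows the paper's proof essentially step for step. In (i), the independence of $\xi$ from $(v,\y_{-e})$ gives $\E[v\mid\y]=\E[v\mid\y_{-e}]$; in (ii), a scalar Gaussian update conditional on $\y_{-n}$ gives the coefficient $s^2/(s^2+\sigma_n^2)$; and in both you identify $\w^\star$ using $\Sigma_y\succ0$ and then apply \cref{thm:followfade}, exactly as the paper does. The only cosmetic difference is how the coefficient vector is identified: you use the normal equations $\Sigma_y\bm c=\sigma_v^2\h$, whereas the paper argues that $(\w^\star-\w')^\top\y=0$ almost surely forces $(\w^\star-\w')^\top\Sigma_y(\w^\star-\w')=0$.
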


Two further results, proved in \cref{app:model}, are used in the experiments. When positioning is observed with noise, the optimal forecast depends on words and deeds only through a weighted \emph{Say--Do gap} (\cref{thm:saydo}). And the price move net of the narrative, called \emph{absorption}, predicts continuation if and only if informed flow dominates noise, $\beta\sigma^2_{v|y}>\lambda\sigma_u^2$ (\cref{thm:absorption}).

\subsection{Strategic speech}\label{sec:strategic}

We now let an institution choose what it says. It knows $v$ (any distribution with mean zero and finite variance), publishes a message $m$ that the public sees with noise, $\tilde m=m+\varepsilon$, and trades $x$. The price is $p=\varphi\tilde m+\lambda x$, where $\varphi$ is the crowd's credulity, and misreporting costs $\frac k2(m-v)^2$. The institution maximises $x(v-\varphi m-\lambda x)-\frac k2(m-v)^2$ (\cref{ass:strategic}). Let $a=2\lambda k$.

\begin{restatable}[Optimal speech and its taxonomy]{theorem}{thmspeech}\label{thm:speech}\label{cor:taxonomy}
Under \cref{ass:strategic}, a unique optimum exists iff $a>\varphi^2$ (otherwise the objective is unbounded), and then $m^\ast=\psi v$ and $x^\ast=\chi v$ with $\psi=\frac{a-\varphi}{a-\varphi^2}$ and $\chi=\frac{k(1-\varphi)}{a-\varphi^2}$. For $a>\varphi^2$, $\varphi>0$ and good news ($v>0$) there are four regimes: (i)~\emph{truth} ($\varphi=1$: $\psi=1$, $\chi=0$); (ii)~\emph{shading} ($\varphi<1$, $a>\varphi$: $0<\psi<1$, $\chi>0$); (iii)~\emph{false alarm} ($\varphi^2<a<\varphi<1$: $\psi<0<\chi$, talk it down and buy); and (iv)~\emph{exaggeration} ($\varphi>1$: $\psi>1$, $\chi<0$, talk it up and sell). On the boundary $a=\varphi<1$ the institution says nothing, $\psi=0$.
\end{restatable}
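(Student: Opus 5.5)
The proof treats the institution's problem, for each realised $v$, as a quadratic maximisation in $(m,x)$. Under \cref{ass:strategic} the noise $\varepsilon$ enters the price only through $\varphi\tilde m$. I take the institution to maximise expected profit, so with $\E\varepsilon=0$ the objective reduces to the deterministic quadratic
\[
  G(m,x)=x(v-\varphi m-\lambda x)-\tfrac k2(m-v)^2 .
\]
If the objective in \cref{ass:strategic} is instead literally with $m$ in place of $\tilde m$, the computation is identical. The first step is to write the Hessian of $G$ in $(m,x)$:
\[
  H=\begin{pmatrix}-k & -\varphi\\ -\varphi & -2\lambda\end{pmatrix}.
\]
Its trace is negative (with $k,\lambda>0$) and $\det H=2\lambda k-\varphi^2=a-\varphi^2$.

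The existence and uniqueness claim then splits into two directions.
\begin{itemize}[leftmargin=1.3em]
  \item If $a>\varphi^2$, then $H$ is negative definite, so $G$ is strictly concave and has a unique maximiser given by the first-order conditions.
  \item If $a\le\varphi^2$, I exhibit a direction $d=(d_m,d_x)$ with $d^\top Hd\ge0$ along which $G$ is unbounded above. When $a<\varphi^2$, an eigenvector with positive eigenvalue makes $G$ grow quadratically. In the degenerate case $a=\varphi^2$, take the null direction $d=(2\lambda,-\varphi)$. Along it the quadratic part vanishes, and the linear part of $G$ has slope $v\,d_x+kv\,d_m=v(2\lambda k-\varphi)=v(\varphi^2-\varphi)$. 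This slope is nonzero unless $v=0$ or $\varphi\in\{0,1\}$, so $G$ is unbounded along $\pm d$.
\end{itemize}
This degenerate boundary is the step I expect to need the most care. It is literally correct only for generic $v$ and $\varphi\notin\{0,1\}$, and in the edge cases the maximiser is not unique (there is a line of maxima). I would state it that way, interpreting ``otherwise'' as failure of a unique optimum, which does hold in every case.

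The second step solves the first-order conditions. They are
\[
  -\varphi x-k(m-v)=0,\qquad v-\varphi m-2\lambda x=0 .
\]
Substituting $m=v-\varphi x/k$ into the second condition gives $v(1-\varphi)=x(2\lambda-\varphi^2/k)$. Hence
\[
  x^\ast=\frac{k(1-\varphi)}{a-\varphi^2}\,v=\chi v,\qquad
  m^\ast=v-\frac{\varphi}{k}x^\ast=\frac{a-\varphi^2-\varphi+\varphi^2}{a-\varphi^2}\,v=\frac{a-\varphi}{a-\varphi^2}\,v=\psi v,
\]
as claimed. Linearity in $v$ means no distributional assumption on $v$ is needed beyond the one making expectations finite.

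The third step is the taxonomy, which is pure sign bookkeeping. Assume $a>\varphi^2$, so the denominator is positive, and take $\varphi>0$, $v>0$.
\begin{itemize}[leftmargin=1.3em]
  \item The sign of $\chi$ is the sign of $1-\varphi$, and the sign of $\psi$ is the sign of $a-\varphi$.
  \item For $\psi$ versus $1$, note $\psi-1=\frac{\varphi^2-\varphi}{a-\varphi^2}=\frac{\varphi(\varphi-1)}{a-\varphi^2}$, which has the sign of $\varphi-1$.
\end{itemize}
The regimes then follow.
\begin{itemize}[leftmargin=1.3em]
  \item $\varphi=1$ gives $\psi=1$ and $\chi=0$ (truth).
  \item $\varphi<1$ with $a>\varphi$ gives $0<\psi<1$ and $\chi>0$ (shading).
  \item $\varphi^2<a<\varphi<1$ gives $\psi<0<\chi$ (false alarm). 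This wedge is nonempty precisely because $\varphi^2<\varphi$ when $0<\varphi<1$.
  \item $\varphi>1$ gives $\psi>1$ and $\chi<0$ (exaggeration). Here $a>\varphi^2>\varphi$ automatically, so $\psi>0$.
  \item $a=\varphi<1$ gives $\psi=0$ (silence).
\end{itemize}
The four regimes together with the boundary exhaust the admissible region. The same signs apply with reversed actions for bad news $v<0$, since $m^\ast$ and $x^\ast$ are linear in $v$.
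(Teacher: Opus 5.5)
Your proof is correct and follows the paper's route: the Hessian is negative definite iff $a-\varphi^2>0$, you solve the first-order conditions (eliminating $m$ first where the paper eliminates $x$), and you read the regimes off the signs of $a-\varphi$, $1-\varphi$ and $\psi-1$. Your treatment of the boundary $a=\varphi^2$ is more careful than the paper's proof, which only handles $\det H<0$: you are right that the objective is unbounded there only when $v(\varphi^2-\varphi)\neq0$, while for $v=0$ or $\varphi=a=1$ there is a line of maximisers, so uniqueness still fails but ``unbounded'' is not literally true (your edge case $\varphi=0$ cannot occur here, since $a>0$).
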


\emph{Sketch.} The objective is a concave quadratic iff its Hessian has determinant $a-\varphi^2>0$. The first-order conditions give $\psi$ and $\chi$, and the regimes follow from the signs of $a-\varphi$ and $1-\varphi$.

\begin{restatable}[Say--Do covariance identity]{theorem}{thmidentity}\label{thm:identity}
Under \cref{ass:strategic}, let the message rule $m(v)$ be arbitrary and the trade optimal given the message. Then $r=\lambda x-\varphi\varepsilon$ pathwise, so for any distribution of $v$,
\[
  \Cov(r,\tilde m)=\lambda\,\Cov(x,\tilde m)-\varphi\,\sigma_\varepsilon^2,\qquad \E[r\mid x]=\lambda x .
\]
In the equilibrium of \cref{thm:speech}, $\Cov(x,\tilde m)=\psi\chi\sigma_v^2<0$ exactly in the false-alarm and exaggeration regimes.
\end{restatable}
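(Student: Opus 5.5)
The plan is to compute the trade that is optimal given an arbitrary message, substitute it into the price to get the pathwise identity, and read off the two moment statements from the independence of $\varepsilon$. I assume, as in \cref{ass:strategic}, that $\varepsilon$ has mean zero and variance $\sigma_\varepsilon^2$ and is independent of $v$. Hence it is also independent of any $m(v)$ and of any trade chosen as a function of $(v,m)$.

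\textbf{Step 1 (trade given the message).} Fix the message rule $m(v)$. The only part of the objective that depends on $x$ is $x(v-\varphi m-\lambda x)$. This is strictly concave in $x$ when $\lambda>0$, and maximising it pointwise in $v$ gives
\[
x=\frac{v-\varphi m}{2\lambda}.
\]

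\textbf{Step 2 (pathwise identity).} The forward return is $r=v-p=v-\varphi(m+\varepsilon)-\lambda x$. Substituting $v-\varphi m=2\lambda x$ from Step~1 gives $r=2\lambda x-\lambda x-\varphi\varepsilon=\lambda x-\varphi\varepsilon$ for every realisation.

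\textbf{Step 3 (moment statements).} Taking covariance with $\tilde m=m+\varepsilon$, and using that $\varepsilon$ is independent of $x$ (a function of $v$ alone), gives
\[
\Cov(r,\tilde m)=\lambda\Cov(x,\tilde m)-\varphi\Cov(\varepsilon,m+\varepsilon)=\lambda\Cov(x,\tilde m)-\varphi\sigma_\varepsilon^2 .
\]
For the conditional mean, $\E[r\mid x]=\lambda x-\varphi\E[\varepsilon\mid x]=\lambda x$, again by independence and $\E\varepsilon=0$. No distributional assumption on $v$ enters, only the existence of second moments.

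\textbf{Step 4 (equilibrium sign).} Plug in $m^\ast=\psi v$ and $x^\ast=\chi v$ from \cref{thm:speech}. Then $\Cov(x,\tilde m)=\Cov(\chi v,\psi v+\varepsilon)=\psi\chi\sigma_v^2$. Since $a-\varphi^2>0$ and $k>0$, we have $\operatorname{sgn}\psi=\operatorname{sgn}(a-\varphi)$ and $\operatorname{sgn}\chi=\operatorname{sgn}(1-\varphi)$. So $\psi\chi<0$ iff $(a-\varphi)(1-\varphi)<0$, which splits into two cases.
\begin{itemize}
\item $a<\varphi<1$. Together with $a>\varphi^2$ this is exactly the false-alarm wedge.
\item $a>\varphi$ and $\varphi>1$. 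Here $a>\varphi$ is automatic, because $a>\varphi^2>\varphi$ when $\varphi>1$, so this is exactly the exaggeration regime.
\end{itemize}
It remains to check that no other regime gives a negative sign. Truth ($\varphi=1$, so $\chi=0$) and the silent boundary ($a=\varphi$, so $\psi=0$) give zero covariance. Shading gives $\psi,\chi>0$. Any $\varphi\le 0$ makes $a>\varphi$ and $1-\varphi>0$, so the covariance is positive there too.

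\textbf{Main obstacle.} The algebra is routine. The step needing care is the meaning of ``the trade optimal given the message'' when $m$ is arbitrary. I will take it as pointwise optimisation in $x$ for each $v$ with $m$ held fixed, which requires $\lambda>0$. I also need $m(v)$ to have a finite second moment so that $x$, $r$ and $\tilde m$ have the variances the identity uses.
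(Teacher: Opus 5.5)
Your proposal is correct and follows the paper's proof essentially step for step: you solve $v-\varphi m=2\lambda x$ for the trade given the message, substitute it into $r=v-\varphi(m+\varepsilon)-\lambda x$, use independence of $\varepsilon$ from $v$ for both moment statements, and read the sign of $\psi\chi$ off the taxonomy of \cref{thm:speech}. Your explicit check that $\sgn(\psi\chi)=\sgn\big((a-\varphi)(1-\varphi)\big)$ and your caveat that an arbitrary $m(v)$ needs a finite second moment are sensible additions that the paper leaves implicit.
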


\emph{Sketch.} Given its message, the institution's optimal trade solves $v-\varphi m=2\lambda x$; substituting into $r=v-\varphi(m+\varepsilon)-\lambda x$ gives the pathwise identity. \Cref{thm:identity} is the paper's central tool. Both sides are estimable and no distributional assumption is needed. A negative rolling $\widehat{\Cov}(\text{Say},\text{Do})$ says that an institution's words should be faded; following them requires $\lambda\Cov(\text{Do},\text{Say})>\varphi\sigma_\varepsilon^2$. Deeds are always followed.

\paragraph{Virality and adaptive trust.} If each article moves the price by $\varphi_0$ and messages spread through a Hawkes cascade with branching ratio $n$, effective credulity is $\varphi=\varphi_0/(1-n)$. Virality then decides which manipulation appears: false alarms in deep markets ($a<1$) and exaggeration in shallow ones ($a>1$), each inside an explicit window of the echo share (\cref{prop:virality}). If instead the crowd re-estimates its credulity each round as the regression slope of value on (noiseless) messages, then $\varphi_{t+1}=F_a(\varphi_t)$ with $F_a(\varphi)=(a-\varphi^2)/(a-\varphi)$, on the admissible set where $\varphi^2<a$, $\varphi\neq a$ and $\psi(\varphi)\neq0$. Trust then behaves like a predictor that is retrained on the data its own use generates.

\begin{restatable}[Credibility cycles]{theorem}{thmcycles}\label{thm:cycles}
Under the learning rule above, truth ($\varphi=1$) is the unique admissible fixed point for $a>1$, and it is locally stable iff $\lambda k>1$. For $1<a<2$ there is a unique $2$-cycle $\varphi_\pm=\frac12\big(a\pm\sqrt{a(2-a)}\big)$ that alternates between shading and exaggeration, with multiplier $\frac{5a-9}{a-1}$. It is locally stable iff $\frac56<\lambda k<1$. It is born from truth in a flip bifurcation at $a=2$ and loses stability in a second flip at $a=\frac53$.
\end{restatable}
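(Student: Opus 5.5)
The plan is direct computation with the explicit map $F_a(\varphi)=(a-\varphi^2)/(a-\varphi)$ from the learning rule. It has three parts: fixed points, period-two orbits and their admissibility, then the stability multipliers and the bifurcation bookkeeping. I will use $a=2\lambda k$ throughout, so $a>1\iff\lambda k>\tfrac12$, $a>2\iff\lambda k>1$, and $a>\tfrac53\iff\lambda k>\tfrac56$.

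\emph{Fixed points and their stability.} $F_a(\varphi)=\varphi$ reduces to $a-\varphi^2=a\varphi-\varphi^2$, that is $a(1-\varphi)=0$, so $\varphi=1$ is the only fixed point. For $a>1$ it is admissible: $1<a$, $1\neq a$, and $\psi(1)=1\neq0$ by \cref{thm:speech}. Differentiating gives
\[
F_a'(\varphi)=\frac{\varphi^2-2a\varphi+a}{(a-\varphi)^2},\qquad F_a'(1)=-\frac{1}{a-1}.
\]
Hence the fixed point is locally stable iff $a>2$, i.e.\ $\lambda k>1$. The multiplier equals $-1$ exactly at $a=2$, which is the flip.

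\emph{The 2-cycle.} Write $\varphi_+,\varphi_-$ for the two points of the orbit. Rather than expanding $F_a\circ F_a$, I use the symmetric equations $\varphi_-(a-\varphi_+)=a-\varphi_+^2$ and $\varphi_+(a-\varphi_-)=a-\varphi_-^2$.
\begin{itemize}[leftmargin=1.3em]
\item Subtracting them gives $a(\varphi_--\varphi_+)=\varphi_-^2-\varphi_+^2$, so a genuine cycle ($\varphi_+\neq\varphi_-$) forces $s\coloneqq\varphi_++\varphi_-=a$.
\item Adding them and using $s=a$ gives $p\coloneqq\varphi_+\varphi_-=a(a-1)/2$.
\end{itemize}
The points are therefore the roots of $t^2-at+a(a-1)/2$, namely $\tfrac12\big(a\pm\sqrt{a(2-a)}\big)$. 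These are real and distinct iff $0<a<2$, which gives uniqueness of the cycle and its birth from $\varphi=1$ as $a\uparrow2$ (the discriminant vanishes there).

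\emph{Admissibility and regimes on $1<a<2$.}
\begin{itemize}[leftmargin=1.3em]
\item $p>0$, so $\varphi_\pm>0$. Then $a-\varphi_\pm=\varphi_\mp>0$, so $\varphi_\pm\neq a$ and $\psi(\varphi_\pm)\neq0$.
\item $\varphi_+^2<a$ is equivalent, via $\varphi_+^2=a\varphi_+-p$, to $\varphi_+<(a+1)/2$. That holds iff $\sqrt{a(2-a)}<1$, i.e.\ $(a-1)^2>0$.
\item $\varphi_+>1\iff\sqrt{a(2-a)}>2-a\iff a>1$, and then $\varphi_-=a-\varphi_+<1$.
\item $\varphi_-<a$ because $\varphi_+>0$.
\end{itemize}
So by \cref{thm:speech} the orbit alternates between exaggeration ($\varphi_+>1$) and shading ($\varphi_-<1$, $a>\varphi_-$).

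\emph{Multiplier of the cycle.} The multiplier is $F_a'(\varphi_+)F_a'(\varphi_-)$. Using $a-\varphi_\pm=\varphi_\mp$, the denominator is $(\varphi_-\varphi_+)^2=p^2$. Using $\varphi^2=a\varphi-p$, each numerator becomes $a-p-a\varphi_\pm$. Their product is symmetric, equal to $(a-p)^2-a^2(a-p)+a^2p$, and substituting $p$ should give $\tfrac{a^2}{4}(5a-9)(a-1)$. Dividing by $p^2=a^2(a-1)^2/4$ yields $(5a-9)/(a-1)$. For $a>1$ the condition $|(5a-9)/(a-1)|<1$ is $-(a-1)<5a-9<a-1$, i.e.\ $\tfrac53<a<2$, which is $\tfrac56<\lambda k<1$. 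At $a=2$ the multiplier is $+1$, as expected at a flip birth. At $a=\tfrac53$ it equals $-1$, which is the second flip.

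\emph{Main obstacle.} The only step needing care is the symmetric-function reduction of the cycle multiplier; the factorisation $5a^2-14a+9=(5a-9)(a-1)$ is where errors could creep in. Beyond that, I would add a genericity remark for the two flip bifurcations (nonzero parameter derivative of the multiplier at the crossing). I would either verify it directly from these closed forms or describe the bifurcations simply as multiplier crossings of $-1$.
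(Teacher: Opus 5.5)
Your proposal is correct and follows the paper's proof step for step. It uses the same fixed-point reduction and $F_a'(1)=-1/(a-1)$, the same admissibility checks, and the same Vieta-type evaluation of the cycle multiplier (your numerator $a-p-a\varphi$ is the paper's $a(3-a-2\varphi)/2$). The only variation is how you find the cycle: you get $\varphi_++\varphi_-=a$ and $\varphi_+\varphi_-=a(a-1)/2$ by subtracting and adding the two cycle equations, whereas the paper explicitly factorises $F_a(F_a(\varphi))-\varphi$. Your route avoids the rational-function algebra and gives $a-\varphi_\pm=\varphi_\mp$ directly, which makes admissibility and the denominator $p^2$ immediate.
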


\begin{figure}[t]
\centering
\begin{minipage}[t]{0.49\textwidth}
\centering
\begin{tikzpicture}
\begin{axis}[width=6.6cm, height=4.7cm, xmin=0, xmax=1.6, ymin=0, ymax=2.6,
  xlabel={\scriptsize crowd credulity $\varphi$},
  ylabel={\scriptsize deterrence $a=2\lambda k$},
  tick label style={font=\scriptsize}, axis on top, clip=false,
  xlabel style={yshift=3pt}, ylabel style={yshift=-4pt}]
\addplot[name path=sq, draw=black, thick, domain=0:1.6, samples=60] {x^2};
\addplot[name path=lin, draw=black, thick, dashed, domain=0:1, samples=2] {x};
\addplot[name path=top, draw=none, domain=0:1.6, samples=2] {2.6};
\addplot[name path=bot, draw=none, domain=0:1.6, samples=2] {0};
\addplot[gray!25] fill between[of=sq and bot];
\addplot[echo!30] fill between[of=lin and sq, soft clip={domain=0:1}];
\addplot[good!18] fill between[of=top and lin, soft clip={domain=0:1}];
\addplot[say!20] fill between[of=top and sq, soft clip={domain=1:1.6}];
\draw[do, very thick] (axis cs:1,1) -- (axis cs:1,2.6);
\node[font=\scriptsize\bfseries, text=good!50!black] at (axis cs:0.45,1.8) {shading};
\node[font=\scriptsize\bfseries, text=echo!80!black, align=center] at (axis cs:0.75,0.64) {false\\[-3pt] alarm};
\node[font=\scriptsize\bfseries, text=say!80!black, align=center] at (axis cs:1.33,2.3) {exaggerate};
\node[font=\scriptsize, text=gray!60!black, align=center] at (axis cs:1.32,0.6) {no optimum};
\node[font=\scriptsize, text=do, anchor=west] at (axis cs:1.01,1.15) {truth};
\draw[-{Stealth[length=1.6mm]}, thick, echo] (axis cs:0.15,0.13) -- (axis cs:0.55,0.13) node[right, font=\scriptsize] {virality};
\end{axis}
\node[font=\bfseries\small] at (-0.75,3.55) {a};
\end{tikzpicture}
\end{minipage}\hfill
\begin{minipage}[t]{0.49\textwidth}
\centering
\begin{tikzpicture}
\begin{axis}[width=6.6cm, height=4.7cm, xmin=1.55, xmax=2.4, ymin=0, ymax=1.35,
  xlabel={\scriptsize deterrence $a=2\lambda k$},
  ylabel={\scriptsize long-run credulity $\varphi_t$},
  tick label style={font=\scriptsize}, clip=false,
  xlabel style={yshift=3pt}, ylabel style={yshift=-4pt}]
\fill[echo!8] (axis cs:1.55,0) rectangle (axis cs:1.6667,1.35);
\fill[say!8] (axis cs:1.6667,0) rectangle (axis cs:2,1.35);
\fill[good!8] (axis cs:2,0) rectangle (axis cs:2.4,1.35);
\addplot[only marks, mark=*, mark size=0.3pt, black] coordinates {(1.565,0.014) (1.565,0.039) (1.565,0.053) (1.565,0.054) (1.565,0.209) (1.565,0.216) (1.565,0.238) (1.565,0.282) (1.565,0.435) (1.565,0.552) (1.565,0.637) (1.565,0.678) (1.565,0.692) (1.565,0.772) (1.565,0.775) (1.565,0.798) (1.565,0.815) (1.565,0.844) (1.565,0.934) (1.565,0.935) (1.565,0.941) (1.565,0.946) (1.565,0.954) (1.565,0.983) (1.565,1.009) (1.565,1.024) (1.565,1.028) (1.565,1.03) (1.565,1.033) (1.565,1.034) (1.565,1.072) (1.565,1.082) (1.565,1.089) (1.565,1.096) (1.565,1.097) (1.565,1.122) (1.565,1.126) (1.565,1.137) (1.565,1.158) (1.565,1.183) (1.565,1.21) (1.565,1.217) (1.565,1.221) (1.565,1.222) (1.565,1.244) (1.565,1.246) (1.565,1.249) (1.57,0.049) (1.57,0.05) (1.57,0.06) (1.57,0.096) (1.57,0.156) (1.57,0.176) (1.57,0.178) (1.57,0.184) (1.57,0.268) (1.57,0.294) (1.57,0.535) (1.57,0.571) (1.57,0.587) (1.57,0.662) (1.57,0.74) (1.57,0.75) (1.57,0.751) (1.57,0.753) (1.57,0.786) (1.57,0.796) (1.57,0.878) (1.57,0.903) (1.57,0.927) (1.57,0.94) (1.57,1.032) (1.57,1.037) (1.57,1.048) (1.57,1.059) (1.57,1.09) (1.57,1.093) (1.57,1.104) (1.57,1.105) (1.57,1.108) (1.57,1.131) (1.57,1.151) (1.57,1.154) (1.57,1.163) (1.57,1.21) (1.57,1.214) (1.57,1.228) (1.57,1.229) (1.57,1.232) (1.57,1.24) (1.57,1.245) (1.57,1.246) (1.57,1.247) (1.575,0.063) (1.575,0.069) (1.575,0.124) (1.575,0.133) (1.575,0.18) (1.575,0.185) (1.575,0.199) (1.575,0.204) (1.575,0.255) (1.575,0.33) (1.575,0.403) (1.575,0.444) (1.575,0.473) (1.575,0.618) (1.575,0.709) (1.575,0.715) (1.575,0.719) (1.575,0.741) (1.575,0.744) (1.575,0.75) (1.575,0.826) (1.575,0.839) (1.575,0.916) (1.575,0.924) (1.575,1.039) (1.575,1.043) (1.575,1.075) (1.575,1.08) (1.575,1.106) (1.575,1.108) (1.575,1.109) (1.575,1.116) (1.575,1.117) (1.575,1.119) (1.575,1.144) (1.575,1.178) (1.575,1.184) (1.575,1.192) (1.575,1.205) (1.575,1.218) (1.575,1.226) (1.575,1.227) (1.575,1.229) (1.575,1.23) (1.575,1.237) (1.575,1.238) (1.575,1.246) (1.575,1.247) (1.58,0.087) (1.58,0.09) (1.58,0.628) (1.58,0.644) (1.58,0.889) (1.58,0.894) (1.58,1.053) (1.58,1.055) (1.58,1.138) (1.58,1.142) (1.58,1.245) (1.585,0.109) (1.585,0.139) (1.585,0.14) (1.585,0.141) (1.585,0.223) (1.585,0.243) (1.585,0.256) (1.585,0.335) (1.585,0.342) (1.585,0.346) (1.585,0.409) (1.585,0.412) (1.585,0.416) (1.585,0.462) (1.585,0.471) (1.585,0.484) (1.585,0.553) (1.585,0.629) (1.585,0.652) (1.585,0.687) (1.585,0.819) (1.585,0.82) (1.585,0.821) (1.585,0.865) (1.585,1.066) (1.585,1.083) (1.585,1.084) (1.585,1.127) (1.585,1.137) (1.585,1.143) (1.585,1.162) (1.585,1.178) (1.585,1.181) (1.585,1.183) (1.585,1.192) (1.585,1.193) (1.585,1.194) (1.585,1.206) (1.585,1.208) (1.585,1.221) (1.585,1.224) (1.585,1.227) (1.585,1.239) (1.585,1.24) (1.585,1.244) (1.59,0.13) (1.59,0.131) (1.59,0.132) (1.59,0.133) (1.59,0.252) (1.59,0.254) (1.59,0.264) (1.59,0.268) (1.59,0.272) (1.59,0.462) (1.59,0.465) (1.59,0.467) (1.59,0.474) (1.59,0.476) (1.59,0.614) (1.59,0.621) (1.59,0.638) (1.59,0.642) (1.59,0.832) (1.59,0.833) (1.59,0.834) (1.59,0.836) (1.59,0.837) (1.59,1.078) (1.59,1.079) (1.59,1.141) (1.59,1.142) (1.59,1.147) (1.59,1.148) (1.59,1.181) (1.59,1.182) (1.59,1.183) (1.59,1.184) (1.59,1.22) (1.59,1.221) (1.59,1.222) (1.59,1.223) (1.59,1.224) (1.59,1.242) (1.59,1.243) (1.595,0.153) (1.595,0.167) (1.595,0.188) (1.595,0.225) (1.595,0.278) (1.595,0.297) (1.595,0.332) (1.595,0.349) (1.595,0.361) (1.595,0.374) (1.595,0.384) (1.595,0.388) (1.595,0.402) (1.595,0.408) (1.595,0.416) (1.595,0.434) (1.595,0.456) (1.595,0.508) (1.595,0.571) (1.595,0.604) (1.595,0.693) (1.595,0.752) (1.595,0.785) (1.595,0.806) (1.595,1.09) (1.595,1.097) (1.595,1.109) (1.595,1.127) (1.595,1.152) (1.595,1.161) (1.595,1.176) (1.595,1.187) (1.595,1.192) (1.595,1.195) (1.595,1.197) (1.595,1.198) (1.595,1.201) (1.595,1.202) (1.595,1.204) (1.595,1.206) (1.595,1.208) (1.595,1.212) (1.595,1.218) (1.595,1.221) (1.595,1.23) (1.595,1.236) (1.595,1.239) (1.595,1.241) (1.6,0.171) (1.6,0.173) (1.6,0.184) (1.6,0.203) (1.6,0.246) (1.6,0.26) (1.6,0.269) (1.6,0.289) (1.6,0.315) (1.6,0.339) (1.6,0.35) (1.6,0.358) (1.6,0.472) (1.6,0.486) (1.6,0.505) (1.6,0.546) (1.6,0.591) (1.6,0.625) (1.6,0.64) (1.6,0.664) (1.6,0.733) (1.6,0.763) (1.6,0.779) (1.6,0.782) (1.6,1.099) (1.6,1.1) (1.6,1.106) (1.6,1.116) (1.6,1.137) (1.6,1.144) (1.6,1.148) (1.6,1.157) (1.6,1.168) (1.6,1.178) (1.6,1.182) (1.6,1.185) (1.6,1.208) (1.6,1.21) (1.6,1.212) (1.6,1.216) (1.6,1.221) (1.6,1.224) (1.6,1.226) (1.6,1.228) (1.6,1.235) (1.6,1.238) (1.6,1.24) (1.605,0.191) (1.605,0.195) (1.605,0.2) (1.605,0.205) (1.605,0.21) (1.605,0.217) (1.605,0.253) (1.605,0.267) (1.605,0.281) (1.605,0.293) (1.605,0.306) (1.605,0.316) (1.605,0.552) (1.605,0.57) (1.605,0.59) (1.605,0.612) (1.605,0.635) (1.605,0.658) (1.605,0.716) (1.605,0.726) (1.605,0.735) (1.605,0.742) (1.605,0.749) (1.605,0.755) (1.605,1.109) (1.605,1.111) (1.605,1.114) (1.605,1.116) (1.605,1.119) (1.605,1.122) (1.605,1.14) (1.605,1.146) (1.605,1.152) (1.605,1.158) (1.605,1.163) (1.605,1.168) (1.605,1.218) (1.605,1.22) (1.605,1.222) (1.605,1.224) (1.605,1.226) (1.605,1.229) (1.605,1.235) (1.605,1.236) (1.605,1.237) (1.605,1.238) (1.605,1.239) (1.61,0.21) (1.61,0.291) (1.61,0.602) (1.61,0.731) (1.61,1.118) (1.61,1.156) (1.61,1.224) (1.61,1.238) (1.615,0.226) (1.615,0.279) (1.615,0.627) (1.615,0.71) (1.615,1.126) (1.615,1.151) (1.615,1.228) (1.615,1.237) (1.62,0.257) (1.62,0.667) (1.62,1.14) (1.62,1.233) (1.625,0.268) (1.625,0.656) (1.625,1.144) (1.625,1.233) (1.63,0.279) (1.63,0.644) (1.63,1.149) (1.63,1.233) (1.635,0.291) (1.635,0.632) (1.635,1.154) (1.635,1.232) (1.64,0.305) (1.64,0.618) (1.64,1.159) (1.64,1.231) (1.645,0.32) (1.645,0.603) (1.645,1.164) (1.645,1.23) (1.65,0.337) (1.65,0.586) (1.65,1.17) (1.65,1.228) (1.655,0.356) (1.655,0.565) (1.655,1.177) (1.655,1.226) (1.66,0.382) (1.66,0.54) (1.66,1.185) (1.66,1.222) (1.665,0.421) (1.665,0.5) (1.665,1.196) (1.665,1.215) (1.67,0.464) (1.67,1.206) (1.675,0.469) (1.675,1.206) (1.68,0.473) (1.68,1.207) (1.685,0.478) (1.685,1.207) (1.69,0.483) (1.69,1.207) (1.695,0.488) (1.695,1.207) (1.7,0.493) (1.7,1.207) (1.705,0.498) (1.705,1.207) (1.71,0.503) (1.71,1.207) (1.715,0.508) (1.715,1.207) (1.72,0.513) (1.72,1.207) (1.725,0.518) (1.725,1.207) (1.73,0.523) (1.73,1.207) (1.735,0.528) (1.735,1.207) (1.74,0.534) (1.74,1.206) (1.745,0.539) (1.745,1.206) (1.75,0.544) (1.75,1.206) (1.755,0.55) (1.755,1.205) (1.76,0.555) (1.76,1.205) (1.765,0.56) (1.765,1.205) (1.77,0.566) (1.77,1.204) (1.775,0.572) (1.775,1.203) (1.78,0.577) (1.78,1.203) (1.785,0.583) (1.785,1.202) (1.79,0.588) (1.79,1.202) (1.795,0.594) (1.795,1.201) (1.8,0.6) (1.8,1.2) (1.805,0.606) (1.805,1.199) (1.81,0.612) (1.81,1.198) (1.815,0.618) (1.815,1.197) (1.82,0.624) (1.82,1.196) (1.825,0.63) (1.825,1.195) (1.83,0.636) (1.83,1.194) (1.835,0.642) (1.835,1.193) (1.84,0.649) (1.84,1.191) (1.845,0.655) (1.845,1.19) (1.85,0.662) (1.85,1.188) (1.855,0.668) (1.855,1.187) (1.86,0.675) (1.86,1.185) (1.865,0.682) (1.865,1.183) (1.87,0.688) (1.87,1.182) (1.875,0.695) (1.875,1.18) (1.88,0.703) (1.88,1.177) (1.885,0.71) (1.885,1.175) (1.89,0.717) (1.89,1.173) (1.895,0.724) (1.895,1.171) (1.9,0.732) (1.9,1.168) (1.905,0.74) (1.905,1.165) (1.91,0.748) (1.91,1.162) (1.915,0.756) (1.915,1.159) (1.92,0.764) (1.92,1.156) (1.925,0.773) (1.925,1.152) (1.93,0.781) (1.93,1.149) (1.935,0.79) (1.935,1.145) (1.94,0.799) (1.94,1.141) (1.945,0.809) (1.945,1.136) (1.95,0.819) (1.95,1.131) (1.955,0.829) (1.955,1.126) (1.96,0.84) (1.96,1.12) (1.965,0.851) (1.965,1.114) (1.97,0.863) (1.97,1.107) (1.975,0.876) (1.975,1.099) (1.98,0.891) (1.98,1.089) (1.985,0.906) (1.985,1.079) (1.99,0.924) (1.99,1.066) (1.995,0.948) (1.995,1.047) (2.0,0.991) (2.0,1.009) (2.005,1.0) (2.01,1.0) (2.015,1.0) (2.02,1.0) (2.025,1.0) (2.03,1.0) (2.035,1.0) (2.04,1.0) (2.045,1.0) (2.05,1.0) (2.055,1.0) (2.06,1.0) (2.065,1.0) (2.07,1.0) (2.075,1.0) (2.08,1.0) (2.085,1.0) (2.09,1.0) (2.095,1.0) (2.1,1.0) (2.105,1.0) (2.11,1.0) (2.115,1.0) (2.12,1.0) (2.125,1.0) (2.13,1.0) (2.135,1.0) (2.14,1.0) (2.145,1.0) (2.15,1.0) (2.155,1.0) (2.16,1.0) (2.165,1.0) (2.17,1.0) (2.175,1.0) (2.18,1.0) (2.185,1.0) (2.19,1.0) (2.195,1.0) (2.2,1.0) (2.205,1.0) (2.21,1.0) (2.215,1.0) (2.22,1.0) (2.225,1.0) (2.23,1.0) (2.235,1.0) (2.24,1.0) (2.245,1.0) (2.25,1.0) (2.255,1.0) (2.26,1.0) (2.265,1.0) (2.27,1.0) (2.275,1.0) (2.28,1.0) (2.285,1.0) (2.29,1.0) (2.295,1.0) (2.3,1.0) (2.305,1.0) (2.31,1.0) (2.315,1.0) (2.32,1.0) (2.325,1.0) (2.33,1.0) (2.335,1.0) (2.34,1.0) (2.345,1.0) (2.35,1.0) (2.355,1.0) (2.36,1.0) (2.365,1.0) (2.37,1.0) (2.375,1.0) (2.38,1.0) (2.385,1.0) (2.39,1.0) (2.395,1.0) (2.4,1.0)};
\addplot[do, thick, dashed, domain=1.6667:2, samples=60] {(x+sqrt(x*(2-x)))/2};
\addplot[do, thick, dashed, domain=1.6667:2, samples=60] {(x-sqrt(x*(2-x)))/2};
\addplot[gray, thin, dotted] coordinates {(1.55,1) (2.4,1)};
\node[font=\scriptsize\bfseries, text=good!50!black, anchor=south] at (axis cs:2.2,1.36) {truth};
\node[font=\scriptsize\bfseries, text=say!80!black, anchor=south] at (axis cs:1.83,1.36) {2-cycle};
\node[font=\scriptsize\bfseries, text=echo!80!black, anchor=south] at (axis cs:1.608,1.36) {chaos};
\end{axis}
\node[font=\bfseries\small] at (-0.75,3.55) {b};
\end{tikzpicture}
\end{minipage}
\caption{\textbf{a}: regime map of \cref{thm:speech}. False alarms occupy the thin wedge $\varphi^2<a<\varphi$, and virality moves a market to the right. \textbf{b}: bifurcation diagram of adaptive trust (\cref{thm:cycles}). Dots show the numerically computed long-run attractor, and dashed curves show the closed-form $2$-cycle. Trust converges to truth only when $a>2$; below $a=\frac53$ period-doubling continues, and simulations suggest chaos.}
\label{fig:theory}
\end{figure}

\section{Measuring the voices}\label{sec:measure}

\subsection{Echo: semantic declustering}\label{sec:echo}\label{sec:semantic}

Articles about an asset arrive at times $\tau_1<\dots<\tau_N$ with embeddings $\z_j\in S^{D-1}$. We model them as a Hawkes process with baseline $\mu$ and kernel $g$. In its cluster representation, each article is either an immigrant (\emph{news}) or an offspring of an earlier article (\emph{echo}) \citep{hawkesoakes1974}. We let an echo's embedding depend on its parent's, $\z\sim f(\cdot\mid\z_l)$, for example von Mises--Fisher with concentration $\varkappa$, while news has density $f_0$ (\cref{ass:marked}). Let $\pi(j)$ denote the parent of article $j$, with $\pi(j)=0$ for news.

\begin{restatable}[Semantic declustering]{theorem}{thmsemantic}\label{thm:semantic}
Under \cref{ass:marked}, given all times and embeddings, the parent labels are independent, with
\[
  \Prob(\pi(j)=0\mid\cdot)=\frac{\mu f_0(\z_j)}{\Lambda_j},\qquad
  \Prob(\pi(j)=l\mid\cdot)=\frac{g(\tau_j-\tau_l)\,f(\z_j\mid\z_l)}{\Lambda_j},
\]
where $\Lambda_j=\mu f_0(\z_j)+\sum_{l<j}g(\tau_j-\tau_l)f(\z_j\mid\z_l)$. Consequently, writing $p^{\text{sem}}_j=\mu f_0(\z_j)/\Lambda_j$: (i)~if each sentiment $s_j$ is a function of $\z_j$ (for example $\langle\w_s,\z_j\rangle$), then $s^{\text{new}}=\sum_jp^{\text{sem}}_js_j$ is the MMSE estimate of total news sentiment; (ii)~$\E\,H(\pi(j)\mid\mathcal{T},\z)\le\E\,H(\pi(j)\mid\mathcal{T})$, so meaning never increases the expected posterior entropy of parentage.
\end{restatable}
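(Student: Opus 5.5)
The plan is to write down the complete-data likelihood of the marked Hawkes process in its cluster (branching) representation and show that it factorises over articles once the times and embeddings are fixed. Under \cref{ass:marked}, a configuration of parent labels $\pi=(\pi(1),\dots,\pi(N))$ together with the observed $(\mathcal{T},\z)$ has joint density
\[
  p(\mathcal{T},\z,\pi)\;=\;e^{-\int_0^T\lambda(t)\,dt}\prod_{j=1}^N c_j(\pi(j)),
\]
where $c_j(0)=\mu f_0(\z_j)$ and $c_j(l)=g(\tau_j-\tau_l)f(\z_j\mid\z_l)$ for $l<j$. The two ingredients are the immigrant and offspring streams and the marks drawn independently given their parent.

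The key observation is that the compensator $\int_0^T\lambda(t)\,dt=\mu T+\sum_l G(T-\tau_l)$ depends on the times alone and not on $\pi$. This holds because each article contributes its full offspring kernel whether or not it is itself news or echo, and, for unit-mass mark densities, integrating over the unobserved marks of non-occurring events introduces no $\pi$-dependence. I expect this to be the one step needing care, namely checking that the void probability is label-free. Once it is established, the posterior over $\pi$ is proportional to $\prod_j c_j(\pi(j))$, a product of functions of single coordinates. Hence the labels are independent, and normalising each factor by $\Lambda_j=\sum_{l\in\{0,\dots,j-1\}}c_j(l)$ gives the stated formulas. This is the same argument as stochastic declustering \citep{zhuang2002}, with marks in place of locations.

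For (i), total news sentiment is $S=\sum_j\ind\{\pi(j)=0\}s_j$. Each $s_j$ is a function of $\z_j$ and hence measurable with respect to the conditioning $\sigma$-field, so the MMSE estimate is
\[
  \E[S\mid\mathcal{T},\z]=\sum_j s_j\,\Prob(\pi(j)=0\mid\mathcal{T},\z)=\sum_j p^{\text{sem}}_j s_j ,
\]
by linearity of conditional expectation.

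For (ii), I will invoke the standard fact that conditioning reduces entropy in expectation, $H(\pi(j)\mid\mathcal{T},\z)\le H(\pi(j)\mid\mathcal{T})$ as conditional entropies. This follows from the nonnegativity of the conditional mutual information $I(\pi(j);\z\mid\mathcal{T})$. Both sides are then read as expectations of the posterior entropy, averaged over the data, which gives the stated inequality. A pathwise version would be false, so I will state explicitly that the claim holds only in expectation.
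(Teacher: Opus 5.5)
Your proposal is correct and takes essentially the same route as the paper. The labelled-configuration density $e^{-\int_0^T\lambda}\prod_j\rho_{\pi(j)}(\tau_j)f_{\pi(j)}(\z_j)$ factorises over articles; (i) follows by linearity because $s_j$ is $\sigma(\z)$-measurable; and (ii) is the fact that conditioning does not increase expected entropy. The paper takes one extra step that your inequality does not need: it integrates the marks out sequentially, last article first, which works because each mark density has unit mass. This identifies the $\mathcal{T}$-only posterior with the timing-only posterior of \cref{prop:parentage}, so the right-hand side of (ii) can be read as the entropy of timing-only declustering.
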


\emph{Sketch.} In the cluster representation, the density of a labelled configuration is a product over articles of the parent stream's intensity times the mark density, so the posterior over labellings factorises. The posterior uses only each article's past, so the split is causal. It costs $O(N)$ per article, and $\varkappa$ is fitted by maximum marked likelihood. (In our simulator echoes copy their parent's sentiment, so condition (i) holds only approximately.)

\subsection{Consequence: return-aligned embeddings}\label{sec:embeddings}

For documents with embeddings $\z_j$ and post-event outcome vectors $\bm\rho_j\in\R^M$ (returns at several horizons, volume, implied volatility), let $S_{jk}=\langle\z_j,\z_k\rangle$ and $D_{jk}=\|\bm\rho_j-\bm\rho_k\|^2$. Define soft targets and model probabilities over neighbours $k\neq j$ and a soft-InfoNCE loss:
\[
  q_{jk}\propto e^{-D_{jk}/2b^2},\qquad p_{jk}(Z)\propto e^{S_{jk}/\tau},\qquad \mathcal{L}(Z)=-\sum_j\sum_{k\neq j}q_{jk}\log p_{jk}(Z).
\]
This is the loss of stochastic neighbour embedding \citep{hinton2002}, with market outcomes playing the role of the input space.

\begin{restatable}[When the bound is attained]{theorem}{thmracl}\label{thm:racl}
$\mathcal{L}(Z)\ge\sum_jH(q_{j\cdot})$, with equality iff $S_{jk}=c-\frac{\tau}{2b^2}D_{jk}$ for a constant $c$ and all $j\neq k$. At equality $\|\z_j-\z_k\|^2=2(1-c)+\frac{\tau}{b^2}\|\bm\rho_j-\bm\rho_k\|^2$, an increasing affine function of squared outcome distance, so nearest neighbours in embedding space are nearest neighbours in outcome space. (Equality requires this geometry to be realisable on $S^{D-1}$.)
\end{restatable}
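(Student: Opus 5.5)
The plan is to decompose the loss row by row with Gibbs' inequality. For each anchor $j$, the targets $q_{j\cdot}$ and the model probabilities $p_{j\cdot}(Z)$ are both probability distributions on the same index set $\{k\neq j\}$. The row contribution then splits as
\[
-\sum_{k\neq j}q_{jk}\log p_{jk}(Z)=H(q_{j\cdot})+\KL\big(q_{j\cdot}\,\|\,p_{j\cdot}(Z)\big).
\]
Summing over $j$ gives $\mathcal{L}(Z)=\sum_jH(q_{j\cdot})+\sum_j\KL(q_{j\cdot}\|p_{j\cdot})$, and nonnegativity of KL yields the bound.

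Equality holds iff every KL term vanishes, that is, iff $p_{j\cdot}=q_{j\cdot}$ for every $j$. Both rows are softmaxes, so $p_{jk}=q_{jk}$ for all $k\neq j$ is equivalent to
\[
S_{jk}/\tau=-D_{jk}/2b^2+c_j \quad\text{for all } k\neq j,
\]
for some row constant $c_j$. Here $c_j$ absorbs the difference of the log-normalisers, and conversely any such shift leaves the softmax unchanged. So equality holds iff $S_{jk}=\tau c_j-\frac{\tau}{2b^2}D_{jk}$.

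The step that needs care is upgrading the row constants to a single constant $c$. For this I use symmetry: $S_{jk}=S_{kj}$ and $D_{jk}=D_{kj}$, so $\tau c_j=\tau c_k$ for every pair $j\neq k$. Hence all $c_j$ are equal, and setting $c\coloneqq\tau c_j$ gives $S_{jk}=c-\frac{\tau}{2b^2}D_{jk}$. This needs at least two documents. With exactly two documents the condition is vacuous anyway, since each softmax row has one entry. The converse is immediate: a common $c$ gives $p_{j\cdot}=q_{j\cdot}$ row by row.

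The geometric consequence follows from the unit norms on $S^{D-1}$:
\[
\|\z_j-\z_k\|^2=2-2S_{jk}=2(1-c)+\tfrac{\tau}{b^2}D_{jk}.
\]
This is an increasing affine function of $\|\bm\rho_j-\bm\rho_k\|^2$ because $\tau/b^2>0$. Increasing maps preserve orderings, so for each $j$ the ranking of the $k$ by embedding distance coincides with the ranking by outcome distance, ties included, and in particular nearest neighbours agree.

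Finally, I will note that the theorem characterises when the bound is attained. It does not claim the bound is always attainable. Equality requires a Gram matrix with unit diagonal and off-diagonal entries $c-\frac{\tau}{2b^2}D_{jk}$ that is positive semidefinite with rank at most $D$, which is the realisability caveat in the statement. Nothing beyond stating it needs proof.
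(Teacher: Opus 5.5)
Your proof is correct and follows the paper's argument essentially step for step. You use the per-anchor decomposition $H(q_{j\cdot})+\KL(q_{j\cdot}\|p_{j\cdot})$ with Gibbs' inequality, the fact that two softmax rows coincide iff their logits differ by a row constant $c_j$, symmetry of $S$ and $D$ to force $c_j=c_k$ for every pair, and the sphere identity $\|\z_j-\z_k\|^2=2-2S_{jk}$ for the neighbour claim, with the same realisability caveat (a unit-diagonal positive semidefinite Gram matrix of rank at most $D$) that the paper states.
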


\emph{Sketch.} Per anchor, the loss equals $H(q_{j\cdot})+\KL(q_{j\cdot}\|p_{j\cdot})$. Equality forces the logits to differ from $-D_{jk}/2b^2$ by an anchor constant, and the symmetry of $S$ and $D$ makes the constants equal. In practice training never reaches the bound, but the leftover loss can still certify some rankings.

\begin{restatable}[Neighbour certificate]{theorem}{thmrobustnn}\label{thm:robustnn}
For any embedding $Z$, let $\epsilon_j\coloneqq\KL(q_{j\cdot}\|p_{j\cdot}(Z))$ be the excess loss at anchor $j$. For $k,l\neq j$, the embedding ranks $k$ closer to $j$ than $l$ if (i)~$q_{jk}-q_{jl}>\sqrt{2\epsilon_j}$ (Pinsker), or if (ii)~$q_{jk}>q_{jl}$ and
\[
  \Delta_{jkl}\coloneqq q_{jk}\log\frac{2q_{jk}}{q_{jk}+q_{jl}}+q_{jl}\log\frac{2q_{jl}}{q_{jk}+q_{jl}}>\epsilon_j .
\]
Condition (i) implies (ii), and (ii) is the weakest condition valid for every distribution within KL distance $\epsilon_j$ of $q_{j\cdot}$: if $\epsilon_j\ge\Delta_{jkl}$, one such distribution ties $k$ and $l$.
\end{restatable}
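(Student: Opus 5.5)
The plan has three parts: reduce the ranking claim to a statement about the model distribution $p_{j\cdot}$, solve a two-point KL minimisation, and read off sufficiency and tightness from it.

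\emph{Reduction.} Embedding rank and model probability are the same thing here. On $S^{D-1}$ we have $\|\z_j-\z_k\|^2=2-2S_{jk}$, and $p_{jk}\propto e^{S_{jk}/\tau}$ with a common normaliser over $k$. Hence $\|\z_j-\z_k\|<\|\z_j-\z_l\|$ iff $p_{jk}>p_{jl}$. The claim therefore becomes: every distribution $p$ on the neighbours of $j$ with $\KL(q_{j\cdot}\|p)\le\epsilon_j$ satisfies $p_k>p_l$, under (i) or (ii).

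\emph{Key step: a two-point minimisation.} Fix $q=q_{j\cdot}$ with $q_k>q_l$. I would compute
\[
  \inf\{\KL(q\|p): p_k\le p_l\}.
\]
First, apply the data-processing inequality to the coarsening that merges all coordinates other than $k,l$ into one cell. This shows the infimum is attained by keeping $p_m\propto q_m$ off $\{k,l\}$. The problem then reduces to the three-cell partition $\{k\},\{l\},\text{rest}$, where the mass on the rest can be taken equal to $q$'s.

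Second, minimise over $(p_k,p_l)$ with $p_k+p_l=q_k+q_l$ and $p_k\le p_l$. By the log-sum inequality, the unconstrained optimum on $\{k,l\}$ is $p=q$, which violates the constraint. Convexity of KL then puts the constrained optimum on the boundary $p_k=p_l=(q_k+q_l)/2$. Its value is exactly $\Delta_{jkl}$. I would also check that letting the total mass on $\{k,l\}$ differ from $q_k+q_l$ cannot lower the value; the grouping (chain-rule) decomposition of KL handles this.

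\emph{Conclusions.} Three consequences follow from the minimisation.
\begin{itemize}
  \item If $\epsilon_j<\Delta_{jkl}$, no admissible $p$ has $p_k\le p_l$, which proves (ii).
  \item Conversely, if $\epsilon_j\ge\Delta_{jkl}$, the tie distribution $p^{\rm tie}$ (equal mass on $k,l$, otherwise $q$) lies within KL distance $\epsilon_j$ of $q$. This gives tightness. The statement claims only tightness among loss-only certificates, i.e.\ the existence of such a distribution. It does not claim that $p^{\rm tie}$ is realised by an embedding.
  \item For (i) implies (ii), apply Pinsker to $p^{\rm tie}$. We get $\Delta_{jkl}=\KL(q\|p^{\rm tie})\ge\frac12\|q-p^{\rm tie}\|_1^2$, and $\|q-p^{\rm tie}\|_1=q_k-q_l$, so $\Delta_{jkl}\ge\frac12(q_k-q_l)^2$. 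If $q_k-q_l>\sqrt{2\epsilon_j}$, then $\Delta_{jkl}>\epsilon_j$. Condition (i) also follows directly: Pinsker gives $|p_k-q_k|+|p_l-q_l|\le\|p-q\|_1\le\sqrt{2\epsilon_j}<q_k-q_l$, hence $p_k>p_l$.
\end{itemize}

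The main obstacle is making the reduction to the boundary $p_k=p_l$ with the other masses fixed fully rigorous. The cleanest route is the chain rule
\[
  \KL(q\|p)=\KL\bigl(\bar q\|\bar p\bigr)+(q_k+q_l)\,\KL\bigl(q_{\cdot\mid kl}\|p_{\cdot\mid kl}\bigr)+\dots,
\]
where $\bar q,\bar p$ are the laws of the indicator of $\{k,l\}$. The constraint depends only on the conditional law on $\{k,l\}$, and the binary conditional KL is minimised at $p_{\cdot\mid kl}=(\frac12,\frac12)$. This yields $(q_k+q_l)\KL\bigl((\frac{q_k}{q_k+q_l},\frac{q_l}{q_k+q_l})\|(\frac12,\frac12)\bigr)=\Delta_{jkl}$.
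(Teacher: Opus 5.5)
Your proposal is correct and follows essentially the paper's proof: the same reduction to $p_{jk}>p_{jl}$, the direct Pinsker argument for (i), the three-cell coarsening with a convexity/boundary argument for (ii), and the tie distribution for tightness, with your chain-rule decomposition simply replacing the paper's derivative computation in $t$ by a calculus-free argument. The one difference is in (i)$\Rightarrow$(ii): the paper applies two-point Pinsker to the factor $(q_{jk}+q_{jl})\,\KL(\mathrm{Bern}(\pi)\,\|\,\mathrm{Bern}(\tfrac12))$ that your chain rule already isolates, which gives the sharper bound $\Delta_{jkl}\ge(q_{jk}-q_{jl})^2/(2(q_{jk}+q_{jl}))$ rather than your $\tfrac12(q_{jk}-q_{jl})^2$; only the sharper bound supports the main-text remark that form (ii) tolerates an excess loss $1/(q_{jk}+q_{jl})$ times larger than form (i).
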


Because targets are of order $1/N$, form (ii) tolerates an excess loss at least $1/(q_{jk}+q_{jl})\approx N/2$ times larger than form (i). The proof is a data-processing argument on a three-cell coarse-graining (\cref{app:measure}).

\subsection{Who moved first: L\'evy area}\label{sec:leadlag}

For paths $X$ (positioning) and $Y$ (headline gloom), the L\'evy area $\mathcal{A}_T=\frac12\int_0^T(X_u-X_0)\,dY_u-(Y_u-Y_0)\,dX_u$ is the antisymmetric level-two signature term.

\begin{restatable}[Expected area identifies lead--lag]{theorem}{thmlevy}\label{thm:levy}
Let $(X,Y)$ be zero-mean and jointly weakly stationary with $C^1$ (mean-square) paths and cross-covariance $C(h)=\E[X_sY_{s+h}]\in C^1$. Then $\E[\mathcal{A}_T]=T\,C'(0)-\frac12\big(C(T)-C(-T)\big)$. If moreover $Y_t=\gamma X_{t-\ell}+Z_t$ with $\ell\neq0$, $Z$ stationary and independent of $X$, and $X$ has an even $C^1$ autocovariance $R$ with $R'(h)<0$ for $h>0$, then the long-run area rate has sign $\sgn(\gamma\ell)$. If $\gamma>0$ (gloom follows positioning), a positive area means money moved first.
\end{restatable}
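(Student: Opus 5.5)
The plan is to compute $\E[\mathcal{A}_T]$ directly by writing both line integrals as Riemann integrals against mean-square derivatives. Because the paths are mean-square $C^1$, $\int_0^T(X_u-X_0)\,dY_u=\int_0^T(X_u-X_0)\dot Y_u\,du$, and similarly for the other term. Expectation then passes inside the $du$-integral by Fubini, since all second moments are bounded. The basic tool is that mean-square differentiation commutes with taking covariances: $\E[X_s\dot Y_t]=\partial_t\E[X_sY_t]$ and $\E[\dot X_sY_t]=\partial_s\E[X_sY_t]$. This follows from $L^2$ convergence of the difference quotients together with continuity of the inner product. Using $\E[X_sY_t]=C(t-s)$ and $C\in C^1$, we get $\E[X_u\dot Y_u]=C'(0)$ and $\E[X_0\dot Y_u]=C'(u)$. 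On the other side, $\E[\dot X_uY_u]=-C'(0)$ and $\E[\dot X_uY_0]=\partial_u C(-u)=-C'(-u)$.

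Integrating these over $[0,T]$ gives the two pieces:
\begin{align*}
\E\int_0^T(X_u-X_0)\,dY_u &= T\,C'(0)-\big(C(T)-C(0)\big),\\
\E\int_0^T(Y_u-Y_0)\,dX_u &= -T\,C'(0)-\big(C(0)-C(-T)\big).
\end{align*}
Half their difference is $T\,C'(0)-\tfrac12\big(C(T)-C(-T)\big)$, which is the claimed formula. Since $|C(h)|\le\sqrt{\Var X\,\Var Y}$ by Cauchy--Schwarz, the boundary term is bounded in $T$. Hence the long-run area rate is $\lim_{T\to\infty}\E[\mathcal{A}_T]/T=C'(0)$.

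For the lead--lag model, independence of $Z$ and $X$ together with zero means gives $C(h)=\E[X_s(\gamma X_{s+h-\ell}+Z_{s+h})]=\gamma R(h-\ell)$. This $C$ is $C^1$ because $R$ is, so the first part applies. The rate is therefore $C'(0)=\gamma R'(-\ell)=-\gamma R'(\ell)$, where the last step uses that $R$ is even, so $R'$ is odd. If $\ell>0$, then $R'(\ell)<0$ and the rate has sign $\sgn\gamma$. If $\ell<0$, then $R'(\ell)=-R'(|\ell|)>0$ and the rate has sign $-\sgn\gamma$. In both cases the sign is $\sgn(\gamma\ell)$. In particular, with $\gamma>0$ a positive rate forces $\ell>0$, meaning $Y$ lags $X$: money moved first.

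The main obstacle is the analytic bookkeeping rather than the algebra. Three points need care: interpreting the integrals pathwise versus in $L^2$, justifying the interchange of $\E$ with the integral, and justifying the identity $\E[X_s\dot Y_t]=\partial_t C(t-s)$. I would handle all three through mean-square calculus. Define the integral as an $L^2$ limit of Riemann sums. Each Riemann sum has expectation equal to a Riemann sum of the continuous function $u\mapsto \E[X_u\dot Y_u]-\E[X_0\dot Y_u]$. Passing to the limit then yields the displayed integrals.
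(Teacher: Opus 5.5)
Your route is the paper's own. You differentiate the cross-covariance in mean square, pass the expectation inside the $du$-integral, bound the boundary term, and then use $C(h)=\gamma R(h-\ell)$ with $R'$ odd. Your four ingredient values $\E[X_u\dot Y_u]=C'(0)$, $\E[X_0\dot Y_u]=C'(u)$, $\E[\dot X_uY_u]=-C'(0)$ and $\E[\dot X_uY_0]=-C'(-u)$ are correct, and so is the lead--lag argument.

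The error is in your second display. Your own ingredients give the integrand $\E[(Y_u-Y_0)\dot X_u]=-C'(0)+C'(-u)$. Substituting $w=-u$ gives $\int_0^TC'(-u)\,du=C(0)-C(-T)$, so
\[
  \E\int_0^T(Y_u-Y_0)\,dX_u=-T\,C'(0)+\big(C(0)-C(-T)\big),
\]
with a plus sign on the boundary term, not a minus. With the display as you wrote it, half the difference of your two displays is $T\,C'(0)+C(0)-\tfrac12\big(C(T)+C(-T)\big)$. That is not the stated formula, so your final identity is asserted rather than derived from the displays before it.

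Two quick checks catch the slip. First, take $Y=X$, which the hypotheses allow. Both integrals then equal $\tfrac12(X_T-X_0)^2$, with mean $R(0)-R(T)$, yet your two displays return $R(0)-R(T)$ and $-(R(0)-R(T))$. Second, integrate by parts: $\int_0^T(Y_u-Y_0)\,dX_u=(X_T-X_0)(Y_T-Y_0)-\int_0^T(X_u-X_0)\,dY_u$. Since $\E[(X_T-X_0)(Y_T-Y_0)]=2C(0)-C(T)-C(-T)$, this gives the corrected value directly. Once the sign is fixed, half the difference is exactly $T\,C'(0)-\tfrac12\big(C(T)-C(-T)\big)$. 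The slip does not affect the long-run rate or the sign of $\gamma\ell$, because the boundary term is bounded either way.

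A smaller technical point concerns your Riemann sums. Under second moments only, $(X_u-X_0)\dot Y_u$ is integrable but need not be square-integrable, so the sums should converge in $L^1$ rather than $L^2$. Continuity of $u\mapsto(X_u-X_0)\dot Y_u$ into $L^1$ follows from Cauchy--Schwarz and mean-square continuity. $L^1$ convergence is all you need to pass the expectation through.
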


\subsection{From signals to a decision}\label{sec:synthesis}

Each measurement enters a composite score with the sign a result assigns to it:
\begin{equation}\label{eq:composite}
  D=\omega_1s^{\text{new}}-\omega_2s^{\text{echo}}+\omega_3\,\text{Do}+\widehat{\sgn}\,\omega_4\,\text{Say}+\omega_5A+\omega_6\mathcal{A}^{\pm},
\end{equation}
where $\widehat{\sgn}$ is the sign of the rolling Say--Do covariance (\cref{thm:identity}), $A$ is absorption and $\mathcal{A}^{\pm}$ is the L\'evy area signed by the direction of the position change. \Cref{alg:score} computes every term causally. The weights should be fitted on past data, and a fitted weight whose sign contradicts the theory is a diagnostic that an assumption fails. \Cref{app:inference} adds a causal regime filter, a detection bound, adaptive conformal intervals and Kelly sizing under estimation risk.

\begin{algorithm}[t]
\caption{Say--Echo--Do scoring of one event (all inputs available at decision time)}
\label{alg:score}
\small
\begin{algorithmic}[1]
\Require articles $(\tau_j,\z_j,s_j)_{j\le N}$ (the statement is $j=1$); statement Say; disclosed positioning Do; paths of positioning $X$ and tone; the firm's past $(\text{Say},\text{Do})$; Hawkes $(\mu,g)$ and $\varkappa$ fitted on past events; window $W$, threshold $c$
\For{$j=1,\dots,N$} \Comment{semantic declustering, \cref{thm:semantic}}
  \State $\Lambda_j\gets\mu f_0(\z_j)+\sum_{l<j}g(\tau_j-\tau_l)f(\z_j\mid\z_l)$;\quad $p_j\gets\mu f_0(\z_j)/\Lambda_j$
\EndFor
\State $s^{\text{new}}\gets\sum_{j>1}p_js_j$;\quad $s^{\text{echo}}\gets\sum_j(1-p_j)s_j$ \Comment{\cref{cor:echo}}
\State $A\gets$ price reaction $-$ narrative-implied move (fitted on past events, or analogues in the return-aligned space)
\State $\hat c\gets\operatorname{corr}(\text{Say},\text{Do})$ over the last $W$ events;\quad $\widehat{\sgn}\gets-1$ if $\hat c<-c$ else $+1$ \Comment{\cref{thm:identity}}
\State $\mathcal{A}^{\pm}\gets\sgn(X_T-X_0)\cdot\mathcal{A}_T(X,-\text{tone})$ \Comment{\cref{thm:levy}}
\State \Return $D$ from \eqref{eq:composite}, with weights fitted on past events
\end{algorithmic}
\end{algorithm}

\section{Experiments}\label{sec:experiments}

We test the tools where the truth is known and every quantity must be estimated from what an econometrician would see. We ask five questions. \textbf{Q1}: can echo be separated from news, and do the two carry opposite signs? \textbf{Q2}: does the Say--Do covariance find the institutions whose words should be faded? \textbf{Q3}: do absorption and lead--lag carry their predicted signs? \textbf{Q4}: do theory-based features improve forecasts? \textbf{Q5}: do return-aligned embeddings organise text by consequence, and is the certificate informative?

\paragraph{Simulated market.} There are $300$ firms with $60$ events each. Each firm hosts one institution, drawn with equal probability as non-strategic (it reports value and does not trade) or strategic in the shading, false-alarm or exaggeration regime of \cref{thm:speech}. At each event, value is $v\sim\N(0,1)$, the institution trades $\chi v$ over the ten days before it speaks, and then it publishes $\text{Say}=\psi v+\text{noise}$. The media publish a Hawkes cascade in which the statement and independent news items are immigrants. Echoes copy their parent's sentiment and, through a von Mises--Fisher perturbation ($\varkappa=30$, $D=16$), its embedding. The branching ratio is set so that effective credulity is $\varphi$. The crowd prices every article, the forward return is $v-p$ plus heavy late-arriving noise, and positions are disclosed with error. Methods see article times, embeddings and sentiment, statements, noisy positions, price reactions and daily paths. They never see value, regimes, parentage or parameters. The one exception is a supervised false-alarm detector trained on labelled training events.

\paragraph{Protocol.} For each firm, the first $30$ events are for training and the last $30$ for testing. Hawkes parameters, $\varkappa$, the narrative-implied move, and all model weights and standardisations use training events only. We report the cross-sectional information coefficient (IC) per test period with its $t$-statistic, and firm-clustered regressions for the theory tests. The main market uses $5$ seeds, robustness variants $3$, and sample-size variants $2$: $29$ distinct markets in all. The noise scale was calibrated once so that a linear model on the raw voices reaches an IC of about $0.17$. Forecasting with the true mispricing $v-p$ then reaches only $0.181$, so Q4 has little headroom. Full details are in \cref{app:experiments}.

\begin{table}[t]
\caption{Theory tests on test events of the main market (five seeds). $t$: firm-clustered, averaged over seeds; last column: seeds with $|t|>2$ and the predicted sign. Controls: Say and Do for the first two rows, Do for the third.}
\label{tab:mainTests}
\centering
\small
\begin{tabular}{@{}lllrc@{}}
\toprule
Regressor & Source & Prediction & $t$ & As predicted \\
\midrule
Echo sentiment $s^{\text{echo}}$ & \cref{cor:echo} & $-$ (fade) & $-8.0$ & 5/5 \\
News sentiment $s^{\text{new}}$ & \cref{cor:echo} & $+$ (follow) & $+2.1$ & 3/5 \\
Say, where rolling $\operatorname{corr}(\text{Say},\text{Do})<-0.2$ & \cref{thm:identity} & $-$ (fade) & $-3.3$ & 4/5 \\
Absorption, institution trades & \cref{thm:absorption} & either & $-11.1$ & -- \\
Absorption, institution does not trade & \cref{thm:absorption} & $-$ & $-3.6$ & 4/5 \\
Signed L\'evy area & \cref{thm:levy} & $+$ & $+1.0$ & 1/5 \\
\bottomrule
\end{tabular}
\end{table}

\paragraph{Q1: echo is separable, and fading it pays.} Semantic declustering recovers the echo concentration ($\hat\varkappa=30.27\pm0.07$, true $30$) and the echo share ($0.422$, true $0.427$). Timing alone gives $0.368$ against $0.430$ on the same markets, because it mistakes echoes for news. Echo sentiment predicts returns with a significantly negative sign in all $29$ markets (\cref{tab:mainTests}). Meaning sharpens the test: on identical markets its $t$-statistic weakens from $-7.5$ to $-6.0$ when timing alone is used. News carries the predicted positive sign in every market, but it is significant in only $21$ of $29$.

\paragraph{Q2: the Say--Do covariance finds the manipulators.} The rolling Say--Do correlation alone separates false-alarm events from all others with an AUC of $0.895$. A supervised detector that adds four features reaches $0.934\pm0.010$ (\cref{fig:experiments}a). Used as the switch in \eqref{eq:composite}, the correlation labels a firm's words correctly (fade or follow) $85\%$ of the time from $5$ past events and $96\%$ from $40$ (\cref{fig:experiments}b). In the flagged group, statements predict returns with the sign reversed. Two caveats matter for real data. Controlling for echo sentiment removes this effect ($t=+0.2$; $-0.3$ with news also controlled), because a statement's price impact travels through its echoes. And Do's own coefficient is imprecise, because Say and Do are nearly collinear within strategic regimes and because the institution's price impact is small next to later information.

\begin{figure}[t]
\centering
\begin{minipage}[t]{0.35\textwidth}
\centering
\begin{tikzpicture}
\begin{axis}[xbar, width=4.1cm, height=4.7cm, xmin=0.45, xmax=1.0, bar width=5pt,
  ytick={1,...,6}, y dir=reverse, ymin=0.4, ymax=6.6,
  yticklabels={all five, Say--Do corr., Say$\times$Do, L\'evy area, echo share, absorption},
  xtick={0.5,0.75,1.0}, xlabel={\scriptsize AUC (false-alarm events)},
  tick label style={font=\scriptsize}, axis lines=left, y axis line style={-},
  xticklabel style={/pgf/number format/fixed, /pgf/number format/precision=2},
  xlabel style={yshift=3pt}]
\addplot[fill=do!70, draw=none, error bars/.cd, x dir=both, x explicit] coordinates {(0.934,1) +- (0.010,0) (0.895,2) +- (0.014,0) (0.746,3) +- (0.012,0) (0.627,4) +- (0.009,0) (0.530,5) +- (0.018,0) (0.507,6) +- (0.006,0)};
\draw[gray, dashed] (axis cs:0.5,0.4) -- (axis cs:0.5,6.6);
\end{axis}
\node[font=\bfseries\small] at (-1.95,3.8) {a};
\end{tikzpicture}
\end{minipage}\hfill
\begin{minipage}[t]{0.29\textwidth}
\centering
\begin{tikzpicture}
\begin{axis}[width=4.4cm, height=4.7cm, xmin=3, xmax=45, ymin=0.8, ymax=1.0,
  xtick={5,10,20,40}, ytick={0.8,0.9,1.0},
  yticklabel style={/pgf/number format/fixed, /pgf/number format/precision=1},
  xlabel={\scriptsize past events $W$}, ylabel={\scriptsize switch accuracy},
  tick label style={font=\scriptsize}, axis lines=left, clip=false,
  xlabel style={yshift=3pt}, ylabel style={yshift=-6pt}]
\addplot[do, very thick, mark=*, mark size=1.5pt] coordinates {(5,0.847) (10,0.902) (20,0.934) (40,0.960)};
\end{axis}
\node[font=\bfseries\small] at (-0.75,3.8) {b};
\end{tikzpicture}
\end{minipage}\hfill
\begin{minipage}[t]{0.33\textwidth}
\centering
\begin{tikzpicture}
\begin{semilogxaxis}[width=4.6cm, height=4.7cm, xmin=0.8, xmax=800, ymin=-0.02, ymax=1.05,
  xtick={1,10,100}, xticklabels={1,10,100},
  xlabel={\scriptsize training step}, ylabel={\scriptsize share / IC},
  tick label style={font=\scriptsize}, axis lines=left,
  xlabel style={yshift=3pt}, ylabel style={yshift=-6pt},
  legend style={font=\tiny, draw=none, fill=white, fill opacity=0.85, text opacity=1, at={(0.98,0.03)}, anchor=south east, inner sep=1pt, row sep=-2pt}, legend cell align=left]
\addplot[do, very thick, mark=*, mark size=1.1pt] coordinates {(1,0.388) (25,0.529) (50,0.550) (100,0.662) (200,0.992) (400,1.000) (600,1.000)};
\addlegendentry{same consequence}
\addplot[gray, very thick, mark=square*, mark size=1.0pt] coordinates {(1,0.554) (25,0.501) (50,0.438) (100,0.382) (200,0.339) (400,0.339) (600,0.344)};
\addlegendentry{same topic}
\addplot[echo, very thick, mark=triangle*, mark size=1.3pt] coordinates {(1,0.139) (25,0.201) (50,0.295) (100,0.551) (200,0.878) (400,0.900) (600,0.892)};
\addlegendentry{analog IC}
\addplot[echo, dashed, domain=0.8:800, samples=2, forget plot] {0.192};
\end{semilogxaxis}
\node[font=\bfseries\small] at (-0.75,3.8) {c};
\end{tikzpicture}
\end{minipage}
\caption{\textbf{a}: AUC for flagging false-alarm events, for each feature alone (reported as $\max(\text{AUC},1-\text{AUC})$) and for a detector on all five trained with true labels (mean $\pm$ s.d., five seeds; dashed line: chance). \textbf{b}: accuracy of the Say--Do switch in labelling words to fade. \textbf{c}: return-aligned training moves headlines' ten nearest neighbours from sharing a topic to sharing a consequence; the dashed line is the analog IC of TF-IDF neighbours (three seeds).}
\label{fig:experiments}
\end{figure}

\paragraph{Q3: absorption tests the narrative model; lead--lag is weak.} Estimated absorption predicts strong reversal in all $29$ markets, even where the theory allows continuation. Decomposing it shows why. Only $12$--$19\%$ of its variance is the institution's trade and $2$--$4\%$ is noise trading. Most of the rest is \emph{narrative misfit}: a linear narrative model ignores that a statement's impact scales with the size of its echo cascade. With the true absorption $\lambda(x+u)$, the coefficient is insignificant in both groups, so these experiments do not test \cref{thm:absorption}. They do show that its sign must be estimated. The signed L\'evy area has the predicted sign in $27$ of $29$ markets but is significant in only $9$.

\begin{table}[t]
\centering
\begin{minipage}[t]{0.5\textwidth}
\centering
\caption{Out-of-sample IC, main market ($5$ seeds; oracle $v-p$: $0.181$).}
\label{tab:mainForecast}
\small
\begin{tabular}{@{}lr@{}}
\toprule
Forecast & IC \\
\midrule
Follow the tone & $-0.119\pm0.013$ \\
Follow Say & $-0.047\pm0.006$ \\
Theory composite, unit weights & $0.049\pm0.010$ \\
Fade the price reaction & $0.162\pm0.018$ \\
Boosting: raw voices & $0.109\pm0.015$ \\
Boosting: raw + theory features & $0.118\pm0.019$ \\
Linear: raw voices & $0.169\pm0.017$ \\
Linear: raw + theory features & $\mathbf{0.171\pm0.018}$ \\
\bottomrule
\end{tabular}
\end{minipage}\hfill
\begin{minipage}[t]{0.47\textwidth}
\centering
\caption{Text benchmark ($3$ seeds). Week-ahead analog IC; share of $10$ nearest neighbours with the same consequence (chance $0.25$). Ceiling: $0.936$.}
\label{tab:mainText}
\small
\setlength{\tabcolsep}{4pt}
\begin{tabular}{@{}lrr@{}}
\toprule
Method & IC & Conseq. \\
\midrule
TF-IDF neighbours & $0.192$ & $0.506$ \\
Random 16-d projection & $0.039$ & $0.327$ \\
Ridge on words & $-0.019$ & -- \\
Boosting on words & $0.837$ & -- \\
Return-aligned, 16-d & $\mathbf{0.892}$ & $\mathbf{1.000}$ \\
\bottomrule
\end{tabular}
\end{minipage}
\end{table}

\paragraph{Q4: little forecasting headroom.} Following the tone or institutional statements loses money, and fading the price reaction is a strong baseline (\cref{tab:mainForecast}). Adding theory features to a linear model changes the IC by at most about $0.005$ in any seed. For boosting they help a little, raising the IC from $0.109$ to $0.118$ and improving it in $18$ of $29$ markets. The unit-weight composite is weak ($0.049$), mostly because its absorption term has the wrong sign here. With the oracle at $0.181$, this comparison has little power. It is consistent with \cref{thm:followfade}, which says the optimal forecast is linear in the voices, although the simulated price is only approximately linear in them (Q3). The theory's clearest contributions here are its signs and the diagnostics of Q1 and Q2.

\paragraph{Q5: return-aligned embeddings organise text by consequence.} In a synthetic corpus, each headline mixes irrelevant topic words, tone words and one cue that decides whether the move reverses or continues. We train a 16-dimensional projection of TF-IDF vectors with $\mathcal{L}$ (\cref{app:text}). The ten nearest neighbours of a test headline switch from sharing its topic ($85\%$ under TF-IDF) to sharing its consequence ($100\%$) (\cref{fig:experiments}c). The analog forecast reaches an IC of $0.892$ against a ceiling of $0.936$ (\cref{tab:mainText}). Ridge regression fails because the outcome is a tone$\times$cue interaction. The certificate never made a wrong claim, but it certified nothing under realistic outcome noise: the excess loss per anchor stays at $0.02$--$0.03$ while targets are of order $0.007$. With noise-free outcomes and sharp targets, form (ii) certifies $45\%$ of comparisons, and form (i) still certifies none.

\paragraph{Robustness.} \Cref{tab:mainRobust} changes one assumption at a time. Fat-tailed values, power-law echo delays, partial disclosure and a saturating crowd leave the echo and news signs intact, and detection stays at an AUC of at least $0.857$. The controlled reversed-words test, already insignificant in the reference market, takes the wrong sign in all three seeds under a saturating crowd (\cref{app:results}). Ignoring meaning, or weak meaning ($\varkappa=5$), underestimates the echo share by about six points and weakens the echo test.

\begin{table}[t]
\caption{Robustness (three seeds each; ``reference'' is the main market on the same seeds). IC: linear model with theory features. $t$: firm-clustered. Echo share: estimated (true).}
\label{tab:mainRobust}
\centering
\small
\setlength{\tabcolsep}{5pt}
\begin{tabular}{@{}lrrrrc@{}}
\toprule
Variant & IC & echo $t$ & news $t$ & AUC & echo share \\
\midrule
Reference & $0.190$ & $-7.5$ & $+3.6$ & $0.941$ & $0.425$ ($0.430$) \\
Fat-tailed values ($t_3$) & $0.141$ & $-8.1$ & $+2.4$ & $0.924$ & $0.423$ ($0.428$) \\
Saturating crowd & $0.101$ & $-5.9$ & $+3.2$ & $0.942$ & $0.425$ ($0.430$) \\
Power-law echo delays & $0.177$ & $-7.9$ & $+2.1$ & $0.939$ & $0.417$ ($0.429$) \\
30\% of trade disclosed, noisier & $0.191$ & $-7.5$ & $+4.1$ & $0.857$ & $0.425$ ($0.430$) \\
Weak meaning ($\varkappa=5$) & $0.188$ & $-7.2$ & $+1.9$ & $0.941$ & $0.373$ ($0.429$) \\
Timing only (reference markets) & $0.189$ & $-6.0$ & $+2.6$ & $0.940$ & $0.368$ ($0.430$) \\
\bottomrule
\end{tabular}
\end{table}

\section{Limitations and conclusion}\label{sec:discussion}

\paragraph{Limitations.} The price-formation results are exact in linear--Gaussian economies with reduced-form crowd pricing, and they pin down signs and sufficient statistics, not magnitudes. The simulator shares the theory's structure and is far more predictable than any real market, so our experiments show that the tools \emph{can} work, not that they will. The false-alarm detector needs labels that real data provide only after the fact. Absorption needs a nonlinear narrative model before its sign can be trusted, and the certificate needs outcome targets smoother than raw returns. The text benchmark is synthetic, and the week-ahead outcome is fully determined by two words.

\paragraph{Conclusion.} Treating financial text as a strategic signal changes what should be learned from it. Words whose covariance with deeds is negative should be faded, and so should repetition. The echo prediction held in all $29$ simulated markets, and the Say--Do switch in four of five main-market seeds. The natural next step is a point-in-time study on real disclosures, echo cascades and positioning, testing the echo and Say--Do predictions first.

\paragraph{Broader impact.} These tools are meant for detection, research and investor protection, and they use public information only. They could help regulators and investors discount manipulative narratives. The same understanding could in principle help someone design narratives that move prices, which is market manipulation and illegal. A strategic reading is never evidence that anyone lied (\cref{app:impact}).

\newpage
\appendix
\section*{Appendix}

The appendix contains the full model and the proofs (\cref{app:model,app:strategic,app:measure}), the decision layer (\cref{app:inference}), numerical checks of every closed form (\cref{app:verify}), the complete experimental protocol and results (\cref{app:experiments,app:results}) and a discussion of broader impacts (\cref{app:impact}). Results stated in the main text are restated here with their original numbers.

\section{The three-voice market: model and proofs}\label{app:model}

\subsection{Auxiliary lemmas}
\begin{lemma}[Gaussian conditioning]\label{lem:gauss}
If $(\bm a,\bm b)$ is jointly Gaussian with $\Var(\bm b)\succ0$, then $\E[\bm a\mid\bm b]=\E\bm a+\Cov(\bm a,\bm b)\Var(\bm b)^{-1}(\bm b-\E\bm b)$ and $\Var(\bm a\mid\bm b)=\Var(\bm a)-\Cov(\bm a,\bm b)\Var(\bm b)^{-1}\Cov(\bm b,\bm a)$. The same formulas hold conditionally on a further jointly Gaussian vector.
\end{lemma}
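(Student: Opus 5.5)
The plan is to reduce the statement to the case of a scalar conditioning variable and then lift it to vectors. For the vector case I will write down the natural candidate $\hat{\bm a}=\E\bm a+\Cov(\bm a,\bm b)\Var(\bm b)^{-1}(\bm b-\E\bm b)$ and set $\bm e=\bm a-\hat{\bm a}$. The aim is to show that the residual $\bm e$ is independent of $\bm b$.

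First, $(\bm e,\bm b)$ is a linear image of $(\bm a,\bm b)$, so it is jointly Gaussian; here $\Var(\bm b)\succ0$ makes the inverse well defined. Second, a direct computation gives $\Cov(\bm e,\bm b)=\Cov(\bm a,\bm b)-\Cov(\bm a,\bm b)\Var(\bm b)^{-1}\Var(\bm b)=0$. Third, for jointly Gaussian vectors, zero cross-covariance implies independence, which is seen by factoring the joint characteristic function. This third step is the only substantive one, and it is standard.

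From independence both formulas follow at once.
\begin{itemize}
\item For the mean, $\E[\bm a\mid\bm b]=\hat{\bm a}+\E[\bm e\mid\bm b]=\hat{\bm a}+\E\bm e$, and $\E\bm e=0$.
\item For the variance, $\Var(\bm a\mid\bm b)=\Var(\bm e\mid\bm b)=\Var(\bm e)$, which is nonrandom.
\end{itemize}
Expanding $\Var(\bm e)$ with the zero cross-covariance gives $\Var(\bm a)-\Cov(\bm a,\bm b)\Var(\bm b)^{-1}\Cov(\bm b,\bm a)$.

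For the conditional version, suppose $(\bm a,\bm b,\bm c)$ is jointly Gaussian. I will apply the unconditional result to obtain the law of $(\bm a,\bm b)$ given $\bm c$. That conditional law is Gaussian with a mean linear in $\bm c$ and a covariance that does not depend on $\bm c$. The same residual argument, run under this conditional law, then yields the formulas with all moments taken conditional on $\bm c$.

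One point needs care: conditioning on $\bm c$ must leave $\Var(\bm b\mid\bm c)\succ0$. The natural reading is that this is assumed, and I would state it explicitly.

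The main obstacle is only this bookkeeping about nondegeneracy in the conditional case. The unconditional case is routine.
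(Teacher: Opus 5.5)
Your proposal is correct and is essentially the paper's proof: the paper also forms the residual $\bm e=\bm a-\hat{\bm a}$, checks $\Cov(\bm e,\bm b)=0$, uses joint Gaussianity to get independence, and reads off both formulas. Your treatment of the conditional case, including the explicit requirement $\Var(\bm b\mid\bm c)\succ0$, is more careful than the paper, which simply asserts that case; your opening remark about first reducing to a scalar conditioning variable is never used and can be dropped.
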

\begin{proof}
Let $\bm e=\bm a-\E\bm a-\Cov(\bm a,\bm b)\Var(\bm b)^{-1}(\bm b-\E\bm b)$. Then $\Cov(\bm e,\bm b)=0$, and joint Gaussianity makes $\bm e$ independent of $\bm b$. Hence $\E[\bm a\mid\bm b]$ is as stated, and $\Var(\bm a\mid\bm b)=\Var(\bm e)$, which expands to the stated formula.
\end{proof}
\begin{lemma}[Bhattacharyya coefficient of Gaussians]\label{lem:bc}
For $p_i=\N(\bm m_i,\Sigma)$ on $\R^d$, $\int\sqrt{p_0p_1}=\exp\big(-\tfrac18(\bm m_1-\bm m_0)^\top\Sigma^{-1}(\bm m_1-\bm m_0)\big)$.
\end{lemma}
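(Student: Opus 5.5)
The plan is a direct computation: write both Gaussian densities in closed form, then complete the square in the integrand. Let $\Delta=\bm m_1-\bm m_0$ and $\bar{\bm m}=\frac12(\bm m_0+\bm m_1)$. Since both distributions share the covariance $\Sigma$, the normalising constants combine to
\[
\sqrt{p_0p_1}=(2\pi)^{-d/2}|\Sigma|^{-1/2}\exp\big(-\tfrac14 Q\big),
\]
where $Q=(\bm x-\bm m_0)^\top\Sigma^{-1}(\bm x-\bm m_0)+(\bm x-\bm m_1)^\top\Sigma^{-1}(\bm x-\bm m_1)$.

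The key step is the parallelogram-type identity
\[
Q=2(\bm x-\bar{\bm m})^\top\Sigma^{-1}(\bm x-\bar{\bm m})+\tfrac12\Delta^\top\Sigma^{-1}\Delta.
\]
To verify it, write $\bm x-\bm m_0=(\bm x-\bar{\bm m})+\frac12\Delta$ and $\bm x-\bm m_1=(\bm x-\bar{\bm m})-\frac12\Delta$, and expand both quadratic forms. The cross terms cancel, the two squared terms in $\bm x-\bar{\bm m}$ add to give the factor $2$, and the two $\frac14\Delta^\top\Sigma^{-1}\Delta$ terms sum to $\frac12\Delta^\top\Sigma^{-1}\Delta$.

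Substituting the identity gives
\[
\sqrt{p_0p_1}=\exp\big(-\tfrac18\Delta^\top\Sigma^{-1}\Delta\big)\cdot(2\pi)^{-d/2}|\Sigma|^{-1/2}\exp\big(-\tfrac12(\bm x-\bar{\bm m})^\top\Sigma^{-1}(\bm x-\bar{\bm m})\big).
\]
The second factor is exactly the $\N(\bar{\bm m},\Sigma)$ density, so it integrates to one over $\R^d$, and the stated formula follows.

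There is no real obstacle. The only points to watch are keeping the factor $\frac14$ straight, which comes from taking the square root of $\exp(-\frac12\,\cdot)$, and noting that $\Sigma\succ0$ is implicit in $p_i$ being densities, so $\Sigma^{-1}$ exists.
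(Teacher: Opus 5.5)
Your proposal is correct and follows essentially the same route as the paper: complete the square around the midpoint $\bar{\bm m}$ using the parallelogram identity (yours is exactly twice the paper's), recognise the $\N(\bar{\bm m},\Sigma)$ density times $e^{-\Delta^\top\Sigma^{-1}\Delta/8}$, and integrate.
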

\begin{proof}
With $\bar{\bm m}=(\bm m_0+\bm m_1)/2$ and $\bm\delta=\bm m_1-\bm m_0$, the identity $\tfrac12(\bm x-\bm m_0)^\top\Sigma^{-1}(\bm x-\bm m_0)+\tfrac12(\bm x-\bm m_1)^\top\Sigma^{-1}(\bm x-\bm m_1)=(\bm x-\bar{\bm m})^\top\Sigma^{-1}(\bm x-\bar{\bm m})+\tfrac14\bm\delta^\top\Sigma^{-1}\bm\delta$ gives $\sqrt{p_0p_1}=\N(\bm x;\bar{\bm m},\Sigma)\,e^{-\bm\delta^\top\Sigma^{-1}\bm\delta/8}$. Integrate over $\bm x$.
\end{proof}

\subsection{Model}
\begin{assumption}[Fundamentals and public signals]\label{ass:signals}
The asset's terminal value is $v\sim\N(0,\sigma_v^2)$. A vector of $K$ public channels is observed,
\[
  \y = \h\, v + \bm\nu \in\R^K,\qquad \bm\nu\sim\N(\bm 0,\Sigma_\nu)\ \text{independent of } v,
\]
with $\Sigma_y\coloneqq\sigma_v^2\h\h^\top+\Sigma_\nu\succ0$. Channels are labelled by voice: institutional statements (\SAY), independent news, and media repetition (\ECHO).
\end{assumption}
\begin{assumption}[Price formation]\label{ass:price}
Noise traders submit $u\sim\N(0,\sigma_u^2)$, independent of $(v,\bm\nu)$. Informed traders who know $v$ submit $x=\beta(v-p)$ with $\beta>0$. The price is set by a narrative-sensitive crowd and linear impact:
\[
  p=\bphi^\top\y+\lambda\,(x+u),\qquad \lambda>0,
\]
where $\bphi\in\R^K$ collects the crowd's \emph{narrative multipliers}.
\end{assumption}
\begin{figure}[ht]
\centering
\begin{tikzpicture}[
  node distance=1.0cm and 1.5cm,
  latent/.style={circle, draw, thick, minimum size=1.0cm, font=\small},
  obs/.style={rectangle, rounded corners=3pt, draw, thick, minimum width=1.9cm, minimum height=0.8cm, align=center, font=\small},
  arr/.style={-{Stealth[length=2.2mm]}, thick}]
\node[latent, fill=gray!10] (v) {$v$};
\node[obs, draw=say, fill=say!8, text=say, above right=0.6cm and 1.6cm of v] (m) {\textbf{Say} $m$};
\node[obs, draw=do, fill=do!6, text=do, right=1.6cm of v] (n) {News $n$};
\node[obs, draw=echo, fill=echo!8, text=echo, right=1.3cm of m] (e) {\textbf{Echo} $e$};
\node[obs, draw=do, fill=do!8, text=do, below right=0.6cm and 1.6cm of v] (x) {\textbf{Do} $x$};
\node[obs, draw=price, fill=gray!8, minimum width=1.4cm, right=5.3cm of v] (p) {price $p$};
\node[latent, fill=white, right=0.9cm of p] (u) {$u$};
\draw[arr] (v) -- (m);
\draw[arr] (v) -- (n);
\draw[arr] (m) -- (e);
\draw[arr] (v) -- (x);
\draw[arr] (e) -- (p);
\draw[arr] (n) -- (p);
\draw[arr] (m.east) to[out=-10,in=150] (p.north west);
\draw[arr] (x) -- (p);
\draw[arr] (u) -- (p);
\draw[arr, dashed, gray] (p.south) to[out=-110,in=-20] node[below, font=\scriptsize, text=gray] {$x=\beta(v-p)$} (x.south east);
\node[font=\scriptsize, text=echo, above=1pt of e] {garbling of $m$};
\end{tikzpicture}
\caption{Information structure. The echo channel is a downstream garbling of the institutional statement, while news is an independent signal of $v$. Informed demand reacts to the price (dashed), and the price aggregates every channel.}
\label{fig:dag}
\end{figure}
Throughout, $\kappa\coloneqq(1+\lambda\beta)^{-1}\in(0,1)$, the Bayes weights are $\w^\star\coloneqq\sigma_v^2\Sigma_y^{-1}\h$ (so that $\E[v\mid\y]=\w^{\star\top}\y$), and $\sigma^2_{v|y}\coloneqq\Var(v\mid\y)=\sigma_v^2-\sigma_v^4\h^\top\Sigma_y^{-1}\h$. The forward return is $r\coloneqq v-p$.
\begin{lemma}[Price and return]\label{lem:price}
Under \cref{ass:signals,ass:price},
\[
  p=\kappa\big(\bphi^\top\y+\lambda\beta v+\lambda u\big),\qquad r=\kappa\big(v-\bphi^\top\y-\lambda u\big).
\]
\end{lemma}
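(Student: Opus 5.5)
The price equation in \cref{ass:price} is implicit, because the informed demand $x=\beta(v-p)$ itself depends on $p$. My plan is to solve this linear fixed point directly and read off the return.

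First, substitute $x=\beta(v-p)$ into $p=\bphi^\top\y+\lambda(x+u)$. This gives
\[
p=\bphi^\top\y+\lambda\beta v-\lambda\beta p+\lambda u .
\]
Collecting the $p$ terms yields
\[
(1+\lambda\beta)\,p=\bphi^\top\y+\lambda\beta v+\lambda u .
\]
Since $\lambda,\beta>0$, the factor $1+\lambda\beta$ is positive, so we can divide by it. With $\kappa=(1+\lambda\beta)^{-1}$ this gives the stated formula for $p$. The price is unique and well defined pathwise, and no distributional assumption is needed for this step.

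Second, compute the return. We have $r=v-p=\kappa\big[(1+\lambda\beta)v-\bphi^\top\y-\lambda\beta v-\lambda u\big]$. The $\lambda\beta v$ terms cancel, leaving $r=\kappa(v-\bphi^\top\y-\lambda u)$, as claimed.

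There is no real obstacle here. The only point needing care is the sign bookkeeping when $-\lambda\beta p$ is moved to the left-hand side, and noting that positivity of $1+\lambda\beta$ justifies the division. The Gaussian structure of \cref{ass:signals} plays no role; it becomes relevant only afterwards, when this lemma is combined with \cref{lem:gauss} to prove \cref{thm:followfade}.
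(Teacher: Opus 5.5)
Your proposal is correct and follows the paper's proof step for step. You substitute $x=\beta(v-p)$, solve $(1+\lambda\beta)p=\bphi^\top\y+\lambda\beta v+\lambda u$ for $p$, and expand $r=v-p$ so that the $\lambda\beta v$ terms cancel.
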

\begin{proof}
Substituting $x=\beta(v-p)$ gives $p(1+\lambda\beta)=\bphi^\top\y+\lambda\beta v+\lambda u$. Then $r=v-p=\kappa\big((1+\lambda\beta)v-\bphi^\top\y-\lambda\beta v-\lambda u\big)=\kappa(v-\bphi^\top\y-\lambda u)$.
\end{proof}

\subsection{Follow or fade}
\thmfollowfade*
\begin{proof}
By \cref{lem:price}, $r=\kappa(v-\bphi^\top\y-\lambda u)$. Joint Gaussianity gives $\E[v\mid\y]=\Cov(v,\y)\Sigma_y^{-1}\y=\sigma_v^2\h^\top\Sigma_y^{-1}\y=\w^{\star\top}\y$, and $\E[u\mid\y]=0$ by independence. For the variance, $r-\E[r\mid\y]=\kappa\big((v-\E[v\mid\y])-\lambda u\big)$, and the two terms are independent given $\y$.
\end{proof}

\corecho*
\begin{proof}
\emph{Part (i).}
$\sigma(\y)=\sigma(\y_{-e},\xi)$ and $\xi$ is independent of $(v,\y_{-e})$, so $\E[v\mid\y]=\E[v\mid\y_{-e}]$, a linear function of $\y_{-e}$ alone. If $\E[v\mid\y]=\w^{\star\top}\y=\w'^{\top}\y$ with $w'_e=0$, then $(\w^\star-\w')^\top\y=0$ almost surely, so $(\w^\star-\w')^\top\Sigma_y(\w^\star-\w')=0$, and $\Sigma_y\succ0$ forces $\w^\star=\w'$. Hence $w^\star_e=0$, and \cref{thm:followfade} gives the loading $\kappa(0-\varphi_e)$.

\emph{Part (ii).}
Conditionally on $\y_{-n}$, $v\sim\N(\mu',s^2)$ with $\mu'$ linear in $\y_{-n}$, and $n=v+\nu_n$. The scalar Gaussian update gives $\E[v\mid\y]=\mu'+\frac{s^2}{s^2+\sigma_n^2}(n-\mu')$, whose coefficient on $n$ is $w^\star_n$. Apply \cref{thm:followfade}.
\end{proof}

\subsection{The Say--Do gap}
\begin{theorem}[Say--Do gap optimality]\label{thm:saydo}
In addition to $\y$, the econometrician observes a positioning proxy $\hat x=x+\xi_x$, $\xi_x\sim\N(0,\sigma_x^2)$ independent of $(v,\bm\nu,u)$. Let
\[
  c_x\coloneqq\frac{\beta\varsigma^2}{\beta^2\varsigma^2+\sigma_x^2},\qquad \theta\coloneqq\frac{\sigma_x^2}{\beta^2\varsigma^2+\sigma_x^2}\in(0,1].
\]
Then
\begin{enumerate}[label=(\roman*)]
  \item $\E[r\mid\y,\hat x]=\theta\,\kappa(\w^\star-\bphi)^\top\y+c_x\,\hat x$ and $\Var(r\mid\y,\hat x)=\theta\,\varsigma^2<\varsigma^2$ whenever $\sigma_x^2<\infty$;
  \item if the \SAY\ channel $m=y_1$ is overweighted, $\varphi_1>w^\star_1$, the forecast depends on $(m,\hat x)$ only through the \emph{Say--Do gap}
  \[
    G_\gamma\coloneqq\gamma\, m-\hat x,\qquad \gamma\coloneqq\frac{\theta\kappa(\varphi_1-w^\star_1)}{c_x}>0,
  \]
  entering with coefficient $-c_x<0$.
\end{enumerate}
\end{theorem}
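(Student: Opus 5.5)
The plan is to condition sequentially: first on $\y$, using \cref{thm:followfade}, then on $\hat x$ as one extra Gaussian observation, using \cref{lem:gauss} conditionally on $\y$.

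\textbf{Express $\hat x$ through the forward return.} The key observation is that informed demand is a deterministic multiple of the return, $x=\beta(v-p)=\beta r$. Hence $\hat x=\beta r+\xi_x$.

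\textbf{Joint conditional law given $\y$.} By \cref{lem:price}, $r$ is linear in $(v,\y,u)$, so $(r,\hat x,\y)$ is jointly Gaussian. By \cref{thm:followfade}, conditionally on $\y$ we have $r\sim\N(\mu_r,\varsigma^2)$ with $\mu_r=\kappa(\w^\star-\bphi)^\top\y$. Since $\xi_x$ is independent of $(v,\bm\nu,u)$, it is independent of $(r,\y)$. Therefore, given $\y$, we have $\Cov(r,\hat x)=\beta\varsigma^2$ and $\Var(\hat x)=\beta^2\varsigma^2+\sigma_x^2>0$.

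\textbf{Update on $\hat x$.} Applying \cref{lem:gauss} conditionally on $\y$ gives
\[
\E[r\mid\y,\hat x]=\mu_r+c_x(\hat x-\beta\mu_r)=(1-\beta c_x)\mu_r+c_x\hat x .
\]
One checks that $1-\beta c_x=\sigma_x^2/(\beta^2\varsigma^2+\sigma_x^2)=\theta$, which yields the stated mean. The same lemma gives the variance,
\[
\Var(r\mid\y,\hat x)=\varsigma^2-\frac{\beta^2\varsigma^4}{\beta^2\varsigma^2+\sigma_x^2}=\theta\varsigma^2 .
\]
Since $\varsigma^2\ge\kappa^2\lambda^2\sigma_u^2>0$ and $\beta>0$, we get $\theta<1$ whenever $\sigma_x^2<\infty$. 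This proves (i).

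\textbf{Part (ii).} Isolate the $m=y_1$ term of the linear forecast. Its coefficient is $\theta\kappa(w^\star_1-\varphi_1)=-c_x\gamma$. So the $(m,\hat x)$ part of the forecast is
\[
-c_x\gamma m+c_x\hat x=-c_x(\gamma m-\hat x)=-c_xG_\gamma .
\]
Positivity of $\gamma$ follows from $\theta>0$ (which needs $\sigma_x^2>0$; if $\sigma_x^2=0$, then $\gamma=0$), $\kappa>0$, $c_x>0$, and $\varphi_1>w^\star_1$.

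\textbf{Main obstacle.} The only genuinely non-routine step is recognising that $x=\beta r$, so that $\hat x$ is a noisy measurement of the return itself. Once this is seen, the rest is a scalar Gaussian update. Some care is needed to state $\theta\in(0,1]$ and $\gamma>0$ with the right edge cases, namely $\sigma_x^2>0$ for strict positivity.
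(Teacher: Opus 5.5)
Your proposal is correct and follows the paper's proof essentially step for step. The identity $x=\beta r$ makes $\hat x$ a noisy reading of the return, and a scalar Gaussian update conditional on $\y$ gives $1-\beta c_x=\theta$ and $\Var(r\mid\y,\hat x)=\theta\varsigma^2$; part (ii) is the same regrouping into $-c_x G_\gamma$, and your note that $\theta>0$ and $\gamma>0$ require $\sigma_x^2>0$ (and $\theta<1$ requires $\varsigma^2>0$) is a small point of care the paper leaves implicit.
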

\begin{proof}
By \cref{lem:price}, $x=\beta(v-p)=\beta r$, so $\hat x=\beta r+\xi_x$. Conditionally on $\y$, $(r,\hat x)$ is Gaussian with $\Cov(r,\hat x\mid\y)=\beta\varsigma^2$, $\Var(\hat x\mid\y)=\beta^2\varsigma^2+\sigma_x^2$ and $\E[\hat x\mid\y]=\beta\E[r\mid\y]$. Gaussian conditioning (\cref{lem:gauss}) gives
\[
  \E[r\mid\y,\hat x]=\E[r\mid\y]+c_x\big(\hat x-\beta\E[r\mid\y]\big)=(1-\beta c_x)\E[r\mid\y]+c_x\hat x,
\]
with $1-\beta c_x=\theta$, and $\Var(r\mid\y,\hat x)=\varsigma^2-c_x\beta\varsigma^2=\theta\varsigma^2$. For (ii), the terms in $m$ and $\hat x$ are $-\theta\kappa(\varphi_1-w^\star_1)m+c_x\hat x=-c_x(\gamma m-\hat x)$.
\end{proof}
\begin{figure}[ht]
\centering
\begin{tikzpicture}[scale=1.1]
\fill[good!15] (-3,0) rectangle (0,3);
\fill[echo!12] (0,-3) rectangle (3,0);
\fill[gray!7] (0,0) rectangle (3,3);
\fill[gray!7] (-3,-3) rectangle (0,0);
\draw[-{Stealth}, thick] (-3.2,0) -- (3.4,0) node[right, font=\small, align=left] {\SAY\\[-2pt]\scriptsize bullish words};
\draw[-{Stealth}, thick] (0,-3.2) -- (0,3.4) node[above, font=\small, align=center] {\DO\ \scriptsize (buying)};
\node[font=\scriptsize, anchor=east] at (-3.2,0) {bearish words};
\node[font=\scriptsize, anchor=north] at (0,-3.2) {selling};
\draw[dashed, gray!70!black] (-3,-3) -- (3,3) node[above right, font=\scriptsize, text=gray!70!black] {Say = Do};
\node[font=\scriptsize\bfseries, text=good!60!black, align=center] at (-1.6,2.55) {talks bearish, buys};
\node[font=\scriptsize, text=good!60!black, align=center] at (-1.6,2.15) {$\Rightarrow$ bullish signal};
\node[font=\scriptsize\bfseries, text=echo!80!black, align=center] at (1.6,-2.15) {talks bullish, sells};
\node[font=\scriptsize, text=echo!80!black, align=center] at (1.6,-2.55) {$\Rightarrow$ bearish signal};
\node[font=\scriptsize, text=gray!60!black, align=center] at (2.3,0.9) {consistent: follow};
\node[font=\scriptsize, text=gray!60!black, align=center] at (-2.3,-0.9) {consistent: follow};
\foreach \p in {(0.8,1.0),(1.5,1.2),(2.0,2.3),(-1.0,-0.8),(-2.1,-1.7),(0.3,0.6),(-1.3,-1.6),(2.3,1.6),(0.9,-0.6)}
  \fill[gray!60] \p circle (2pt);
\fill[say] (-2.1,1.2) circle (3pt);
\node[font=\scriptsize, text=say, anchor=south] at (-2.1,1.3) {a false alarm};
\draw[{Stealth}-{Stealth}, say, thick] (-2.1,1.2) -- (-0.45,-0.45);
\node[font=\scriptsize, text=say, anchor=east] at (-1.45,0.45) {gap $G$};
\end{tikzpicture}
\caption{The Say--Do plane. \Cref{thm:saydo} shows that the optimal forecast is a decreasing function of the signed distance from the line $\hat x=\gamma m$. Top-left is the false-alarm zone; bottom-right is the false-euphoria zone.}
\label{fig:saydo}
\end{figure}

\subsection{Absorption}
The \emph{absorption} of a narrative shock is the price move net of the narrative-implied move, $a\coloneqq p-\bphi^\top\y$.
\begin{theorem}[Absorption sign]\label{thm:absorption}
Under \cref{ass:signals,ass:price},
\[
  \E[r\mid\y,a]=\E[r\mid\y]+c_A\big(a-\E[a\mid\y]\big),\qquad
  c_A=\frac{\beta\,\sigma^2_{v|y}-\lambda\,\sigma_u^2}{\lambda\,\big(\beta^2\sigma^2_{v|y}+\sigma_u^2\big)} .
\]
Absorption therefore predicts \emph{continuation} in its own direction if and only if $\beta\sigma^2_{v|y}>\lambda\sigma_u^2$, and \emph{reversal} if and only if $\beta\sigma^2_{v|y}<\lambda\sigma_u^2$. In the limits, $c_A=1/(\lambda\beta)$ when $\sigma_u=0$ (absorption is sufficient for $r$) and $c_A=-1$ when $\beta=0$ (pure noise fully reverts).
\end{theorem}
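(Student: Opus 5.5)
The plan is to reduce everything to Gaussian conditioning given $\y$, as in the proof of \cref{thm:saydo}. First I will write absorption in terms of the primitives. By \cref{lem:price}, $p=\kappa(\bphi^\top\y+\lambda\beta v+\lambda u)$. Since $1-\kappa=\kappa\lambda\beta$, this gives
\[
a=p-\bphi^\top\y=\kappa\lambda\big(\beta(v-\bphi^\top\y)+u\big),
\]
which equals $\lambda(x+u)$, as it should. The same lemma gives $r=\kappa(v-\bphi^\top\y-\lambda u)$. Both $a$ and $r$ are therefore affine in $(v,u)$ with $\y$-measurable intercepts, and $(v,u,\y)$ is jointly Gaussian.

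Next I will take innovations given $\y$. Let $e\coloneqq v-\E[v\mid\y]$. This is Gaussian with variance $\sigma^2_{v|y}$, independent of $\y$ by \cref{lem:gauss}, and independent of $u$. Then
\[
a-\E[a\mid\y]=\kappa\lambda(\beta e+u),\qquad r-\E[r\mid\y]=\kappa(e-\lambda u).
\]
From these, $\Cov(r,a\mid\y)=\kappa^2\lambda(\beta\sigma^2_{v|y}-\lambda\sigma_u^2)$ and $\Var(a\mid\y)=\kappa^2\lambda^2(\beta^2\sigma^2_{v|y}+\sigma_u^2)$. Neither depends on $\y$.

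Applying the conditional form of \cref{lem:gauss} then gives $\E[r\mid\y,a]=\E[r\mid\y]+c_A(a-\E[a\mid\y])$ with $c_A=\Cov(r,a\mid\y)/\Var(a\mid\y)$. The $\kappa^2\lambda$ factors cancel, leaving exactly the stated $c_A$. Because $\lambda>0$ and the denominator is positive, the sign of $c_A$ is the sign of $\beta\sigma^2_{v|y}-\lambda\sigma_u^2$, which gives the continuation/reversal dichotomy. For the limits, setting $\sigma_u=0$ gives $c_A=\beta\sigma^2_{v|y}/(\lambda\beta^2\sigma^2_{v|y})=1/(\lambda\beta)$; here $a$ and $r$ are exact affine functions of $e$ given $\y$, so absorption is sufficient for $r$. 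Setting $\beta=0$ gives $c_A=-\lambda\sigma_u^2/(\lambda\sigma_u^2)=-1$.

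The only delicate point is nondegeneracy: \cref{lem:gauss} needs $\Var(a\mid\y)>0$, that is, $\beta^2\sigma^2_{v|y}+\sigma_u^2>0$. I will note that this holds whenever $\sigma_u>0$. In the $\sigma_u=0$ limit it instead requires $\sigma^2_{v|y}>0$, and the formula there should be read as a limit. Finally, I will remark in passing that this $a$ is not the $a=2\lambda k$ of \cref{sec:strategic}.
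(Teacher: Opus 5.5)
Your proposal is correct and follows the paper's proof essentially step for step. The paper also conditions on $\y$ and writes the innovations as $r-\E[r\mid\y]=\kappa(\tilde v-\lambda u)$ and $a-\E[a\mid\y]=\lambda\kappa(\beta\tilde v+u)$, the latter obtained from $a=\lambda\beta r+\lambda u$ rather than from the price formula as you do, and then takes the ratio of conditional covariance to conditional variance via \cref{lem:gauss}. Your nondegeneracy remark ($\Var(a\mid\y)>0$) is a point the paper leaves implicit; note only that when $\sigma_u=0$ and $\sigma^2_{v|y}>0$ the value $c_A=1/(\lambda\beta)$ holds exactly, not merely as a limit.
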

\begin{proof}
Write $\tilde v\coloneqq v-\E[v\mid\y]$, independent of $u$ given $\y$. By \cref{lem:price}, $r-\E[r\mid\y]=\kappa(\tilde v-\lambda u)$. Since $a=\lambda(x+u)=\lambda\beta r+\lambda u$,
\[
  a-\E[a\mid\y]=\lambda\beta\kappa(\tilde v-\lambda u)+\lambda u=\lambda\beta\kappa\,\tilde v+\lambda(1-\lambda\beta\kappa)u=\lambda\kappa(\beta\tilde v+u),
\]
using $1-\lambda\beta\kappa=\kappa$. Hence $\Cov(r,a\mid\y)=\lambda\kappa^2(\beta\sigma^2_{v|y}-\lambda\sigma_u^2)$ and $\Var(a\mid\y)=\lambda^2\kappa^2(\beta^2\sigma^2_{v|y}+\sigma_u^2)$, and \cref{lem:gauss} gives $c_A$ as their ratio. The limits follow by substitution.
\end{proof}
\begin{lemma}[Robustness to a misspecified multiplier]\label{lem:robust}
For any $\hat\bphi\in\R^K$, let $\hat a\coloneqq p-\hat\bphi^\top\y$. Then $\sigma(\y,\hat a)=\sigma(\y,a)$ and hence $\E[r\mid\y,\hat a]=\E[r\mid\y,a]$.
\end{lemma}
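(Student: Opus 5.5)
The plan is to show that $(\y,a)$ and $(\y,\hat a)$ are invertible transformations of each other by a map that does not depend on the realised randomness, so the two generate the same $\sigma$-algebra. Conditional expectations given equal $\sigma$-algebras coincide almost surely, which gives the second claim.

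\emph{Step 1.} By definition $a=p-\bphi^\top\y$ and $\hat a=p-\hat\bphi^\top\y$. Subtracting gives the pathwise identity
\[
  \hat a=a+(\bphi-\hat\bphi)^\top\y .
\]
The map $(\y,a)\mapsto(\y,a+(\bphi-\hat\bphi)^\top\y)$ on $\R^{K+1}$ is linear and deterministic. It is invertible, with inverse $(\y,\hat a)\mapsto(\y,\hat a-(\bphi-\hat\bphi)^\top\y)$. Both maps are Borel measurable (indeed continuous).

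\emph{Step 2.} Since $(\y,\hat a)$ is a measurable function of $(\y,a)$, we get $\sigma(\y,\hat a)\subseteq\sigma(\y,a)$. The inverse map gives the reverse inclusion, so $\sigma(\y,\hat a)=\sigma(\y,a)$. The return $r$ is integrable: by \cref{lem:price} it is a linear combination of Gaussians. So $\E[r\mid\y,\hat a]$ and $\E[r\mid\y,a]$ are conditional expectations of the same integrable variable given the same $\sigma$-algebra, and they agree almost surely.

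\emph{Step 3.} As a remark, the same argument shows that \cref{thm:absorption} holds verbatim with $\hat a$ in place of $a$. The innovation $\hat a-\E[\hat a\mid\y]$ equals $a-\E[a\mid\y]$, because the extra term $(\bphi-\hat\bphi)^\top\y$ is $\y$-measurable. So $c_A$ does not change.

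No step is a real obstacle. The only point needing care is that $\bphi$ and $\hat\bphi$ are deterministic constants, not estimated from the same sample. If $\hat\bphi$ were random, the map would need $\hat\bphi$ to be measurable with respect to information independent of, or contained in, the conditioning set. I would state that $\hat\bphi$ is fixed, as the lemma does.
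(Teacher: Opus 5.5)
Your proof is correct and takes the same route as the paper: the identity $\hat a=a+(\bphi-\hat\bphi)^\top\y$ makes $(\y,\hat a)$ and $(\y,a)$ invertible affine functions of each other, so the $\sigma$-algebras coincide and the two conditional expectations agree almost surely. Your checks on the integrability of $r$ and your Step~3 remark (that the innovation, and hence $c_A$, is unchanged) are also correct.
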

\begin{proof}
$\hat a=a+(\bphi-\hat\bphi)^\top\y$ is an invertible affine function of $a$ given $\y$, and conversely.
\end{proof}
\begin{remark}[Kyle calibration]
Plugging in the single-period equilibrium intensities of \citet{kyle1985}, $\beta=\sigma_u/\sigma_{v|y}$ and $\lambda=\sigma_{v|y}/(2\sigma_u)$, gives $\beta\sigma^2_{v|y}=\sigma_u\sigma_{v|y}>\tfrac12\sigma_u\sigma_{v|y}=\lambda\sigma_u^2$ and $c_A=\tfrac12$: absorption is informative and predicts continuation.
\end{remark}

\begin{remark}[Relation to volume--return dynamics]
\Cref{thm:absorption} is in the spirit of \citet{llorente2002}, where high-volume returns continue when informed trading dominates and reverse when hedging dominates. The difference is that we condition on the \emph{narrative}: the object is the price move net of what the public story implied. By \cref{lem:robust}, its information content survives misestimation of the crowd's multiplier.
\end{remark}

\section{Strategic speech: model and proofs}\label{app:strategic}

\begin{assumption}[Strategic institution]\label{ass:strategic}
An institution knows $v$, which has an arbitrary distribution with mean zero and variance $\sigma_v^2<\infty$. It publishes a message $m$ and trades $x$. The public observes $\tilde m=m+\varepsilon$, where $\varepsilon$ has mean zero and variance $\sigma_\varepsilon^2$ and is independent of $v$. The price is $p=\varphi\tilde m+\lambda x$, where $\varphi\ge0$ is the crowd's effective credulity and $\lambda>0$ is price impact. Misreporting costs $\tfrac k2(m-v)^2$ with $k>0$ (legal, reputational). The institution maximises
\[
  J(x,m)=\E\big[x(v-p)\mid v\big]-\tfrac k2(m-v)^2=x(v-\varphi m-\lambda x)-\tfrac k2(m-v)^2 .
\]
We write $a\coloneqq2\lambda k$ for the \emph{deterrence--impact product}.
\end{assumption}
\begin{assumption}[Echo amplification]\label{ass:amplify}
Each article moves the price by $\varphi_0$ per unit of message, and the message propagates through a Hawkes cascade with branching ratio $n\in[0,1)$. By \cref{prop:branching}, the expected cascade size is $(1-n)^{-1}$, so the effective credulity is $\varphi=\varphi_0/(1-n)$.
\end{assumption}

\thmspeech*
\begin{proof}
$J$ is quadratic in $(x,m)$ with Hessian $H=\begin{psmallmatrix}-2\lambda&-\varphi\\-\varphi&-k\end{psmallmatrix}$. Since $-2\lambda<0$, $H$ is negative definite iff $\det H=a-\varphi^2>0$. In that case $J$ is strictly concave and its unique maximiser solves the first-order conditions $v-\varphi m-2\lambda x=0$ and $-\varphi x-k(m-v)=0$. Substituting $x=(v-\varphi m)/(2\lambda)$ into the second condition gives $m(a-\varphi^2)=v(a-\varphi)$. Then $x=(1-\varphi\psi)v/(2\lambda)$, and $1-\varphi\psi=a(1-\varphi)/(a-\varphi^2)$ gives $\chi$. If $\det H<0$, $H$ has a positive eigenvalue with eigenvector $\bm d$, and $J(t\bm d)\to\infty$ as $t\to\infty$.

For the taxonomy, take $v>0$.
$\psi>0\iff a>\varphi$ (the denominator is positive); $\psi<1\iff\varphi^2<\varphi\iff\varphi<1$; and $\sgn\chi=\sgn(1-\varphi)$. When $\varphi>1$, $a>\varphi^2>\varphi$, so $\psi>1$.
\end{proof}

\thmidentity*
\begin{proof}
Given $m$, the trade maximises $x(v-\varphi m-\lambda x)$, so $x=(v-\varphi m)/(2\lambda)$, that is, $v-\varphi m=2\lambda x$. Hence $r=v-\varphi(m+\varepsilon)-\lambda x=2\lambda x-\lambda x-\varphi\varepsilon$. Since $x$ and $m$ are functions of $v$, which is independent of $\varepsilon$, we have $\Cov(\varepsilon,\tilde m)=\sigma_\varepsilon^2$, and $\E[\varepsilon\mid x]=0$. The equilibrium covariance is $\Cov(\chi v,\psi v)$, and $\psi\chi<0$ exactly in regimes (iii) and (iv) of \cref{cor:taxonomy}.
\end{proof}

\subsection{Virality}
\begin{proposition}[Manipulation dichotomy]\label{prop:virality}
Under \cref{ass:strategic,ass:amplify} with $\varphi_0\in(0,1)$, call a regime \emph{manipulative} if $\Cov(\text{Say},\text{Do})<0$.
\begin{enumerate}[label=(\roman*)]
  \item If $a<1$, the only reachable manipulative regime is the false alarm. It occurs if and only if
  $\max\{0,\,1-\varphi_0/a\}<n<1-\varphi_0/\sqrt a$.
  \item If $a>1$, the only reachable manipulative regime is exaggeration. It occurs if and only if $1-\varphi_0<n<1-\varphi_0/\sqrt a$.
  \item If $n>1-\varphi_0/\sqrt a$, no interior optimum exists.
\end{enumerate}
\end{proposition}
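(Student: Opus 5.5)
The plan is to reduce the proposition to \cref{thm:speech} by treating the effective credulity $\varphi=\varphi_0/(1-n)$ from \cref{ass:amplify} as a continuous, strictly increasing bijection from $n\in[0,1)$ onto $\varphi\in[\varphi_0,\infty)$. Every condition on $\varphi$ then translates into a condition on $n$ via $\varphi<c\iff n<1-\varphi_0/c$ for $c>0$.

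First I identify the manipulative regimes. In equilibrium, \cref{thm:identity} gives $\Cov(\text{Say},\text{Do})=\psi\chi\sigma_v^2$. This is negative exactly in the false-alarm regime ($\varphi^2<a<\varphi<1$) and the exaggeration regime ($\varphi>1$, $a>\varphi^2$), provided an interior optimum exists, which requires $a>\varphi^2$. The boundary cases $\varphi=1$ and $a=\varphi$ give $\chi=0$ or $\psi=0$, so they are not manipulative. I will assume $\sigma_v^2>0$ throughout. Part (iii) is then immediate: $n>1-\varphi_0/\sqrt a\iff\varphi>\sqrt a\iff\varphi^2>a$, and \cref{thm:speech} shows the objective is unbounded there.

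For part (i), take $a<1$. Exaggeration needs $1<\varphi<\sqrt a<1$, which is impossible, so only the false alarm can occur. Its condition is $a<\varphi<\sqrt a$, and since $a<1$ this already implies $\varphi<1$. Translating:
\begin{itemize}
  \item $\varphi<\sqrt a\iff n<1-\varphi_0/\sqrt a$;
  \item $\varphi>a\iff n>1-\varphi_0/a$;
  \item combining with $n\ge0$ gives the lower bound $\max\{0,1-\varphi_0/a\}$.
\end{itemize}
Here $n\ge 0$ should really be written as $n\in[0,1)$. I will note that the stated strict inequality $0<n$ differs from $n\ge 0$ only at the point $n=0$. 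When $\varphi_0>a$, the point $n=0$ already lies in the wedge, so I will handle it explicitly: I would either read the bound as $n\ge 0$ or note that it is harmless. The upper bound $1-\varphi_0/\sqrt a<1$ holds automatically.

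For part (ii), take $a>1$. Then $\sqrt a<a$, so the false-alarm condition $a<\varphi<\sqrt a$ is empty. Exaggeration requires $1<\varphi<\sqrt a$. This interval is nonempty because $\sqrt a>1$, and it translates to $1-\varphi_0<n<1-\varphi_0/\sqrt a$. Here $1-\varphi_0>0$ because $\varphi_0<1$, so the constraint $n\ge0$ is automatically satisfied.

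The only real obstacle is bookkeeping. I need to check that reachability (that is, $\varphi\ge\varphi_0$) is respected at each endpoint, which is where $\varphi_0\in(0,1)$ is used. I also need to treat the degenerate boundaries correctly, including the case $a=1$ and the endpoint $n=0$ discussed above.
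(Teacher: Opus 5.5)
Your proposal is correct and follows the paper's proof essentially step for step. Both arguments use the monotone reparametrisation $\varphi=\varphi_0/(1-n)$ of $[0,1)$ onto $[\varphi_0,\infty)$, note that the false alarm is incompatible with $a>1$ and exaggeration with $a<1$, translate $\varphi>a$, $\varphi>1$, $\varphi<\sqrt a$ into bounds on $n$, and read part (iii) off \cref{thm:speech}. Your endpoint remark is also right, and the paper glosses over it: when $\varphi_0^2<a<\varphi_0$ (e.g.\ $\varphi_0=0.6$, $a=0.5$) the false alarm already holds at $n=0$, so the lower bound in (i) should be non-strict on the $0$ branch.
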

\begin{proof}
$\varphi=\varphi_0/(1-n)$ increases continuously from $\varphi_0$ to $\infty$ on $[0,1)$. The false alarm requires $\varphi^2<a<\varphi<1$, hence $a<1$. Exaggeration requires $1<\varphi<\sqrt a$, hence $a>1$. The two are therefore mutually exclusive. The conditions $\varphi>a$, $\varphi>1$ and $\varphi<\sqrt a$ are equivalent to $n>1-\varphi_0/a$, $n>1-\varphi_0$ and $n<1-\varphi_0/\sqrt a$ respectively. For $a<1$ we have $\sqrt a<1$, so $\varphi<\sqrt a$ already implies $\varphi<1$. Part (iii) is \cref{thm:speech}.
\end{proof}
\begin{figure}[ht]
\centering
\begin{tikzpicture}
\begin{axis}[width=0.72\textwidth, height=5.6cm, xmin=0, xmax=0.7, ymin=-3.2, ymax=3.2,
  xlabel={\footnotesize media branching ratio $n$ (echo share)},
  ylabel={\footnotesize speech slope $\psi=\partial m^\ast/\partial v$},
  tick label style={font=\scriptsize}, unbounded coords=jump,
  legend style={font=\scriptsize, draw=none, fill=none, at={(0.02,0.02)}, anchor=south west}]
\fill[echo!15] (axis cs:0.3333,-3.2) rectangle (axis cs:0.4836,3.2);
\fill[say!15] (axis cs:0.6,-3.2) rectangle (axis cs:0.6734,3.2);
\addplot[gray, thin, forget plot] coordinates {(0,0) (0.7,0)};
\addplot[gray, thin, dashed, forget plot] coordinates {(0,1) (0.7,1)};
\addplot[echo, very thick, domain=0:0.478, samples=200, restrict y to domain=-3.3:3.3]
  {(0.6-0.4/(1-x))/(0.6-(0.4/(1-x))^2)};
\addlegendentry{$a=0.6$ (deep market)}
\addplot[do, very thick, domain=0:0.668, samples=200, restrict y to domain=-3.3:3.3]
  {(1.5-0.4/(1-x))/(1.5-(0.4/(1-x))^2)};
\addlegendentry{$a=1.5$ (shallow market)}
\node[font=\scriptsize\bfseries, text=echo!80!black, align=center] at (axis cs:0.408,2.4) {false-alarm\\ window};
\node[font=\scriptsize\bfseries, text=say!80!black, align=center] at (axis cs:0.636,-2.2) {exagger-\\ ation\\ window};
\node[font=\scriptsize, text=gray, anchor=south west] at (axis cs:0.01,1.02) {truth $\psi=1$};
\end{axis}
\end{tikzpicture}
\caption{\Cref{prop:virality} with $\varphi_0=0.4$. As the echo share rises, a deep market (small $a$) passes through a false-alarm window where the speech slope turns negative. A shallow market (large $a$) instead passes through an exaggeration window with $\psi>1$. Both windows end where no interior optimum exists.}
\label{fig:virality}
\end{figure}

\subsection{Adaptive trust}
Suppose that after each round the crowd resets its credulity to the least-squares slope of value on message under the previous round's strategy, a cobweb-type learning rule \citep{ezekiel1938,evans2001}. With noiseless messages $m_t=\psi(\varphi_t)v$, this gives
\[
  \varphi_{t+1}=\frac{\Cov(v,m_t)}{\Var(m_t)}=\frac{1}{\psi(\varphi_t)}=F_a(\varphi_t),\qquad F_a(\varphi)\coloneqq\frac{a-\varphi^2}{a-\varphi},
\]
on the admissible set $\mathcal{D}_a=\{\varphi:\varphi^2<a,\ \varphi\neq a,\ \psi(\varphi)\neq0\}$.

\thmcycles*
\begin{proof}
The theorem collects four facts: (i)~$\varphi=1$ is the unique admissible fixed point of $F_a$ when $a>1$, and there is none when $a\le1$; (ii)~$F_a'(1)=-1/(a-1)$, so truth is locally asymptotically stable iff $a>2$, that is, $\lambda k>1$; (iii)~for $1<a<2$, $F_a$ has exactly one $2$-cycle $\varphi_\pm$, with $\varphi_-<1<\varphi_+$, both points admissible, and multiplier $(5a-9)/(a-1)$, so the cycle is stable iff $\tfrac53<a<2$, that is, $\tfrac56<\lambda k<1$; along it the market alternates between shading ($\varphi_-<1$) and exaggeration ($\varphi_+>1$); (iv)~the multiplier equals $1$ at $a=2$ and $-1$ at $a=\tfrac53$ (flip bifurcations). We prove them in turn.

(i) $F_a(\varphi)=\varphi\iff a-\varphi^2=a\varphi-\varphi^2\iff\varphi=1$. Admissibility of $\varphi=1$ requires $1<a$.
(ii) $F_a'(\varphi)=(a-2a\varphi+\varphi^2)/(a-\varphi)^2$, so $F_a'(1)=(1-a)/(a-1)^2=-1/(a-1)$, and $|F_a'(1)|<1\iff a>2$.
(iii) Using $a-F_a(\varphi)=(a^2-a\varphi-a+\varphi^2)/(a-\varphi)$, a direct computation gives
\[
  F_a(F_a(\varphi))-\varphi=\frac{-a(\varphi-1)\big(2\varphi^2-2a\varphi+a^2-a\big)}{(a-\varphi)^2\,\big(a-F_a(\varphi)\big)} .
\]
The quadratic factor has roots $\varphi_\pm$, which are real and distinct from $1$ for $1<a<2$ (at $\varphi=1$ it equals $(a-1)(a-2)\neq0$). They are not fixed points, so they form the unique $2$-cycle, and $\varphi_-<1<\varphi_+$ because $\varphi_\pm=1$ only at $a\in\{1,2\}$. For admissibility, $\varphi_\pm^2=a\big(1\pm\sqrt{a(2-a)}\big)/2<a$ since $a(2-a)<1$ for $a\neq1$; $\varphi_\pm<a$ since $\sqrt{a(2-a)}<a\iff a>1$; and $\psi(\varphi_\pm)>0$.
For the multiplier, on the cycle $\varphi^2=a\varphi-(a^2-a)/2$, so the numerator of $F_a'$ is $a(3-a-2\varphi)/2$. Vieta's formulas give $\varphi_++\varphi_-=a$ and $\varphi_+\varphi_-=(a^2-a)/2$. Then $(3-a-2\varphi_+)(3-a-2\varphi_-)=5a^2-14a+9=(5a-9)(a-1)$ and $(a-\varphi_+)(a-\varphi_-)=a(a-1)/2$. Hence $F_a'(\varphi_+)F_a'(\varphi_-)=\frac{(a^2/4)(5a-9)(a-1)}{a^2(a-1)^2/4}=\frac{5a-9}{a-1}$. For $1<a<2$, $|5a-9|<a-1\iff\tfrac53<a<2$.
(iv) At $a=2$, $\sqrt{a(2-a)}=0$ and the multiplier equals $1$. At $a=\tfrac53$ it equals $-1$.
\end{proof}

\begin{remark}[Relation to reputation models]
\citet{benabou1992} obtain credibility dynamics from Bayesian updating about a random honest type. Here the dynamics arise from a deterministic cobweb in credulity, and the thresholds $\lambda k=1$ and $\lambda k=\tfrac56$ are explicit. The mechanism is the same as the period-doubling route to chaos in simple nonlinear maps \citep{may1976}. The prediction is new and testable: the rolling Say--Do covariance should be negatively autocorrelated in markets where the product of price impact and misreporting cost is moderate.
\end{remark}

\section{Measurement: proofs and additional results}\label{app:measure}

\subsection{Hawkes processes and declustering}
Let articles about an asset arrive at times $0<\tau_1<\dots<\tau_N\le T$ with sentiment marks $s_1,\dots,s_N$.

\begin{definition}[Linear Hawkes process]
A simple point process on $[0,T]$ with conditional intensity
\[
  \lambda(t)=\mu+\sum_{\tau_l<t}g(t-\tau_l),\qquad \mu>0,\ g\ge0,\ n\coloneqq\int_0^\infty g(s)\,ds<1 .
\]
For the exponential kernel $g(s)=\alpha e^{-\beta_g s}$, the branching ratio is $n=\alpha/\beta_g$.
\end{definition}

By the cluster representation \citep{hawkesoakes1974}, the process is the superposition of an immigrant Poisson stream of rate $\mu$ (\emph{news}) and, for each point $\tau_l$, an offspring Poisson stream of rate $g(\cdot-\tau_l)$ on $(\tau_l,T]$ (\emph{echo}). Let $\pi(j)\in\{0,1,\dots,j-1\}$ denote the parent of article $j$, with $\pi(j)=0$ meaning immigrant.
\begin{figure}[ht]
\centering
\begin{tikzpicture}[
  newn/.style={circle, fill=do, inner sep=2.8pt},
  echn/.style={circle, draw=echo, thick, fill=white, inner sep=2.2pt},
  trig/.style={-{Stealth[length=1.6mm]}, echo!60, thin}]
\draw[-{Stealth}, thick] (0,0) -- (12.2,0) node[right, font=\small] {time};
\node[newn] (a) at (1,0.5) {};
\node[echn] (a1) at (1.6,1.1) {};
\node[echn] (a2) at (1.9,0.6) {};
\node[echn] (a3) at (2.3,1.5) {};
\node[echn] (a4) at (2.6,0.9) {};
\node[echn] (a5) at (3.0,1.9) {};
\node[echn] (a6) at (3.3,1.2) {};
\node[echn] (a7) at (3.7,0.7) {};
\node[echn] (a8) at (4.1,1.6) {};
\draw[trig] (a) -- (a1); \draw[trig] (a) -- (a2); \draw[trig] (a1) -- (a3);
\draw[trig] (a2) -- (a4); \draw[trig] (a3) -- (a5); \draw[trig] (a4) -- (a6);
\draw[trig] (a4) -- (a7); \draw[trig] (a5) -- (a8);
\node[newn] (b) at (7.5,0.5) {};
\node[echn] (b1) at (8.1,1.0) {};
\node[echn] (b2) at (8.6,0.6) {};
\draw[trig] (b) -- (b1); \draw[trig] (b1) -- (b2);
\node[newn] (c) at (10.5,0.5) {};
\node[font=\scriptsize, text=do, anchor=north] at (1,-0.1) {downgrade};
\node[font=\scriptsize, text=echo, anchor=south] at (3.0,2.1) {copies, reposts, reaction pieces};
\node[font=\scriptsize, text=do, anchor=north] at (7.5,-0.1) {new data};
\node[font=\scriptsize, text=do, anchor=north] at (10.5,-0.1) {new data};
\begin{scope}[yshift=-3.1cm]
\draw[-{Stealth}, thick] (0,0) -- (12.2,0) node[right, font=\small] {time};
\draw[-{Stealth}, thick] (0,0) -- (0,1.9) node[right, font=\small] {story rate $\lambda(t)$};
\draw[dashed, do] (0,0.3) -- (12,0.3) node[right, font=\scriptsize, text=do] {baseline $\mu$};
\draw[echo, thick, smooth, samples=300, domain=0:12] plot (\x, {0.3
  + 0.45*(\x>1.0)*exp(-1.6*max(\x-1.0,0)) + 0.45*(\x>1.6)*exp(-1.6*max(\x-1.6,0))
  + 0.45*(\x>1.9)*exp(-1.6*max(\x-1.9,0)) + 0.45*(\x>2.3)*exp(-1.6*max(\x-2.3,0))
  + 0.45*(\x>2.6)*exp(-1.6*max(\x-2.6,0)) + 0.45*(\x>3.0)*exp(-1.6*max(\x-3.0,0))
  + 0.45*(\x>3.3)*exp(-1.6*max(\x-3.3,0)) + 0.45*(\x>3.7)*exp(-1.6*max(\x-3.7,0))
  + 0.45*(\x>4.1)*exp(-1.6*max(\x-4.1,0)) + 0.45*(\x>7.5)*exp(-1.6*max(\x-7.5,0))
  + 0.45*(\x>8.1)*exp(-1.6*max(\x-8.1,0)) + 0.45*(\x>8.6)*exp(-1.6*max(\x-8.6,0))
  + 0.45*(\x>10.5)*exp(-1.6*max(\x-10.5,0))});
\end{scope}
\node[newn] at (6.0,2.3) {}; \node[font=\scriptsize, anchor=west] at (6.15,2.3) {new information};
\node[echn] at (8.6,2.3) {}; \node[font=\scriptsize, anchor=west] at (8.75,2.3) {echo (triggered)};
\end{tikzpicture}
\caption{Echo versus news. Top: each new article (blue) triggers a cascade of echoes (red). Bottom: the intensity jumps with every arrival and decays to the baseline $\mu$. \Cref{prop:parentage} assigns each article its exact posterior probability of being news.}
\label{fig:cascade}
\end{figure}
\begin{proposition}[Posterior parentage factorises; cf.\ \citealp{zhuang2002}]\label{prop:parentage}
Conditionally on the full sample $\mathcal{T}=\{\tau_1,\dots,\tau_N\}$ on $[0,T]$, the parent labels $\pi(1),\dots,\pi(N)$ are independent, with
\[
  \Prob\big(\pi(j)=0\mid\mathcal{T}\big)=\frac{\mu}{\lambda(\tau_j)}\eqqcolon p_j,\qquad
  \Prob\big(\pi(j)=l\mid\mathcal{T}\big)=\frac{g(\tau_j-\tau_l)}{\lambda(\tau_j)},\quad l<j .
\]
\end{proposition}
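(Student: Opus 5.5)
The plan is to write down the joint density of the times together with a full labelling $(\pi(1),\dots,\pi(N))$, using the cluster representation of \citet{hawkesoakes1974}, and then observe that it is a product of per-article factors.

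\emph{Step 1: the labelled density.} The labelled process is a superposition of independent Poisson streams. These are the immigrant stream of rate $\mu$ and, for each point $\tau_l$, an offspring stream of rate $g(\cdot-\tau_l)$ on $(\tau_l,T]$. Fix a labelling with $\pi(j)<j$. Build the configuration sequentially in time. On each interval between successive arrivals, the total rate of all streams active at time $t$ is $\mu+\sum_{\tau_l<t}g(t-\tau_l)=\lambda(t)$, whatever the labels are. An arrival at $\tau_j$ coming from the specific stream $\pi(j)$ contributes the factor $\mu$ if $\pi(j)=0$ and $g(\tau_j-\tau_l)$ if $\pi(j)=l$. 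By the standard Poisson-superposition (marked point process) likelihood, the density of $(\mathcal{T},\pi)$ is therefore
\[
  \prod_{j=1}^N \rho_j(\pi(j))\cdot\exp\Big(-\int_0^T\lambda(t)\,dt\Big),\qquad \rho_j(0)=\mu,\ \ \rho_j(l)=g(\tau_j-\tau_l).
\]
The survival factor depends on $\mathcal{T}$ only, because the offspring streams of every point are integrated over $(\tau_l,T]$ regardless of who that point's parent was.

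\emph{Step 2: marginalise and condition.} Summing over all labellings, the sum of the product factorises into $\prod_j\sum_{l<j}\rho_j(l)$. Here $\sum_{l=0}^{j-1}\rho_j(l)=\lambda(\tau_j)$, which recovers the usual Hawkes likelihood $\prod_j\lambda(\tau_j)e^{-\int\lambda}$. This is a useful consistency check. Dividing the two expressions gives
\[
  \Prob(\pi\mid\mathcal{T})=\prod_j \rho_j(\pi(j))/\lambda(\tau_j).
\]
This is a product measure, so the labels are independent with exactly the stated marginals.

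\emph{Main obstacle.} The delicate point is justifying Step 1 rigorously. One must show that the labelled configuration has a density with respect to the product of Lebesgue measure and counting measure, of the stated form. In particular, one must check that the compensator of each stream does not depend on labels, and that simultaneous arrivals have probability zero. I would handle this through the multivariate (marked) point process likelihood: treat the label as a mark whose conditional intensity is $\rho_j(\cdot)$, with total intensity $\lambda(t)$. Finiteness on $[0,T]$ follows from $n<1$.
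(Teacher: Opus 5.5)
Your proposal is correct and follows essentially the paper's argument. Both write the labelled-configuration density as $e^{-\int_0^T\lambda}\prod_j\rho_{\pi(j)}(\tau_j)$ using the fact that the stream rates sum to $\lambda(t)$, sum over the product set of labellings to recover the Hawkes likelihood, and divide to obtain a product of categorical laws. The one difference is how you justify that density: your sequential, marked-point-process construction is a slightly more careful version of the paper's step, which simply takes the product of the independent Poisson stream likelihoods.
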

\begin{proof}
Write $\rho_0(t)=\mu$ and $\rho_l(t)=g(t-\tau_l)\ind\{t>\tau_l\}$. In the cluster representation, a labelled configuration is a realisation of the independent Poisson streams, so its density with respect to the reference measure of labelled configurations is the product of the Poisson likelihoods of the streams,
\[
  \prod_{k}\Big[\exp\Big(-\int_0^T\rho_k\Big)\prod_{j:\pi(j)=k}\rho_k(\tau_j)\Big]
  =\exp\Big(-\int_0^T\lambda(t)\,dt\Big)\prod_{j=1}^N\rho_{\pi(j)}(\tau_j),
\]
because $\sum_k\rho_k(t)=\lambda(t)$. Summing over labellings, the product factorises over $j$, and $\sum_k\rho_k(\tau_j)=\lambda(\tau_j)$ recovers the standard Hawkes likelihood $e^{-\int\lambda}\prod_j\lambda(\tau_j)$. Dividing, the posterior of the labelling is $\prod_j\rho_{\pi(j)}(\tau_j)/\lambda(\tau_j)$, a product of independent categorical laws.
\end{proof}
\begin{remark}
The posterior uses only the past of $\tau_j$, although it conditions on the whole sample: future arrivals are informative about \emph{their} parents, not about $\pi(j)$. The echo split is therefore causal and free of look-ahead \citep[cf.][]{zhuang2002}.
\end{remark}
\begin{proposition}[Branching ratio is the long-run echo share]\label{prop:branching}
For the stationary process ($n<1$), the mean intensity is $\Lambda=\mu/(1-n)$, and the long-run fraction of articles that are echoes equals $n$.
\end{proposition}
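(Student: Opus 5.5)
The plan is to prove the mean-intensity formula by a stationarity fixed-point argument, and the echo share by a branching (cluster) count.

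\emph{Mean intensity.} Take expectations in the conditional intensity. By stationarity, $\E\lambda(t)=\Lambda$ for all $t$. By the compensator property, $\E\sum_{\tau_l<t}g(t-\tau_l)=\int_{-\infty}^t g(t-s)\,\Lambda\,ds=n\Lambda$. This gives $\Lambda=\mu+n\Lambda$, hence $\Lambda=\mu/(1-n)$. For this to be legitimate, the stationary version must exist with $\Lambda<\infty$. I would not assume this; I would get it from the cluster representation \citep{hawkesoakes1974}. Each immigrant generates a Galton--Watson family in which every member has a Poisson$(n)$ number of children. The expected total family size is $\sum_{g\ge0}n^g=(1-n)^{-1}$, which is finite because $n<1$. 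Immigrants arrive at rate $\mu$, so by the Campbell formula the superposition of clusters has intensity $\mu/(1-n)$. This is the same as saying that the mean cluster size is the quantity used in \cref{ass:amplify}.

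\emph{Echo share.} Immigrants (news) arrive at rate $\mu$ and all points arrive at rate $\Lambda$. So the echo rate is $\Lambda-\mu=n\mu/(1-n)$, and the ratio of echoes to all points is $(\Lambda-\mu)/\Lambda=n$. To interpret this as a long-run fraction of articles, I would invoke ergodicity of the stationary Hawkes process: it is a Poisson cluster process with i.i.d.\ clusters that are almost surely finite, hence mixing. Then $N_{\text{echo}}[0,T]/N[0,T]\to(\Lambda-\mu)/\Lambda$ almost surely. Alternatively, I would argue cluster by cluster: each immigrant contributes $S-1$ echoes, where $\E S=(1-n)^{-1}$. The renewal-reward law of large numbers then gives a long-run share of $\E(S-1)/\E S=n$, with edge effects from clusters straddling $T$ vanishing because $\E S<\infty$.

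I expect the main obstacle to be the justification step, namely ergodicity and negligible boundary effects, rather than the algebra. I would handle it with the cluster-by-cluster law of large numbers, since it only needs the finiteness of $\E S$.
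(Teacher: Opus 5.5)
Your proposal is correct and matches the paper's proof. The paper likewise counts each immigrant's Galton--Watson cluster (expected size $(1-n)^{-1}$) to get $\Lambda=\mu/(1-n)$, notes the equivalent stationarity identity $\Lambda=\mu+n\Lambda$, and reads off the immigrant fraction $\mu/\Lambda=1-n$, while your ergodicity or cluster-wise law-of-large-numbers step, which turns the ratio of rates into an almost-sure long-run fraction, is a justification the paper leaves implicit.
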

\begin{proof}
Each immigrant founds a Galton--Watson cluster whose generations have expected sizes $1,n,n^2,\dots$, so its expected total size is $(1-n)^{-1}$. Immigrants arrive at rate $\mu$, so articles arrive at rate $\Lambda=\mu/(1-n)$ (equivalently, stationarity gives $\Lambda=\mu+\Lambda\int g$). The immigrant fraction is $\mu/\Lambda=1-n$.
\end{proof}
\begin{proposition}[MMSE news--echo split]\label{prop:mmse}
Assume that, given $\mathcal{T}$, the marks are independent of the labels. Then the minimum-mean-square-error estimates of the news and echo sentiment totals are
\[
  s^{\text{new}}\coloneqq\E\Big[\sum_j\ind\{\pi(j)=0\}s_j\,\Big|\,\mathcal{T},\bm s\Big]=\sum_j p_j s_j,\qquad
  s^{\text{echo}}=\sum_j(1-p_j)s_j .
\]
\end{proposition}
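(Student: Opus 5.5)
The proof is linearity of conditional expectation combined with the factorised posterior of \cref{prop:parentage}. Write the news total as $N^{\text{new}}\coloneqq\sum_j\ind\{\pi(j)=0\}s_j$. Conditioning on the $\sigma$-field generated by $(\mathcal{T},\bm s)$, each $s_j$ is measurable and can be pulled out, which gives
\[
  \E\big[N^{\text{new}}\mid\mathcal{T},\bm s\big]=\sum_j s_j\,\Prob\big(\pi(j)=0\mid\mathcal{T},\bm s\big).
\]
The whole argument therefore reduces to identifying $\Prob(\pi(j)=0\mid\mathcal{T},\bm s)$ with $p_j=\mu/\lambda(\tau_j)$.

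For that step I use the hypothesis that, given $\mathcal{T}$, the marks $\bm s$ are independent of the labels $(\pi(1),\dots,\pi(N))$. Conditional independence gives $\Prob(\pi\in\cdot\mid\mathcal{T},\bm s)=\Prob(\pi\in\cdot\mid\mathcal{T})$; formally this is Bayes' rule with a mark likelihood that does not depend on the labelling and so cancels. \Cref{prop:parentage} then gives the marginal $\Prob(\pi(j)=0\mid\mathcal{T})=p_j$. Independence across $j$ is not needed, since only marginals enter a linear functional.

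Second, MMSE optimality is the standard fact that the conditional expectation minimises mean-square error among all $(\mathcal{T},\bm s)$-measurable estimators, provided $N^{\text{new}}$ has finite second moment. I will state this as an integrability condition, for example bounded sentiments together with $\E N^2<\infty$ for a stationary Hawkes process with $n<1$. If square-integrability is not assumed, I will instead interpret ``MMSE'' as the $L^2$ projection, which is legitimate whenever the second moment is finite.

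Third, the echo total follows from the identity $\sum_j\ind\{\pi(j)\neq0\}s_j=\sum_js_j-N^{\text{new}}$. Its first term is measurable, so $s^{\text{echo}}=\sum_js_j-s^{\text{new}}=\sum_j(1-p_j)s_j$.

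The only delicate point is justifying conditional independence precisely. In the cluster construction, the mark assumption has to be read as saying that the joint law factorises as $P(\mathcal{T},\pi)\,P(\bm s\mid\mathcal{T})$. I will state that reading explicitly, and note that it fails in the simulator, where echoes copy their parents' sentiment. That failure is why \cref{thm:semantic} replaces $p_j$ with the mark-aware $p^{\text{sem}}_j$.
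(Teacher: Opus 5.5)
Your proposal is correct and matches the paper's argument: you pull the $(\mathcal{T},\bm s)$-measurable $s_j$ out by linearity, use the conditional-independence hypothesis to replace $\Prob(\pi(j)=0\mid\mathcal{T},\bm s)$ by $\Prob(\pi(j)=0\mid\mathcal{T})$, and read off $p_j$ from \cref{prop:parentage}. The paper leaves the echo complement and the square-integrability behind ``MMSE'' implicit, so spelling them out is a harmless addition, although your fallback of reinterpreting MMSE as an $L^2$ projection is circular, since that projection also needs a finite second moment (which holds anyway for bounded sentiments on $[0,T]$).
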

\begin{proof}
Conditional expectation minimises mean-square error. By linearity and the assumption, $\E[\ind\{\pi(j)=0\}s_j\mid\mathcal{T},\bm s]=s_j\Prob(\pi(j)=0\mid\mathcal{T})=p_js_j$, using \cref{prop:parentage}.
\end{proof}

\subsection{Semantic declustering}
\begin{assumption}[Semantically marked cascade]\label{ass:marked}
In the cluster representation, immigrant articles carry embeddings $\z\sim f_0$ independently, and an offspring of article $l$ carries $\z\sim f(\cdot\mid\z_l)$, for example the von Mises--Fisher density $f(\z\mid\z_l)=C_D(\varkappa)\,e^{\varkappa\langle\z,\z_l\rangle}$ on $S^{D-1}$. Marks are conditionally independent given the parent labels.
\end{assumption}

\thmsemantic*
\begin{proof}
As in \cref{prop:parentage}, the density of a labelled configuration is the product of the stream likelihoods times the mark densities, $e^{-\int_0^T\lambda}\prod_j\rho_{\pi(j)}(\tau_j)\,f_{\pi(j)}(\z_j)$, where $f_0$ is the immigrant mark density and $f_l=f(\cdot\mid\z_l)$. This factorises over $j$; summing over labellings gives $e^{-\int\lambda}\prod_j\Lambda_j$, and the posterior is the product of the stated categorical laws. For (i), marks are now conditioned on, so no independence assumption is needed: $\E[\ind\{\pi(j)=0\}s_j\mid\mathcal{T},\z]=s_j\,p^{\text{sem}}_j$ whenever $s_j$ is a function of $\z_j$, as it is for $s_j=\langle\w_s^\perp,\z_j\rangle$. For (ii), integrating the marks out sequentially shows that the timing-only posterior is that of \cref{prop:parentage}, and conditioning does not increase expected entropy.
\end{proof}
\begin{figure}[ht]
\centering
\begin{tikzpicture}[font=\small]
\draw[-{Stealth}, thick] (0,0) -- (9.0,0) node[right, font=\scriptsize] {time};
\node[circle, fill=do, inner sep=2.6pt, label={[font=\scriptsize, align=center]above:{B: earnings warning\\ \textit{6 hours ago}}}] (B) at (0.9,0) {};
\node[circle, fill=gray!60, inner sep=2.6pt, label={[font=\scriptsize, align=center]above:{A: unrelated\\ macro story\\ \textit{20 min ago}}}] (A) at (4.6,0) {};
\node[circle, draw=echo, very thick, fill=white, inner sep=2.4pt, label={[font=\scriptsize, align=center, text=echo]above:{C: new article,\\ near-copy of B}}] (C) at (7.4,0) {};
\draw[-{Stealth[length=2mm]}, thick, gray, dashed] (A) to[out=-35,in=-145] node[below, font=\scriptsize, text=gray] {timing} (C);
\draw[-{Stealth[length=2mm]}, very thick, echo] (B) to[out=-30,in=-150] node[below, font=\scriptsize, text=echo] {meaning} (C);
\begin{scope}[xshift=10.9cm, yshift=-1.3cm, scale=1.15]
\node[font=\scriptsize\bfseries, anchor=south] at (1.1,2.35) {Parent of C};
\foreach \lab/\x in {news/0, A/0.75, B/1.5} {\node[font=\scriptsize, anchor=north] at (\x+0.3,0) {\lab};}
\draw (0,0) -- (2.4,0);
\fill[gray!45] (0.05,0) rectangle (0.3,0.60);  \fill[echo!80] (0.3,0) rectangle (0.55,0.10);
\fill[gray!45] (0.80,0) rectangle (1.05,1.10); \fill[echo!80] (1.05,0) rectangle (1.30,0.14);
\fill[gray!45] (1.55,0) rectangle (1.80,0.30); \fill[echo!80] (1.80,0) rectangle (2.05,1.76);
\fill[gray!45] (0,-0.75) rectangle (0.2,-0.6); \node[font=\scriptsize, anchor=west] at (0.2,-0.675) {timing only};
\fill[echo!80] (0,-1.05) rectangle (0.2,-0.9); \node[font=\scriptsize, anchor=west] at (0.2,-0.975) {timing + meaning};
\end{scope}
\end{tikzpicture}
\caption{Semantic declustering (\cref{thm:semantic}), illustrative numbers. Timing alone attributes article C to the most recent story A. Adding embedding similarity attributes it to B, of which it is a near-copy. C is thus correctly classified as echo of B, and its sentiment enters $s^{\text{echo}}$.}
\label{fig:semantic}
\end{figure}

\subsection{Return-aligned embeddings}
\thmracl*
\begin{proof}
For each anchor, $-\sum_kq_{jk}\log p_{jk}=H(q_{j\cdot})+\KL(q_{j\cdot}\|p_{j\cdot})\ge H(q_{j\cdot})$ by Gibbs' inequality, with equality iff $p_{j\cdot}=q_{j\cdot}$. Two softmax vectors coincide iff their logits differ by a constant, so equality holds iff $S_{jk}/\tau+D_{jk}/2b^2=c_j/\tau$ for all $k\neq j$. By symmetry of $S$ and $D$, for every pair $j\neq k$, $c_j=S_{jk}+\tfrac{\tau}{2b^2}D_{jk}=S_{kj}+\tfrac{\tau}{2b^2}D_{kj}=c_k$, so all $c_j$ equal a common $c$. The converse is immediate. On the sphere, $\|\z_j-\z_k\|^2=2-2S_{jk}$.
The neighbour statement follows because $\|\z_j-\z_k\|^2$ is an increasing affine function of $\|\bm\rho_j-\bm\rho_k\|^2$.
\end{proof}

\thmrobustnn*
\begin{proof}
Throughout, $p_{j\cdot}$ is a softmax of $S_{j\cdot}/\tau$, so $p_{jk}>p_{jl}$ is equivalent to $S_{jk}>S_{jl}$, and on the sphere $\|\z_j-\z_k\|^2=2-2S_{jk}$. It therefore suffices to show $p_{jk}>p_{jl}$.

(i) By Pinsker's inequality, $\sum_{k}|p_{jk}-q_{jk}|\le\sqrt{2\epsilon_j}$. Hence $p_{jk}-p_{jl}\ge(q_{jk}-q_{jl})-|p_{jk}-q_{jk}|-|p_{jl}-q_{jl}|>0$.

(ii) Merge all outcomes other than $k$ and $l$ into one cell. By the data-processing inequality, the three-cell divergence satisfies $\KL_3(q\|p)\le\epsilon_j$. Consider the set $\{p: p_k\le p_l\}$. The divergence is convex in $p$, and its unconstrained minimiser $p=q$ lies outside this set because $q_{jk}>q_{jl}$. The constrained minimum is therefore attained on the boundary $p_k=p_l=t$, with $p_{\text{rest}}=1-2t$. Setting the derivative in $t$ to zero gives $t=(q_{jk}+q_{jl})/2$ and $p_{\text{rest}}=q_{\text{rest}}$, where the divergence equals $\Delta_{jkl}$. So every $p$ with $p_{jk}\le p_{jl}$ has $\KL_3(q\|p)\ge\Delta_{jkl}>\epsilon_j$, which is impossible. The boundary point with $p_{\text{rest}}$ split in proportion to $q$ shows that the bound is attained, which gives tightness.

For (i)$\Rightarrow$(ii), write $\pi=q_{jk}/(q_{jk}+q_{jl})$. Then $\Delta_{jkl}=(q_{jk}+q_{jl})\,\KL\big(\text{Bern}(\pi)\,\|\,\text{Bern}(\tfrac12)\big)\ge 2(q_{jk}+q_{jl})(\pi-\tfrac12)^2=\frac{(q_{jk}-q_{jl})^2}{2(q_{jk}+q_{jl})}\ge\frac{(q_{jk}-q_{jl})^2}{2}$, by Pinsker for two points and $q_{jk}+q_{jl}\le1$.
\end{proof}

\begin{remark}[Realisability and regularisation]
Equality requires the matrix with unit diagonal and off-diagonal entries $c-\frac{\tau}{2b^2}D_{jk}$ to be positive semidefinite of rank at most $D$. Otherwise the minimiser is the closest realisable geometry in the KL sense. In practice we add a semantic anchor $\sum_j\KL(p^{(0)}_{j\cdot}\|p_{j\cdot})$ to the frozen encoder, which interpolates between topic geometry and outcome geometry.
\end{remark}
\begin{figure}[ht]
\centering
\begin{tikzpicture}[scale=1.12,
  cr/.style={circle, fill=echo!80, inner sep=2pt},
  cg/.style={circle, fill=good!80, inner sep=2pt},
  sr/.style={rectangle, fill=echo!80, inner sep=2.2pt},
  sg/.style={rectangle, fill=good!80, inner sep=2.2pt}]
\node[font=\small\bfseries] at (2.3,3.4) {Generic embedding};
\node[font=\scriptsize, text=gray!70!black] at (2.3,3.05) {organised by \emph{topic}};
\draw[dashed, gray] (1.3,2.2) ellipse (0.95 and 0.6);
\draw[dashed, gray] (3.5,0.9) ellipse (0.95 and 0.6);
\node[font=\scriptsize, text=gray!70!black] at (1.3,1.4) {oil stories};
\node[font=\scriptsize, text=gray!70!black] at (3.5,0.1) {bank stories};
\foreach \p in {(1.0,2.5),(1.6,2.0),(1.2,1.9)} \node[cr] at \p {};
\foreach \p in {(1.5,2.5),(0.8,2.1),(1.9,2.3)} \node[cg] at \p {};
\foreach \p in {(3.2,1.2),(3.8,0.7),(3.4,0.6)} \node[sr] at \p {};
\foreach \p in {(3.7,1.2),(3.0,0.8),(4.0,1.0)} \node[sg] at \p {};
\draw[-{Stealth[length=3mm]}, very thick, say] (4.9,2.3) -- (6.6,2.3) node[midway, above, font=\scriptsize, align=center] {fine-tune on\\ market outcomes};
\node[font=\small\bfseries] at (9.0,3.4) {Market-aligned embedding};
\node[font=\scriptsize, text=gray!70!black] at (9.0,3.05) {organised by \emph{consequence}};
\draw[dashed, echo] (7.9,2.2) ellipse (0.95 and 0.6);
\draw[dashed, good] (10.1,0.9) ellipse (0.95 and 0.6);
\node[font=\scriptsize, text=echo, anchor=north] at (7.9,1.55) {news that got overdone};
\node[font=\scriptsize, text=good!70!black] at (10.1,0.1) {news that stuck};
\foreach \p in {(7.5,2.4),(8.2,2.0),(7.8,1.9)} \node[cr] at \p {};
\foreach \p in {(8.1,2.5),(7.4,2.0),(8.4,2.3)} \node[sr] at \p {};
\foreach \p in {(9.8,1.2),(10.4,0.7),(10.0,0.6)} \node[cg] at \p {};
\foreach \p in {(10.3,1.2),(9.6,0.8),(10.6,1.0)} \node[sg] at \p {};
\node[cr] at (1.0,-0.6) {}; \node[font=\scriptsize, anchor=west] at (1.15,-0.6) {price later reversed};
\node[cg] at (4.0,-0.6) {}; \node[font=\scriptsize, anchor=west] at (4.15,-0.6) {price kept moving};
\node[circle, draw, inner sep=2pt] at (7.0,-0.6) {}; \node[font=\scriptsize, anchor=west] at (7.15,-0.6) {oil story};
\node[rectangle, draw, inner sep=2.2pt] at (9.0,-0.6) {}; \node[font=\scriptsize, anchor=west] at (9.15,-0.6) {bank story};
\end{tikzpicture}
\caption{\Cref{thm:racl} in pictures. Before fine-tuning, stories cluster by subject. At the optimum, distances mirror market aftermaths, so ``news that got overdone'' becomes a region of the map.}
\label{fig:finetune}
\end{figure}

\subsection{Topic-orthogonal sentiment, surprise and analog forecasting}
\begin{lemma}[Topic invariance]\label{lem:topic}
Let $U\in\R^{D\times r}$ have orthonormal columns spanning topic directions and $\w_s^\perp\coloneqq(I-UU^\top)\w_s$. Then (i) $\langle\w_s^\perp,\z+U\bm t\rangle=\langle\w_s^\perp,\z\rangle$ for all $\bm t\in\R^r$, and (ii) $\w_s^\perp=\arg\min\{\|\w-\w_s\|:U^\top\w=\bm 0\}$.
\end{lemma}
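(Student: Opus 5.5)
The plan is to treat the two parts of \cref{lem:topic} separately, since both reduce to elementary properties of the orthogonal projector $P\coloneqq I-UU^\top$. Because $U$ has orthonormal columns, $U^\top U=I_r$. From this I will check that $P$ is symmetric and idempotent ($P^2=I-2UU^\top+UU^\top UU^\top=P$) and that $PU=U-UU^\top U=0$. These three facts are the only ingredients.

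For (i), I expand $\langle\w_s^\perp,\z+U\bm t\rangle=\langle\w_s^\perp,\z\rangle+\langle P\w_s,U\bm t\rangle$. Symmetry of $P$ gives $\langle P\w_s,U\bm t\rangle=\w_s^\top PU\bm t$, which is $0$ because $PU=0$. In words, $\w_s^\perp$ is orthogonal to every topic direction, so shifting an embedding along the topic span leaves its score unchanged. This holds for every $\bm t$ and every $\z$. I read the statement as working in $\R^D$ without renormalising to the sphere; if one renormalised, the identity would only hold up to scale.

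For (ii), the feasible set $\{\w:U^\top\w=\bm 0\}$ is the orthogonal complement of $\operatorname{col}(U)$, which is a linear subspace. I first show that $\w_s^\perp$ is feasible: $U^\top P\w_s=(PU)^\top\w_s=0$. Next I show that the residual $\w_s-\w_s^\perp=UU^\top\w_s$ lies in $\operatorname{col}(U)$ and is therefore orthogonal to every feasible $\w$. Then for any feasible $\w$, Pythagoras gives
\[
\|\w-\w_s\|^2=\|\w-\w_s^\perp\|^2+\|\w_s^\perp-\w_s\|^2 .
\]
The cross term vanishes because $\w-\w_s^\perp$ is feasible and $\w_s^\perp-\w_s\in\operatorname{col}(U)$. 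So the objective is minimised exactly at $\w=\w_s^\perp$, and the minimiser is unique.

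Nothing here is a real obstacle. The only point needing care is to use orthonormality ($U^\top U=I$) explicitly. Without it, $I-UU^\top$ is not a projector, and both the invariance in (i) and the minimisation in (ii) would fail.
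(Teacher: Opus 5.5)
Your proposal is correct and takes essentially the same route as the paper. Part (i) reduces to $U^\top\w_s^\perp=\bm 0$, which is your $PU=0$ with $P$ symmetric. For part (ii), the paper simply asserts that $\w_s^\perp$ is the orthogonal projection of $\w_s$ onto $\ker U^\top$, and your Pythagorean argument (feasibility, the residual $UU^\top\w_s$ lying in $\operatorname{col}(U)$, uniqueness of the minimiser) just spells out that one-line claim.
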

\begin{proof}
(i) $U^\top\w_s^\perp=U^\top\w_s-U^\top UU^\top\w_s=\bm 0$. (ii) $\w_s^\perp$ is the orthogonal projection of $\w_s$ onto $\ker U^\top$.
\end{proof}
A causal sequence model predicts the next embedding $\hat\z_t$ from past coverage and market state; the innovation is $\bm\varepsilon_t=\z_t-\hat\z_t$.
\begin{proposition}[Calibrated surprise]\label{prop:surprise}
If the predictive law is $\z_t\mid\F_{t^-}\sim\N(\hat\z_t,\Sigma)$ with $\Sigma\succ0$, then $S_t\coloneqq\bm\varepsilon_t^\top\Sigma^{-1}\bm\varepsilon_t\sim\chi^2_D$ and $s_t/\sqrt{\w_s^{\perp\top}\Sigma\w_s^\perp}\sim\N(0,1)$ with $s_t\coloneqq\langle\w_s^\perp,\bm\varepsilon_t\rangle$. Moreover $S_t=-2\log p(\z_t\mid\F_{t^-})-D\log2\pi-\log\det\Sigma$: Mahalanobis surprise is Shannon surprisal up to constants.
\end{proposition}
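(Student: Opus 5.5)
The plan is to use the fact that, conditionally on $\F_{t^-}$, the innovation $\bm\varepsilon_t=\z_t-\hat\z_t$ is exactly $\N(\bm 0,\Sigma)$. Each of the three claims then becomes a standard Gaussian identity, applied conditionally. Since the conditional laws obtained do not depend on $\F_{t^-}$, they also hold unconditionally. Because $\hat\z_t$ is $\F_{t^-}$-measurable, we may treat it as a constant given the past; this is the only place where causality of the sequence model enters.

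\emph{Step 1, the $\chi^2_D$ law.} Since $\Sigma\succ0$, take its symmetric square root and set $\bm\eta=\Sigma^{-1/2}\bm\varepsilon_t$. Conditionally on $\F_{t^-}$, $\bm\eta\sim\N(\bm 0,I_D)$ because $\Var(\bm\eta)=\Sigma^{-1/2}\Sigma\Sigma^{-1/2}=I_D$. Then $S_t=\|\bm\eta\|^2$ is a sum of $D$ independent squared standard normals, so $S_t\sim\chi^2_D$. Since this conditional law does not depend on the past, it is also the marginal law.

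\emph{Step 2, the sentiment innovation.} The statistic $s_t=\langle\w_s^\perp,\bm\varepsilon_t\rangle$ is a linear functional of a Gaussian vector. So given $\F_{t^-}$ it is Gaussian with mean $0$ and variance $\w_s^{\perp\top}\Sigma\w_s^\perp$. Dividing by the square root of this variance gives $\N(0,1)$, provided $\w_s^\perp\neq\bm 0$; I will state that nondegeneracy caveat explicitly. Under this standardisation, \cref{lem:topic} says topic shifts along $U$ do not change $s_t$. That point is not needed for the distributional claim itself, but I may mention it.

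\emph{Step 3, the surprisal identity.} Write out the Gaussian density:
\[
  \log p(\z_t\mid\F_{t^-})=-\tfrac D2\log 2\pi-\tfrac12\log\det\Sigma-\tfrac12 S_t .
\]
Multiplying by $-2$ and rearranging gives the stated expression.

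None of the steps is a real obstacle. The only care needed is measure-theoretic bookkeeping: stating that the claims hold conditionally on $\F_{t^-}$, and hence marginally because the conditional law is constant. The rest is routine.
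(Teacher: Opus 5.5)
Your proposal is correct and follows the paper's proof essentially step for step: whiten with $\Sigma^{-1/2}$ to get the $\chi^2_D$ law, use the Gaussian law of a linear projection for $s_t$, and read the surprisal identity off the Gaussian log-density. Two of your additions are worthwhile tightenings that the paper leaves implicit: the caveat that $\w_s^\perp\neq\bm 0$, so the standardisation is defined, and the remark that the conditional laws do not depend on $\F_{t^-}$ and therefore also hold marginally.
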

\begin{proof}
$\Sigma^{-1/2}\bm\varepsilon_t\sim\N(\bm 0,I_D)$, so $S_t$ is a sum of $D$ squared independent standard normals. The projection of a Gaussian vector is Gaussian with variance $\w^\top\Sigma\w$. The last identity is the Gaussian log-density.
\end{proof}
\begin{definition}
Given a story $\z$ and a market state $\bm\zeta$, the \emph{Market Interpretation Function} $\mathcal{M}(\z,\bm\zeta)$ is the conditional law of the abnormal return over the reaction window, with mean $\mu(\z,\bm\zeta)$.
\end{definition}

The \emph{d\'ej\`a vu} estimator averages the reactions of historical analogues with weights $k_j\ge0$, normalised as $\omega_j=k_j/\sum_lk_l$: $\hat\mu(\z)=\sum_j\omega_j\mathrm{AR}_j$.
\begin{proposition}[Bias--variance of analog forecasting]\label{prop:analog}
Suppose $\mathrm{AR}_j=\mu(\z_j)+e_j$ with $\E[e_j\mid\z_{1:N}]=0$, conditionally uncorrelated errors of variance at most $\sigma^2$, and $\mu$ $L$-Lipschitz for a metric $d$. Then
\[
  \big|\E[\hat\mu(\z)\mid\z_{1:N}]-\mu(\z)\big|\le L\sum_j\omega_jd(\z,\z_j),
\]
\[
  \Var\big(\hat\mu(\z)\mid\z_{1:N}\big)\le\frac{\sigma^2}{N_{\text{eff}}},\quad N_{\text{eff}}\coloneqq\Big(\sum_j\omega_j^2\Big)^{-1}.
\]
\end{proposition}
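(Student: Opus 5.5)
The plan is to work conditionally on the embeddings $\z_{1:N}$ throughout, so that the weights $\omega_j$ are fixed constants, and then use linearity of conditional expectation for the bias and the conditional covariance structure of the errors for the variance.

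\emph{Bias.} Write $\hat\mu(\z)=\sum_j\omega_j\mu(\z_j)+\sum_j\omega_je_j$. Since $\E[e_j\mid\z_{1:N}]=0$, the conditional mean is $\E[\hat\mu(\z)\mid\z_{1:N}]=\sum_j\omega_j\mu(\z_j)$. The weights are nonnegative and sum to one, so $\mu(\z)=\sum_j\omega_j\mu(\z)$. Subtracting and using the triangle inequality gives
\[
\big|\E[\hat\mu(\z)\mid\z_{1:N}]-\mu(\z)\big|\le\sum_j\omega_j|\mu(\z_j)-\mu(\z)|.
\]
Bounding each term by the Lipschitz property, $|\mu(\z_j)-\mu(\z)|\le L\,d(\z,\z_j)$, yields the stated bound.

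\emph{Variance.} Conditionally on $\z_{1:N}$, the deterministic part $\sum_j\omega_j\mu(\z_j)$ contributes no variance. Because the errors are conditionally uncorrelated,
\[
\Var(\hat\mu(\z)\mid\z_{1:N})=\sum_j\omega_j^2\Var(e_j\mid\z_{1:N})\le\sigma^2\sum_j\omega_j^2=\sigma^2/N_{\text{eff}}.
\]

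There is no real obstacle here. The one point that needs care is that the weights $k_j$, and hence $\omega_j$, may depend on $\z$ and on $\z_{1:N}$ but \emph{not} on the returns $\mathrm{AR}_{1:N}$; this is what lets them be treated as constants under the conditioning. I would state this explicitly, along with the convention that the query $\z$ is fixed or is itself included in the conditioning.
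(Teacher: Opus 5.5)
Your proposal is correct and follows the paper's proof essentially verbatim: the bias is $\sum_j\omega_j(\mu(\z_j)-\mu(\z))$ because the weights sum to one, each term is bounded by Lipschitz continuity, and conditional uncorrelatedness gives $\Var(\hat\mu(\z)\mid\z_{1:N})=\sum_j\omega_j^2\Var(e_j\mid\z_{1:N})\le\sigma^2\sum_j\omega_j^2$. Your remark that the weights may depend on $\z$ and $\z_{1:N}$ but not on the returns makes explicit an assumption the paper leaves implicit.
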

\begin{proof}
The bias is $\sum_j\omega_j(\mu(\z_j)-\mu(\z))$, bounded termwise by Lipschitz continuity. The variance is $\sum_j\omega_j^2\Var(e_j\mid\cdot)\le\sigma^2\sum_j\omega_j^2$.
\end{proof}

\subsection{L\'evy area}
\begin{definition}[L\'evy area]
For $C^1$ paths $X,Y:[0,T]\to\R$,
\[
  \mathcal{A}_T(X,Y)=\tfrac12\int_0^T(X_u-X_0)\,dY_u-(Y_u-Y_0)\,dX_u=\tfrac12\big(S^{XY}_T-S^{YX}_T\big),
\]
where $S^{XY}_T=\int_{0<s<t<T}dX_s\,dY_t$ is a level-two signature term \citep{lyons1998,chevyrev2016}.
\end{definition}
\begin{proposition}[Structural properties]\label{prop:levyprops}
(i) $\mathcal{A}(X,Y)=-\mathcal{A}(Y,X)$; (ii) $\mathcal{A}(aX+c,bY+d)=ab\,\mathcal{A}(X,Y)$; (iii) $\mathcal{A}$ is invariant under $C^1$ increasing reparametrisations of time; (iv) if $(X,Y)$ traces a positively oriented simple closed curve, $\mathcal{A}$ equals the enclosed area; (v) for a piecewise-linear path through $(X_i,Y_i)_{i=0}^n$,
\[
  \mathcal{A}=\tfrac12\sum_{i=0}^{n-1}\big[(X_i-X_0)(Y_{i+1}-Y_i)-(Y_i-Y_0)(X_{i+1}-X_i)\big].
\]
\end{proposition}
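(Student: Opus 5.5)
Each of the five items follows from writing the area as a difference of two iterated integrals, $\mathcal{A}_T=\tfrac12\int_0^T(X_u-X_0)\dot Y_u\,du-\tfrac12\int_0^T(Y_u-Y_0)\dot X_u\,du$, which is legitimate because both paths are $C^1$. I would then treat (i)--(v) in order.

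\emph{(i), (ii).} Swapping $X$ and $Y$ swaps the two integrals and so flips the sign, which gives (i). For (ii), replacing $X$ by $aX+c$ and $Y$ by $bY+d$ cancels the additive constants in the increments $X_u-X_0$ and $Y_u-Y_0$. It also multiplies the increments by $a$ and $b$ and the derivatives by $b$ and $a$. Each integral therefore scales by $ab$.

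\emph{(iii).} Let $\theta:[0,T']\to[0,T]$ be a $C^1$ increasing bijection, and set $\tilde X=X\circ\theta$, $\tilde Y=Y\circ\theta$. Then $\tilde X_0=X_0$ and $d\tilde Y_s=\dot Y_{\theta(s)}\theta'(s)\,ds$. Substituting $u=\theta(s)$ turns each integral for the reparametrised paths into the original one. The same holds for the iterated-integral form $S^{XY}$, so the two expressions for $\mathcal{A}$ are invariant simultaneously.

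\emph{(iv).} For a closed curve, $X_T=X_0$ and $Y_T=Y_0$. Expanding gives $\mathcal{A}=\tfrac12\oint(X\,dY-Y\,dX)-\tfrac12X_0\oint dY+\tfrac12Y_0\oint dX$, and the last two terms vanish because the curve is closed. By Green's theorem, with $P=-Y/2$ and $Q=X/2$, we get $\partial_xQ-\partial_yP=1$, so for a positively oriented simple closed $C^1$ curve the remaining term equals the enclosed area. I expect this step to be the only real obstacle, because Green's theorem has a regularity hypothesis. It is standard for piecewise $C^1$ Jordan curves, which covers the $C^1$ setting here, and I would cite it rather than reprove it.

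\emph{(v).} Here the path is piecewise linear, so I would not use a global derivative but treat each segment as a $C^1$ piece; the integrals are additive over segments. On segment $i$, write $X_u=X_i+s\,\Delta X_i$ and $Y_u=Y_i+s\,\Delta Y_i$ for $s\in[0,1]$. The contribution of that segment is then
\[
\tfrac12\int_0^1\big[(X_i-X_0+s\Delta X_i)\Delta Y_i-(Y_i-Y_0+s\Delta Y_i)\Delta X_i\big]ds .
\]
The terms in $s$ cancel, since they are $s\,\Delta X_i\Delta Y_i-s\,\Delta Y_i\Delta X_i=0$. What remains is $\tfrac12\big[(X_i-X_0)\Delta Y_i-(Y_i-Y_0)\Delta X_i\big]$. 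Summing over $i$ gives the stated formula, and by (iii) it does not depend on how each segment is parametrised.
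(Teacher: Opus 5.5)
Your proof is correct and follows the paper's argument item by item: (i)--(ii) directly from the definition, (iii) by the substitution $u=\theta(s)$, (iv) by Green's theorem applied to $\tfrac12(x\,dy-y\,dx)$, and (v) by the segment-wise computation in which the terms linear in $s$ cancel. Your explicit removal of the base-point terms $-\tfrac12X_0\oint dY+\tfrac12Y_0\oint dX$ in (iv), and your note that a piecewise-linear path is handled segment by segment in (v), fill in two steps that the paper leaves implicit.
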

\begin{proof}
(i)--(ii) are immediate from the definition. (iii) follows from the change of variables $u=\psi(s)$ in both integrals. (iv) is Green's theorem applied to $\tfrac12(x\,dy-y\,dx)$. (v): on the segment from $i$ to $i+1$, $X_u-X_0=(X_i-X_0)+t\Delta X$ and $Y_u-Y_0=(Y_i-Y_0)+t\Delta Y$ for $t\in[0,1]$. The segment contributes $\tfrac12[(X_i-X_0)\Delta Y-(Y_i-Y_0)\Delta X]+\tfrac12\int_0^1(t\Delta X\Delta Y-t\Delta Y\Delta X)\,dt$, and the last integral vanishes.
\end{proof}

\thmlevy*
\begin{proof}
Mean-square differentiability gives $\E[X_s\dot Y_t]=\partial_tC(t-s)=C'(t-s)$ and, since $\E[Y_sX_t]=C(s-t)$, $\E[Y_s\dot X_t]=-C'(s-t)$. Therefore
\[
  \E[(X_u-X_0)\dot Y_u]=C'(0)-C'(u),\qquad \E[(Y_u-Y_0)\dot X_u]=-C'(0)+C'(-u).
\]
The integrands are integrable by Cauchy--Schwarz and stationarity, so Fubini's theorem gives
\[
  \E[\mathcal{A}_T]=\tfrac12\int_0^T\big(2C'(0)-C'(u)-C'(-u)\big)\,du=T\,C'(0)-\tfrac12\big(C(T)-C(0)\big)-\tfrac12\big(C(0)-C(-T)\big),
\]
which simplifies to the claim.

For the lag statement:
$C(h)=\E[X_s(\gamma X_{s+h-\ell}+Z_{s+h})]=\gamma R(h-\ell)$, so $C'(0)=\gamma R'(-\ell)=-\gamma R'(\ell)$ by evenness. If $\ell>0$ then $R'(\ell)<0$; if $\ell<0$ then $R'(\ell)=-R'(-\ell)>0$.
Since $\E[\mathcal{A}_T]/T\to C'(0)$ when $C$ is bounded, the long-run area rate has the sign of $-\gamma R'(\ell)$, that is, of $\gamma\ell$.
\end{proof}
\begin{figure}[ht]
\centering
\begin{tikzpicture}[scale=1.0,
  arrowmid/.style={postaction={decorate}, decoration={markings,
    mark=at position 0.12 with {\arrow{Stealth[length=2.5mm]}},
    mark=at position 0.37 with {\arrow{Stealth[length=2.5mm]}},
    mark=at position 0.62 with {\arrow{Stealth[length=2.5mm]}},
    mark=at position 0.87 with {\arrow{Stealth[length=2.5mm]}}}}]
\begin{scope}
\draw[-{Stealth}, thick] (0,0) -- (4.2,0) node[below left, font=\scriptsize, align=right] {positioning\\(buying $\rightarrow$)};
\draw[-{Stealth}, thick] (0,0) -- (0,3.4) node[above, font=\scriptsize] {headline gloom};
\fill[good!20] plot[smooth cycle, tension=0.7] coordinates {(0.4,0.4) (2.4,0.55) (3.2,1.4) (3.0,2.6) (1.6,2.5) (0.6,1.5)};
\draw[good!60!black, very thick, arrowmid] plot[smooth cycle, tension=0.7] coordinates {(0.4,0.4) (2.4,0.55) (3.2,1.4) (3.0,2.6) (1.6,2.5) (0.6,1.5)};
\node[font=\scriptsize, text=good!50!black] at (1.9,1.5) {area $>0$};
\node[font=\scriptsize, anchor=north, text=do, fill=white, inner sep=1pt] at (1.8,0.3) {\textbf{1.} money buys};
\node[font=\scriptsize, anchor=west, text=echo] at (3.25,1.9) {\textbf{2.} gloom rises};
\node[font=\small\bfseries, align=center] at (2.1,-1.2) {Money led the story};
\node[font=\scriptsize, text=good!50!black] at (2.1,-1.65) {strategic: stronger signal};
\end{scope}
\begin{scope}[xshift=7.3cm]
\draw[-{Stealth}, thick] (0,0) -- (4.2,0) node[below left, font=\scriptsize, align=right] {positioning\\(buying $\rightarrow$)};
\draw[-{Stealth}, thick] (0,0) -- (0,3.4) node[above, font=\scriptsize] {headline gloom};
\fill[gray!18] plot[smooth cycle, tension=0.7] coordinates {(0.4,0.4) (0.6,1.5) (1.6,2.5) (3.0,2.6) (3.2,1.4) (2.4,0.55)};
\draw[gray!60!black, very thick, arrowmid] plot[smooth cycle, tension=0.7] coordinates {(0.4,0.4) (0.6,1.5) (1.6,2.5) (3.0,2.6) (3.2,1.4) (2.4,0.55)};
\node[font=\scriptsize, text=gray!50!black] at (1.9,1.5) {area $<0$};
\node[font=\scriptsize, anchor=east, text=echo] at (-0.1,1.6) {\textbf{1.} gloom};
\node[font=\scriptsize, anchor=south, text=do] at (2.3,2.7) {\textbf{2.} then buying};
\node[font=\small\bfseries, align=center] at (2.1,-1.2) {Money followed the story};
\node[font=\scriptsize, text=gray!50!black] at (2.1,-1.65) {reactive: weaker signal};
\end{scope}
\end{tikzpicture}
\caption{\Cref{thm:levy}: the orientation of the loop in the (positioning, gloom) plane reveals who moved first, and its enclosed area (\cref{prop:levyprops}(iv)) measures how strongly.}
\label{fig:loop}
\end{figure}

\section{From signals to positions}\label{app:inference}\label{sec:inference}\label{sec:sizing}
Events belong to latent regimes $K_t\in\{\mathrm{I},\mathrm{O},\mathrm{S}\}$ (informational, overreaction, strategic) following a Markov chain, with emissions $\bm o_t\mid K_t=k\sim f_k$, where $\bm o_t$ collects $(s^{\text{new}},s^{\text{echo}},G,A,\mathcal{A},\dots)$.

\begin{proposition}[Causal filtering and mixture optimality]\label{prop:filter}
The filtered probabilities $\pi_t(k)=\Prob(K_t=k\mid\F_t)$ are $\F_t$-measurable and satisfy $\pi_t\propto f(\bm o_t)\odot P^\top\pi_{t-1}$. The MSE-optimal forecast is the mixture of experts
\[
  \E[r_{t+h}\mid\F_t]=\sum_k\pi_t(k)\,\E[r_{t+h}\mid\F_t,K_t=k].
\]
Smoothed probabilities $\Prob(K_t=k\mid\F_T)$, $T>t$, are not $\F_t$-measurable in general and must not be used for forecasting.
\end{proposition}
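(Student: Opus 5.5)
The plan is to treat the regime chain as a hidden Markov model and derive the forward recursion by Bayes' rule. Throughout, take $\F_t=\sigma(\bm o_1,\dots,\bm o_t)$; any further public information can be appended to the emissions. Assume the standard HMM conditional-independence structure. Given the current regime $K_t$, the emission $\bm o_t$ is independent of the past regimes and past observations. Given $K_{t-1}$, the next regime $K_t$ is independent of $\F_{t-1}$.

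\emph{Measurability.} Each $\pi_t(k)$ is a conditional probability given $\F_t$, so it is $\F_t$-measurable by definition. Working with regular conditional distributions, the recursion exhibits $\pi_t(k)$ explicitly as a Borel function of $(\bm o_1,\dots,\bm o_t)$. This gives measurability constructively as well.

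\emph{Recursion.} I would argue in two steps.
\begin{enumerate}
\item Prediction: by the law of total probability and the Markov property, $\Prob(K_t=k\mid\F_{t-1})=\sum_{j}P_{jk}\pi_{t-1}(j)=(P^\top\pi_{t-1})_k$.
\item Update: conditionally on $\F_{t-1}$, the pair $(K_t,\bm o_t)$ has joint density $(P^\top\pi_{t-1})_k f_k(\bm o_t)$, because emissions depend only on the current regime. Bayes' rule then gives $\pi_t(k)=f_k(\bm o_t)(P^\top\pi_{t-1})_k\big/\sum_{k'}f_{k'}(\bm o_t)(P^\top\pi_{t-1})_{k'}$, which is $\pi_t\propto f(\bm o_t)\odot P^\top\pi_{t-1}$.
\end{enumerate}
The normaliser is positive almost surely, since the event where it vanishes has probability zero under the predictive law.

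\emph{Mixture of experts.} The conditional expectation minimises mean-squared error among $\F_t$-measurable forecasts, provided $r_{t+h}\in L^2$. By the tower property, $\E[r_{t+h}\mid\F_t]=\E\big[\E[r_{t+h}\mid\F_t,K_t]\mid\F_t\big]$. Because $K_t$ takes only finitely many values, the inner expectation equals $\sum_k\ind\{K_t=k\}\E[r_{t+h}\mid\F_t,K_t=k]$, where each term $\E[r_{t+h}\mid\F_t,K_t=k]$ is $\F_t$-measurable. Conditioning on $\F_t$ pulls these terms out and replaces each indicator by $\pi_t(k)$.

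\emph{Smoothed probabilities.} Here I would give a counterexample rather than a general argument. Take two regimes with persistent transitions, $P_{11}=P_{22}=0.9$, and emissions whose means differ. Let $\bm o_{t+1}$ be independent of $\F_t$ given $K_{t+1}$. Then $\Prob(K_t=1\mid\F_{t+1})$ varies nondegenerately with $\bm o_{t+1}$ even on sets where $\F_t$ is fixed. It therefore cannot be $\F_t$-measurable. Using it in a forecast would be look-ahead.

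\emph{Main obstacle.} This step is the rigorous treatment of conditioning on the event $\{K_t=k\}$ when it has probability zero given $\F_t$. I would handle it by defining the regime-specific expert only on $\{\pi_t(k)>0\}$ and noting that it enters the sum multiplied by zero elsewhere.
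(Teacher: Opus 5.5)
Your proposal is correct and follows the paper's own proof: Bayes' rule with prediction step $P^\top\pi_{t-1}$ for the recursion, the tower property over the finitely many regimes for the mixture of experts, and the observation that smoothed probabilities depend on $\bm o_{t+1:T}$, which you make concrete with a two-state counterexample. The paper leaves implicit the HMM conditional-independence assumptions, the $L^2$ condition on $r_{t+h}$, and how null events $\{\pi_t(k)=0\}$ are handled, and you state all three explicitly.
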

\begin{proof}
The recursion is Bayes' rule with prediction step $P^\top\pi_{t-1}$. The mixture follows from the tower property. Smoothed probabilities depend on $\bm o_{t+1:T}$.
\end{proof}

\begin{theorem}[Exponential detectability of strategic episodes]\label{thm:detect}
Consider an episode of $n$ events that are i.i.d.\ given a fixed regime, with $\bm o\sim\N(\bm m_0,\Sigma)$ under the null regime and $\N(\bm m_1,\Sigma)$ under the strategic regime, and priors $\pi_0,\pi_1$. The Bayes (MAP) misclassification probability satisfies
\[
  P_e\le\sqrt{\pi_0\pi_1}\;e^{-n\Delta^2/8}\le\tfrac12e^{-n\Delta^2/8},\qquad \Delta^2\coloneqq(\bm m_1-\bm m_0)^\top\Sigma^{-1}(\bm m_1-\bm m_0).
\]
Consequently $n\ge(8/\Delta^2)\log\big(1/(2\epsilon)\big)$ events suffice for $P_e\le\epsilon$.
\end{theorem}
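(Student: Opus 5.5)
The proof bounds the Bayes error by the Bhattacharyya coefficient and then evaluates that coefficient with \cref{lem:bc}. Write $P_0$ and $P_1$ for the joint laws of the episode $\bm o_{1:n}$ under the null and strategic regimes, with densities $p_0,p_1$. The MAP rule picks the regime with the larger posterior, so its error probability is
\[
  P_e=\int\min\{\pi_0p_0,\pi_1p_1\}.
\]
Since $\min\{a,b\}\le\sqrt{ab}$ for $a,b\ge0$, this gives
\[
  P_e\le\sqrt{\pi_0\pi_1}\int\sqrt{p_0p_1}.
\]
This first step is elementary.

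Next, the events in the episode are i.i.d.\ given the regime, so both $p_0$ and $p_1$ factorise over the $n$ events. The Bhattacharyya integral therefore factorises too:
\[
  \int\sqrt{p_0p_1}=\Big(\int\sqrt{\N(\bm m_0,\Sigma)\,\N(\bm m_1,\Sigma)}\Big)^n .
\]
By \cref{lem:bc}, each factor equals $e^{-\Delta^2/8}$. Hence the $n$-fold product is $e^{-n\Delta^2/8}$. This proves the first inequality.

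The second inequality follows from AM--GM: $\sqrt{\pi_0\pi_1}\le(\pi_0+\pi_1)/2=\tfrac12$. For the sample-size consequence, we need $\tfrac12e^{-n\Delta^2/8}\le\epsilon$. Taking logarithms, this holds iff $n\Delta^2/8\ge\log(1/(2\epsilon))$, which is the stated threshold on $n$.

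No step is a real obstacle. The only points to state carefully are two. First, the MAP error equals the integral of the pointwise minimum; this follows by conditioning on the observation and noting that the rule errs with probability equal to the smaller posterior mass. Second, the product form of the likelihood under each regime lets the Gaussian computation of \cref{lem:bc} pass to the whole episode. When $\epsilon\ge\tfrac12$ the threshold on $n$ is nonpositive, and the claim is trivial.
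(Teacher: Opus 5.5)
Your proposal is correct and follows essentially the same route as the paper's proof. Both bound $\int\min\{\pi_0p_0^{\otimes n},\pi_1p_1^{\otimes n}\}$ by $\sqrt{\pi_0\pi_1}\,\mathrm{BC}^n$ using $\min\{a,b\}\le\sqrt{ab}$ and the i.i.d.\ factorisation, evaluate $\mathrm{BC}=e^{-\Delta^2/8}$ with \cref{lem:bc}, and finish with $\sqrt{\pi_0\pi_1}\le\tfrac12$.
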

\begin{proof}
$P_e=\int\min\{\pi_0p_0^{\otimes n},\pi_1p_1^{\otimes n}\}\le\sqrt{\pi_0\pi_1}\int\sqrt{p_0^{\otimes n}p_1^{\otimes n}}=\sqrt{\pi_0\pi_1}\,\mathrm{BC}^n$, using $\min\{a,b\}\le\sqrt{ab}$, where $\mathrm{BC}=\int\sqrt{p_0p_1}$ is the Bhattacharyya coefficient \citep{bhattacharyya1943,kailath1967}. For equal-covariance Gaussians, completing the square gives $\mathrm{BC}=e^{-\Delta^2/8}$ (\cref{lem:bc}). Finally $\sqrt{\pi_0\pi_1}\le\tfrac12$.
\end{proof}

\begin{figure}[ht]
\centering
\begin{tikzpicture}
\begin{semilogyaxis}[width=0.62\textwidth, height=5.2cm, xmin=0, xmax=40, ymin=1e-5, ymax=0.6,
  xlabel={\footnotesize number of events in the episode $n$},
  ylabel={\footnotesize bound on $P_e$},
  legend style={font=\scriptsize, draw=none, fill=none, at={(0.02,0.02)}, anchor=south west},
  tick label style={font=\scriptsize}, samples=80, domain=0:40]
\addplot[echo, very thick] {0.5*exp(-x*0.25/8)}; \addlegendentry{$\Delta=0.5$}
\addplot[say, very thick] {0.5*exp(-x*1/8)}; \addlegendentry{$\Delta=1$}
\addplot[do, very thick] {0.5*exp(-x*2.25/8)}; \addlegendentry{$\Delta=1.5$}
\addplot[good!60!black, very thick] {0.5*exp(-x*4/8)}; \addlegendentry{$\Delta=2$}
\end{semilogyaxis}
\end{tikzpicture}
\caption{\Cref{thm:detect}. Detection error falls exponentially in the length of a strategic episode, at a rate set by the Mahalanobis separation $\Delta$ of the regime fingerprints.}
\label{fig:detect}
\end{figure}

Forecast intervals use adaptive conformal inference \citep{gibbs2021}. Let $\hat C_t(\alpha')$ be nested prediction sets with $\hat C_t(\alpha')=\R$ for $\alpha'\le0$ and $\hat C_t(\alpha')=\emptyset$ for $\alpha'\ge1$. Set $\mathrm{err}_t=\ind\{r_t\notin\hat C_t(\alpha_t)\}$ and $\alpha_{t+1}=\alpha_t+\gamma(\alpha-\mathrm{err}_t)$ with $\alpha_1\in[0,1]$ and $\gamma>0$.

\begin{theorem}[Deterministic long-run coverage; \citealp{gibbs2021}, Prop.~4.1]\label{thm:aci}
For \emph{every} sequence of outcomes, including adversarial or regime-switching ones,
\[
  \Big|\frac1T\sum_{t=1}^T\mathrm{err}_t-\alpha\Big|\le\frac{\max\{\alpha_1,1-\alpha_1\}+\gamma}{\gamma\,T}.
\]
\end{theorem}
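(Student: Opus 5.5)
The plan is the standard telescoping argument for the adaptive update $\alpha_{t+1}=\alpha_t+\gamma(\alpha-\mathrm{err}_t)$. Summing from $t=1$ to $T$ gives
\[
  \alpha_{T+1}-\alpha_1=\gamma\Big(T\alpha-\sum_{t=1}^T\mathrm{err}_t\Big),
\]
so
\[
  \Big|\frac1T\sum_t\mathrm{err}_t-\alpha\Big|=\frac{|\alpha_{T+1}-\alpha_1|}{\gamma T}.
\]
Everything therefore reduces to a deterministic, pathwise bound on the iterates $\alpha_t$. No probabilistic assumption on the outcomes enters, which is why the bound holds for adversarial or regime-switching sequences.

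The key step is to show by induction that $\alpha_t\in[-\gamma,1+\gamma]$ for all $t$. This uses the two boundary conventions on the nested sets.

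\begin{itemize}
\item If $\alpha_t\le 0$, then $\hat C_t(\alpha_t)=\R$, so $\mathrm{err}_t=0$ and $\alpha_{t+1}=\alpha_t+\gamma\alpha\ge\alpha_t$. The iterate can only increase.
\item If $\alpha_t\ge 1$, then $\hat C_t(\alpha_t)=\emptyset$, so $\mathrm{err}_t=1$ and $\alpha_{t+1}=\alpha_t-\gamma(1-\alpha)\le\alpha_t$. The iterate can only decrease.
\item If $\alpha_t\in(0,1)$, a single step moves it by at most $\gamma\max\{\alpha,1-\alpha\}\le\gamma$. So $\alpha_{t+1}\in(-\gamma,1+\gamma)$.
\end{itemize}

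Combining the three cases with $\alpha_1\in[0,1]$ gives the claimed interval by induction. The only care needed is the case split: an iterate below $0$ can only have got there from $(0,1)$, and it then moves monotonically back up.

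Finally, $|\alpha_{T+1}-\alpha_1|\le\max\{\alpha_1+\gamma,\,1+\gamma-\alpha_1\}=\max\{\alpha_1,1-\alpha_1\}+\gamma$. Dividing by $\gamma T$ gives the stated inequality.

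The main obstacle, mild as it is, is the invariant-interval step. One has to use the nestedness conventions exactly at $\alpha'\le0$ and $\alpha'\ge1$, and check that the overshoot is at most $\gamma$ rather than, say, $2\gamma$. The rest is algebra. One could also simply cite \citet{gibbs2021}, Prop.~4.1, but the self-contained argument above is short.
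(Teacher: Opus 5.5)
Your proof is correct and follows the paper's argument exactly: you telescope the update to reduce everything to $|\alpha_{T+1}-\alpha_1|/(\gamma T)$, then establish the invariant interval $[-\gamma,1+\gamma]$ by the same three-case induction, using the conventions $\hat C_t=\R$ at or below $0$ and $\hat C_t=\emptyset$ at or above $1$. One small aside: your remark that an iterate below $0$ can only have come from $(0,1)$ is false in general, because if $\gamma(1-\alpha)>1$ it can jump there from above $1$; the induction never uses that remark, though, since from $\alpha_t\ge1$ the next iterate is still at least $1-\gamma\ge-\gamma$.
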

\begin{proof}[Proof (reproduced for completeness)]
First, $\alpha_t\in[-\gamma,1+\gamma]$ for all $t$, by induction. If $\alpha_t<0$, then $\mathrm{err}_t=0$ and $\alpha_{t+1}=\alpha_t+\gamma\alpha\in(\alpha_t,\gamma\alpha)$. If $\alpha_t>1$, then $\mathrm{err}_t=1$ and $\alpha_{t+1}=\alpha_t-\gamma(1-\alpha)\in(1-\gamma(1-\alpha),\alpha_t)$. If $\alpha_t\in[0,1]$, then $\alpha_{t+1}\in[\alpha_t-\gamma(1-\alpha),\alpha_t+\gamma\alpha]$. In each case the interval lies in $[-\gamma,1+\gamma]$. Telescoping the update gives $\alpha_{T+1}-\alpha_1=\gamma\sum_{t\le T}(\alpha-\mathrm{err}_t)$, hence $|\frac1T\sum\mathrm{err}_t-\alpha|=|\alpha_{T+1}-\alpha_1|/(\gamma T)\le\max\{1+\gamma-\alpha_1,\alpha_1+\gamma\}/(\gamma T)$.
\end{proof}
\begin{theorem}[Kelly shrinkage under estimation risk; classical, included for completeness]\label{thm:kelly}
Let an asset have excess drift $\mu$ and volatility $\sigma$, with continuous rebalancing at a constant fraction $f$, so that the long-run excess log-growth rate is $g(f)=f\mu-\tfrac12f^2\sigma^2$ \citep{merton1969}. Let $\hat\mu$ be an estimate with $\E\hat\mu=\mu$ and $\Var\hat\mu=s^2$, independent of future returns, and use $f=c\,\hat\mu/\sigma^2$. Then
\[
  \E\,g=\frac{c\mu^2-\tfrac12c^2(\mu^2+s^2)}{\sigma^2},\qquad
  c^\star=\frac{\mu^2}{\mu^2+s^2}=\frac{\mathrm{SNR}}{1+\mathrm{SNR}},\quad\mathrm{SNR}\coloneqq\frac{\mu^2}{s^2},
\]
with optimal growth $\mu^4/\big(2\sigma^2(\mu^2+s^2)\big)$. Full Kelly ($c=1$) yields $(\mu^2-s^2)/(2\sigma^2)$, which is negative whenever $s>|\mu|$.
\end{theorem}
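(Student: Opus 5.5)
The proof reduces to a scalar quadratic in $c$. First I would condition on $\hat\mu$ and then average over its sampling law.

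Because $\hat\mu$ is independent of future returns, the growth rate of the fixed fraction $f=c\hat\mu/\sigma^2$ given $\hat\mu$ is still $g(f)=f\mu-\tfrac12 f^2\sigma^2$. The Merton formula cited in the statement therefore applies pathwise in $\hat\mu$. Substituting $f$ gives
\[
  g=\frac{c\hat\mu\mu}{\sigma^2}-\frac{c^2\hat\mu^2}{2\sigma^2}.
\]
Taking expectations, I would use $\E\hat\mu=\mu$ and $\E\hat\mu^2=\mu^2+s^2$, which follows from unbiasedness and the variance. This gives $\E g=\big(c\mu^2-\tfrac12 c^2(\mu^2+s^2)\big)/\sigma^2$.

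This expression is a strictly concave quadratic in $c$ whenever $\mu^2+s^2>0$. Setting its derivative $\mu^2-c(\mu^2+s^2)$ to zero gives $c^\star=\mu^2/(\mu^2+s^2)$. Dividing numerator and denominator by $s^2$ writes this as $\mathrm{SNR}/(1+\mathrm{SNR})$; when $s=0$ we simply have $c^\star=1$.

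Substituting $c^\star$ back, the value is $\tfrac12\mu^4/(\mu^2+s^2)$ divided by $\sigma^2$, which is the stated optimal growth. Setting $c=1$ instead gives $(\mu^2-\tfrac12\mu^2-\tfrac12 s^2)/\sigma^2=(\mu^2-s^2)/(2\sigma^2)$. This is negative exactly when $s>|\mu|$.

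The only step that needs care is the first one: justifying that the long-run growth rate conditional on $\hat\mu$ equals the deterministic Merton expression. This holds because the fraction is fixed once $\hat\mu$ is drawn and independence leaves the return dynamics unchanged. Everything after that is routine algebra.
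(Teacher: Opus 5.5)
Your proposal is correct and follows the same route as the paper's proof. Both compute $\E[f\mu]=c\mu^2/\sigma^2$ and $\E[f^2]\sigma^2/2=c^2(\mu^2+s^2)/(2\sigma^2)$ from $\E\hat\mu^2=\mu^2+s^2$, maximise the resulting concave quadratic in $c$, and substitute back; your remarks on conditioning on $\hat\mu$ and on the case $s=0$ only make explicit what the paper leaves implicit.
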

\begin{proof}
$\E[f\mu]=c\mu^2/\sigma^2$ and $\E[f^2]\sigma^2/2=c^2(\mu^2+s^2)/(2\sigma^2)$. The resulting concave quadratic in $c$ is maximised at $c^\star$, and substitution gives the remaining expressions.
\end{proof}

\begin{figure}[ht]
\centering
\begin{tikzpicture}
\begin{axis}[width=0.62\textwidth, height=5.2cm, xmin=0, xmax=1.6, ymin=-0.35, ymax=0.55,
  xlabel={\footnotesize Kelly fraction $c$},
  ylabel={\footnotesize expected growth $\times\,\sigma^2/\mu^2$},
  legend style={font=\scriptsize, draw=none, fill=none, at={(0.02,0.02)}, anchor=south west},
  tick label style={font=\scriptsize}, samples=100, domain=0:1.6]
\addplot[gray, thin, forget plot] coordinates {(0,0) (1.6,0)};
\addplot[good!60!black, very thick] {x - 0.5*x^2*(1+1/4)}; \addlegendentry{SNR $=4$}
\addplot[do, very thick] {x - 0.5*x^2*(1+1/1)}; \addlegendentry{SNR $=1$}
\addplot[echo, very thick] {x - 0.5*x^2*(1+1/0.5)}; \addlegendentry{SNR $=0.5$}
\addplot[only marks, mark=*, mark size=1.8pt, good!60!black, forget plot] coordinates {(0.8,0.4)};
\addplot[only marks, mark=*, mark size=1.8pt, do, forget plot] coordinates {(0.5,0.25)};
\addplot[only marks, mark=*, mark size=1.8pt, echo, forget plot] coordinates {(0.3333,0.16667)};
\addplot[dashed, gray, forget plot] coordinates {(1,-0.35) (1,0.55)};
\node[font=\scriptsize, text=gray, anchor=south west] at (axis cs:1.01,0.42) {full Kelly};
\end{axis}
\end{tikzpicture}
\caption{\Cref{thm:kelly}. Expected growth as a function of the Kelly fraction for three signal-to-noise ratios; dots mark $c^\star=\mathrm{SNR}/(1+\mathrm{SNR})$. With noisy edge estimates, full Kelly destroys growth. The conformal interval width of \cref{thm:aci} supplies $s$.}
\label{fig:kelly}
\end{figure}

\begin{proposition}[Well-posed portfolio construction]\label{prop:convex}
For $\hat\Sigma\succ0$, risk aversion $\gamma_r>0$, cost coefficients $c_i,\eta_i\ge0$, factor exposures $B$ and bounds $\bar w,G>0$, the programme
\begin{align*}
  \max_{\w}\quad & \bm\alpha^\top\w-\tfrac{\gamma_r}{2}\w^\top\hat\Sigma\w-\sum_ic_i|w_i-w_i^0|-\sum_i\eta_i\sigma_i\frac{|w_i-w_i^0|^{3/2}}{\sqrt{V_i}}\\
  \text{s.t.}\quad & B^\top\w=\bm 0,\quad \bm 1^\top\w=0,\quad |w_i|\le\bar w,\quad \|\w\|_1\le G
\end{align*}
has a unique solution. The $3/2$-power term integrates square-root impact \citep{bouchaud2018}.
\end{proposition}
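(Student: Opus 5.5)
The plan is to show that the objective is continuous and strictly concave on the feasible set, and that this set is nonempty and compact. Existence then follows from Weierstrass, and uniqueness from strict concavity.

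\emph{Feasible set.} The constraints $B^\top\w=\bm 0$ and $\bm 1^\top\w=0$ are linear. The constraints $|w_i|\le\bar w$ and $\|\w\|_1\le G$ are closed, bounded and convex. Their intersection $\mathcal{C}$ is therefore a compact convex polytope, and it contains $\w=\bm 0$, so it is nonempty.

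\emph{Objective.} Write the objective as
\[
  U(\w)=\bm\alpha^\top\w-\tfrac{\gamma_r}{2}\w^\top\hat\Sigma\w-\sum_i\phi_i(w_i-w_i^0),\qquad \phi_i(d)=c_i|d|+\eta_i\sigma_i V_i^{-1/2}|d|^{3/2}.
\]
Each $\phi_i$ is a nonnegative combination of $|d|$ and $|d|^{3/2}$. Both are convex, since $t\mapsto t^{3/2}$ is convex and nondecreasing on $[0,\infty)$ and is composed with the convex function $|d|$. Composing with the affine map $w_i\mapsto w_i-w_i^0$ preserves convexity, so $-\sum_i\phi_i$ is concave. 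I take $V_i>0$ and $\sigma_i\ge0$ implicitly, which the $\sqrt{V_i}$ in the denominator requires. The term $\bm\alpha^\top\w$ is linear. Because $\gamma_r>0$ and $\hat\Sigma\succ0$, the quadratic $-\tfrac{\gamma_r}{2}\w^\top\hat\Sigma\w$ is strictly concave. Adding concave terms to a strictly concave function leaves it strictly concave, so $U$ is strictly concave and continuous on $\R^n$.

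\emph{Conclusion.} A continuous function attains its maximum on the compact set $\mathcal{C}$. Suppose $\w_1\neq\w_2$ were both maximisers. Their midpoint lies in the convex set $\mathcal{C}$, and strict concavity gives $U\big(\tfrac{\w_1+\w_2}{2}\big)>\tfrac12\big(U(\w_1)+U(\w_2)\big)=\max_{\mathcal{C}}U$, a contradiction. Hence the maximiser is unique.

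No step is a real obstacle. The only point that needs care is the convexity of the $3/2$-power cost, which the monotone-composition rule above settles. Strict concavity cannot come from the cost terms, because $\phi_i$ is linear on each side of $w_i^0$ when $\eta_i=0$. It comes entirely from $\hat\Sigma\succ0$, and that is why the hypothesis $\gamma_r>0$ is essential.
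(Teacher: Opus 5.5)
Your proof is correct and follows essentially the same route as the paper: the feasible set is a nonempty compact convex polytope containing $\w=\bm 0$, and the objective is continuous and strictly concave, with strictness coming from the $\hat\Sigma\succ0$ quadratic and the cost terms adding only concave pieces; Weierstrass then gives existence and strict concavity gives uniqueness. You also note explicitly that $\sigma_i\ge0$ and $V_i>0$ are needed for the $3/2$-power cost to be well defined and convex, which the paper leaves implicit.
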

\begin{proof}
The feasible set is a nonempty ($\w=\bm 0$ is feasible), compact and convex polytope. The objective is continuous and strictly concave: it is a linear term, minus a strictly convex quadratic, minus convex functions $|\cdot|$ and $|\cdot|^{3/2}$. A continuous function attains its maximum on a compact set, and strict concavity makes the maximiser unique.
\end{proof}

\section{Numerical verification}\label{app:verify}\label{sec:verify}
Every closed form in this paper was checked against an independent computation (\cref{tab:verify}). Theory values are exact functions of the model parameters. The Gaussian results of \cref{sec:model} were checked by regression on $2\times10^6$ simulated draws. \Cref{thm:identity} was checked with fat-tailed values ($t_3$) and Laplace message noise, which tests its distribution-free claim. Optimal speech was compared with numerical optimisation, \cref{thm:semantic} with brute-force enumeration of every parent labelling, and \cref{thm:levy} with a smooth Gaussian process with a known lag. Both forms of \cref{thm:robustnn} were checked on $3{,}000$ random pairs of target and model distributions: no certified pair was ever misordered, and the exact form certified $3.6$ times as many pairs as the Pinsker form (the ratio depends on how instances are drawn; see \texttt{scripts/check\_certificate.py}). The $t_3$ exaggeration case converges most slowly because the fourth moment of a $t_3$ variable is infinite; the pathwise identity $r=\lambda x-\varphi\varepsilon$ itself holds to machine precision.

\begin{table}[ht]
\centering
\caption{Theory versus independent checks ($2\times10^6$ draws, seed 0).}
\label{tab:verify}
\small
\begin{tabularx}{0.94\textwidth}{@{}l>{\raggedright\arraybackslash}Xrr@{}}
\toprule
Result & Quantity & Theory & Check \\
\midrule
\multicolumn{4}{@{}l}{\footnotesize\color{gray} Three-voice economy (\cref{sec:model})}\\
\cref{cor:echo} & Bayes weight on the echo channel & $0$ & $-0.0006$ \\
\cref{thm:followfade} & return loading on the echo channel & $-0.1840$ & $-0.1843$ \\
\cref{thm:saydo} & loading on positioning, $c_x$ & $0.4012$ & $0.4004$ \\
\cref{thm:absorption} & absorption coefficient, $c_A$ & $-0.1997$ & $-0.2004$ \\
\multicolumn{4}{@{}l}{\footnotesize\color{gray} Strategic speech (\cref{sec:strategic})}\\
\cref{thm:speech} & speech slope $\psi$ ($\varphi=0.6$, $a=0.5$) & $-0.7143$ & $-0.7143$ \\
\cref{thm:speech} & trade slope $\chi$ ($\varphi=0.6$, $a=0.5$) & $1.4286$ & $1.4286$ \\
\cref{thm:identity} & $\Cov(r,\text{Say})$, $t_3$ values, false alarm & $-2.294$ & $-2.293$ \\
\cref{thm:identity} & $\Cov(r,\text{Say})$, $t_3$ values, exaggeration & $-7.608$ & $-7.676$ \\
\cref{thm:cycles} & 2-cycle points at $a=1.95$ & $\{0.8189,\,1.1311\}$ & $\{0.8189,\,1.1311\}$ \\
\cref{thm:cycles} & 2-cycle multiplier at $a=1.8$ & $0$ & $9\times10^{-16}$ \\
\multicolumn{4}{@{}l}{\footnotesize\color{gray} Measurement (\cref{sec:echo,sec:embeddings,sec:leadlag})}\\
\cref{thm:semantic} & largest error of the factorised posterior & $0$ & $1.1\times10^{-16}$ \\
\cref{thm:levy} & L\'evy-area rate, lag $0.5$ & $0.2356$ & $0.2360$ \\
\cref{thm:robustnn} & wrongly ordered certified pairs, $3{,}000$ random cases & $0$ & $0$ \\
\multicolumn{4}{@{}l}{\footnotesize\color{gray} Sizing (\cref{sec:sizing})}\\
\cref{thm:kelly} & growth at the shrunk Kelly fraction $c^\star$ & $0.01056$ & $0.01055$ \\
\cref{thm:kelly} & growth at full Kelly & $-0.0300$ & $-0.0300$ \\
\bottomrule
\end{tabularx}
\end{table}

Exactness does not by itself show that meaning \emph{helps}. We therefore simulated cascades in which each echo is a von Mises--Fisher copy of its parent with concentration $\varkappa$ (\cref{ass:marked}; Hawkes $\mu=0.6$, $\alpha=1$, $\beta=1.25$; $16$-dimensional embeddings; about $2{,}000$--$2{,}800$ articles per run) and asked each method to recover the true parent of every article. Timing alone recovers about one parent in four at every concentration. Adding meaning raises accuracy to $95\%$ when echoes are close copies, and it nearly eliminates the error in each article's probability of being an echo (\cref{fig:decluster}).

\begin{figure}[ht]
\centering
\begin{minipage}{0.48\textwidth}
\centering
\begin{tikzpicture}
\begin{semilogxaxis}[width=6.6cm, height=5.0cm, xmin=1.6, xmax=100, ymin=0, ymax=1,
  xtick={2,5,10,20,40,80}, xticklabels={2,5,10,20,40,80}, log ticks with fixed point,
  xlabel={\footnotesize echo concentration $\varkappa$}, ylabel={\footnotesize parent accuracy},
  tick label style={font=\footnotesize}, axis lines=left,
  legend style={font=\scriptsize, draw=none, fill=none, at={(0.99,0.31)}, anchor=south east}]
\addplot[gray, thick, mark=*, mark size=1.6pt] coordinates {(2,0.270) (5,0.268) (10,0.249) (20,0.211) (40,0.257) (80,0.212)};
\addlegendentry{timing only}
\addplot[echo, thick, mark=*, mark size=1.6pt] coordinates {(2,0.312) (5,0.452) (10,0.631) (20,0.844) (40,0.941) (80,0.953)};
\addlegendentry{timing + meaning}
\end{semilogxaxis}
\node[font=\bfseries] at (-0.9,4.3) {a};
\end{tikzpicture}
\end{minipage}\hfill
\begin{minipage}{0.48\textwidth}
\centering
\begin{tikzpicture}
\begin{semilogxaxis}[width=6.6cm, height=5.0cm, xmin=1.6, xmax=100, ymin=0, ymax=0.32,
  xtick={2,5,10,20,40,80}, xticklabels={2,5,10,20,40,80},
  ytick={0,0.1,0.2,0.3}, yticklabels={0,0.1,0.2,0.3},
  xlabel={\footnotesize echo concentration $\varkappa$}, ylabel={\footnotesize mean $|\,\Prob(\text{echo})-\ind\{\text{echo}\}\,|$},
  tick label style={font=\footnotesize}, axis lines=left]
\addplot[gray, thick, mark=*, mark size=1.6pt] coordinates {(2,0.272) (5,0.281) (10,0.252) (20,0.237) (40,0.289) (80,0.239)};
\addplot[echo, thick, mark=*, mark size=1.6pt] coordinates {(2,0.268) (5,0.258) (10,0.177) (20,0.078) (40,0.014) (80,0.001)};
\end{semilogxaxis}
\node[font=\bfseries] at (-0.9,4.3) {b};
\end{tikzpicture}
\end{minipage}
\caption{Meaning makes echo detection work (\cref{thm:semantic}). \textbf{a}: accuracy of the most likely parent. \textbf{b}: mean absolute error of each article's echo probability. Grey: timing only ($\varkappa=0$ in the posterior). Red: timing and meaning.}
\label{fig:decluster}
\end{figure}

\section{Experimental details}\label{app:experiments}
\paragraph{The market.} There are $300$ firms and $60$ events per firm. Each firm hosts one institution whose parameters are fixed across events. A quarter of the institutions are non-strategic: they report value and do not trade. The rest are strategic, with credulity $\varphi$ and deterrence $a=2\lambda k$ drawn inside the shading, false-alarm or exaggeration region of \cref{cor:taxonomy}. Each firm's regime is drawn with probability one quarter for each of the four. At each event:
\begin{enumerate}[label=(\arabic*)]
  \item value is drawn, $v\sim\N(0,1)$;
  \item the institution trades $x=\chi v$, spread evenly over the ten days before it speaks, and then publishes $\text{Say}=\psi v+\N(0,0.3^2)$, with $(\psi,\chi)$ from \cref{thm:speech};
  \item the media publish a Hawkes cascade over the next ten days. The statement is the first article. Independent news items arrive at rate $0.4$ per day, each carrying the sentiment $v+\N(0,1.5^2)$. Every article can be echoed, with branching ratio $n=1-\varphi_0/\varphi$ so that the crowd's effective credulity is $\varphi$ (\cref{prop:virality}). Echoes arrive after exponential delays with rate $1.5$, copy their parent's sentiment up to noise $\N(0,0.15^2)$, and copy its $16$-dimensional embedding up to a von Mises--Fisher perturbation with $\varkappa=30$ (\cref{ass:marked});
  \item the crowd prices every article, $p=\sum_j w_j s_j+\lambda(x+u)$, where $w_j=\varphi_0$ for the statement and its echoes, $w_j=0.04$ for news and its echoes, and $u\sim\N(0,0.4^2)$ is noise trading;
  \item the forward return is $r=v-p+\eta$, where $\eta\sim\N(0,10^2)$ is information that arrives later, and positioning is revealed with error, $\text{Do}=x+\N(0,0.5^2)$.
\end{enumerate}

\paragraph{What the methods see.} Every method sees the article times, embeddings and sentiment scores, the statement, the noisy positioning, the price reaction, and daily paths of positioning and cumulative tone from ten days before the statement to ten days after it. No forecasting method sees value, regimes, parentage or any model parameter. The one exception is the false-alarm detector of Q2, which is trained on the true regime labels of the training half, as a supervised detector would be on labelled historical episodes.

\paragraph{Protocol.} The first $30$ events of each firm form the training set and the last $30$ the test set. Every estimated quantity uses the training half only. This includes the Hawkes parameters (pooled maximum likelihood), the echo concentration $\varkappa$ (marked likelihood), the narrative-implied price move, all regression and boosting weights, and all standardisations. The features of an event use only that event's articles and price reaction and the firm's earlier events. \Cref{tab:features} lists the features.

\begin{table}[ht]
\centering
\caption{Features built from the theory. All are computed causally.}
\label{tab:features}
\small
\begin{tabularx}{\textwidth}{@{}l>{\raggedright\arraybackslash}Xl@{}}
\toprule
Feature & Construction & Source \\
\midrule
News, echo & $s^{\text{new}}=\sum_j\Prob(\text{news}_j)\,s_j$ over articles other than the statement; $s^{\text{echo}}=\sum_j\Prob(\text{echo}_j)\,s_j$ & \cref{prop:mmse,thm:semantic} \\
Say--Do switch & correlation of Say and Do over the firm's previous $20$ events; words are faded when it is below $-0.2$ & \cref{thm:identity} \\
Absorption & price reaction minus the narrative-implied move, fitted linearly on Say, $s^{\text{new}}$ and $s^{\text{echo}}$ (pooled, or per firm; tone is their sum) & \cref{thm:absorption} \\
Lead--lag & L\'evy area of (positioning, $-$tone) over the event window, signed by the direction of the positioning change & \cref{thm:levy} \\
Composite & unit weights on the standardised terms above, with the switch applied to Say & \cref{sec:synthesis} \\
\bottomrule
\end{tabularx}
\end{table}

\paragraph{Calibration, and what we changed.} The features, tests and strategies were fixed before we looked at any of their results. The simulator's scale was then calibrated as follows. A pilot run exposed two degenerate properties of the raw simulator: prices moved about $3.3$ times as much as fundamentals, and fading the tone alone reached an information coefficient of $0.89$. We lowered the per-article impact of news from $0.12$ to $0.04$ and restricted the credulity of non-strategic institutions to $(0.4,0.95)$. We then chose the scale of the late information $\eta$ so that a linear regression on the raw voices, which uses no theory at all, reaches an out-of-sample information coefficient of about $0.17$. That is far more predictable than any real market. We chose it so that differences between methods are visible with $300$ firms, and only comparisons between methods should be read from the results. The calibration also caps what any method can achieve. Forecasting with the true mispricing $v-p$ gives an IC of $0.181$ (between $0.165$ and $0.206$ across seeds), only about $0.01$ above the linear baseline.

\paragraph{Metrics.} For forecasts we report the cross-sectional information coefficient (IC) in each of the $30$ test periods, its average and its $t$-statistic across periods, and the per-period Sharpe ratio of a dollar-neutral portfolio with weights proportional to the standardised score. The theory tests are pooled regressions of test-period returns on standardised features, with standard errors clustered by firm. The main market is simulated with five seeds, every robustness variant with three, and every sample size with two. Together these give $29$ distinct simulated markets. The ``timing only'' variant re-analyses three of them, so there are $32$ analyses in all.

\paragraph{Compute and software.} All experiments ran on a cloud container with $2$ CPU cores and $7$\,GB of memory; no GPU was used. One main-market seed (simulation, Hawkes fit, declustering of about $25{,}000$ articles, all strategies and tests) takes about $40$ seconds, and one text-experiment run takes about $4.5$ minutes on one core. The full suite of \cref{sec:experiments}, including robustness, sample-size, text and diagnostic runs, takes roughly one hour. We used NumPy and SciPy (BSD-3-Clause), scikit-learn \citep{pedregosa2011} (BSD-3-Clause) and autograd (MIT). Versions: NumPy 2.4.4 \citep{harris2020}, SciPy 1.17.1 \citep{virtanen2020}, scikit-learn 1.8.0. Pilot runs used for calibration took about as much compute as one main-market seed each. The code, released under the MIT licence with tests for every theorem, regenerates every table and figure. Comments in the code refer to the theorem numbering of an extended version of this paper.

\paragraph{Hyperparameters.} \emph{Text benchmark:} the projection $W\in\R^{V\times16}$ is initialised with independent $\N(0,1/V)$ entries and trained with Adam (learning rate $0.05$, $\beta_1=0.9$, $\beta_2=0.999$) for $600$ steps on random batches of $256$ headlines; $\tau=0.1$ was fixed in advance and not tuned; the bandwidth $b$ is the median pairwise outcome distance over $400$ training headlines, and $0.3b$ in the sharp-target run; analog forecasts use a Gaussian kernel with bandwidth $0.05$ on embedding distance; the certificate is evaluated on $150$ random training anchors. The ridge baseline uses $\alpha=1$, and gradient boosting uses scikit-learn's histogram-based regressor with $300$ iterations and learning rate $0.05$. \emph{Simulated market:} both boosting models use the same settings; the false-alarm detector is an $\ell_2$-regularised logistic regression ($C=1$) on standardised features; the Say--Do window is $W=20$ past events (at least $5$) and the switch threshold is $-0.2$. None of these values was tuned on test data.

\section{Additional results}\label{app:results}
\paragraph{Q1: repetition can be separated from news, and fading it works.} The semantic declustering step recovers the echo concentration almost exactly ($\hat\varkappa=30.27\pm0.07$ against a true $30$) and the share of articles that are echoes ($0.422$ against $0.427$). On the same simulated markets, declustering by timing alone estimates an echo share of $0.368$ against a true $0.430$, because it mistakes many echoes for news. The prediction of \cref{cor:echo} holds without exception. Echo sentiment predicts returns with a significantly negative sign in every one of the $29$ simulated markets, with a mean $t$-statistic of $-8.0$ in the main market (\cref{tab:tests}). Meaning sharpens the test: on identical markets, the $t$-statistic weakens from $-7.5$ to $-6.0$ when timing alone is used. The news prediction of \cref{cor:news} also holds, but more weakly. The sign is positive in every market, the mean $t$-statistic is $+2.1$, and the effect is significant in $21$ of the $29$ markets. News is harder to detect because each news article is a noisy signal of value and the crowd underreacts to it only modestly.

\begin{table}[ht]
\centering
\caption{Theory tests on the test half of the main market (five seeds). Coefficients are in units of fundamental value per training standard deviation of the feature. $t$: firm-clustered, averaged over seeds. The last column counts seeds with $|t|>2$ and the predicted sign.}
\label{tab:tests}
\small
\begin{tabularx}{\textwidth}{@{}>{\raggedright\arraybackslash}Xlrrc@{}}
\toprule
Regressor (controls) & Prediction & Coef. & $t$ & As predicted \\
\midrule
News sentiment $s^{\text{new}}$ (Say, Do) & $+$ \ (\cref{cor:news}) & $+0.35$ & $+2.1$ & 3/5 \\
Echo sentiment $s^{\text{echo}}$ (Say, Do) & $-$ \ (\cref{cor:echo}) & $-1.52$ & $-8.0$ & 5/5 \\
\addlinespace
Say, fade group (Do) & $-$ \ (\cref{thm:identity}) & & $-3.3$ & 4/5 \\
Say, fade group (Do, news, echo) & $-$ & $-0.06$ & $-0.3$ & 1/5 \\
Say, follow group (Do, news, echo) & either & $+0.10$ & $+0.4$ & -- \\
Do (Say by group, news, echo) & $+$ & $+0.14$ & $+0.9$ & 1/5 \\
\addlinespace
Absorption, all events & either \ (\cref{thm:absorption}) & $-1.27$ & $-11.6$ & -- \\
\quad institution does not trade & $-$ & $-1.25$ & $-3.6$ & 4/5 \\
\quad institution trades & either & $-1.27$ & $-11.1$ & -- \\
Absorption, firm-level multiplier & either & $-0.72$ & $-9.0$ & -- \\
\quad institution does not trade & $-$ & $-1.03$ & $-4.0$ & 5/5 \\
\addlinespace
Signed L\'evy area & $+$ \ (\cref{thm:levy}) & $+0.23$ & $+1.0$ & 1/5 \\
\bottomrule
\end{tabularx}
\end{table}

\paragraph{Q2: the Say--Do covariance finds the institutions whose words should be faded.} On its own, the rolling correlation between Say and Do separates the events of false-alarm institutions from all others with an AUC of $0.895$ (\cref{fig:experiments}a). A logistic classifier that adds four other observable features, trained on the true labels of the training half, reaches $0.934\pm0.010$ on later events of the same firms. Real data would supply such labels only through episodes identified after the fact. The single-feature AUCs are reported as $\max(\text{AUC},1-\text{AUC})$, so the values near $0.5$ for absorption and echo share mean ``uninformative''. Used as the switch of \cref{sec:synthesis}, the same correlation labels correctly whether a firm's words should be faded (false-alarm and exaggerating institutions, the two regimes in which $\Cov(\text{Say},\text{Do})<0$) $85\%$ of the time from five past events and $96\%$ of the time from forty (\cref{fig:experiments}b). In the group it flags, statements predict returns with the sign reversed: the average $t$-statistic is between $-3.0$ and $-3.3$ at every window length, and it is significant in four of five seeds. In the other group the average $t$-statistic is between $-0.6$ and $-1.0$. We report that sign as an empirical finding. The identity does not cover non-strategic institutions, whose silence in the market is not optimal given the crowd's credulity.

Two caveats matter for real data. First, once news and echo sentiment are added as controls, the reversed coefficient on Say in the flagged group is no longer significant ($t=-0.3$). Echo sentiment alone is enough to remove it ($t=+0.2$ on average), while news alone only weakens it ($t=-2.0$). In this market most of a statement's price impact travels through its echoes, so the echo control absorbs it. Second, Do's own coefficient is imprecise, for two reasons. With Say, news and echo all controlled its $t$-statistic is $+0.9$, partly because Say and Do are nearly collinear within the strategic regimes (their correlation is $-0.87$ for false alarms, $-0.78$ for exaggeration and $+0.54$ for shading), so a regression cannot divide credit between them. Even with Say removed, the $t$-statistic averages only $+1.6$, because the institution's price impact is small next to the information that arrives later. The collinearity is also why \cref{thm:saydo} works with the gap between the two voices rather than with either voice alone.

\paragraph{Q3: absorption measures the narrative model as much as the market; lead--lag is weak.} Absorption predicts strong \emph{reversal} ($t=-11.6$), and it does so in all $29$ markets. It does so both where institutions trade, where \cref{thm:absorption} allows either sign, and where they do not ($t=-3.6$), where it predicts reversal. The second result looks like a confirmation, but it is not, and \texttt{experiments/diagnostics.py} shows why.

Absorption is the residual of the price reaction after the narrative-implied move has been removed. Here only $12$--$19\%$ of the residual's variance is the institution's trade and $2$--$4\%$ is noise trading. Almost all of the rest is \emph{narrative misfit}, the part of the crowd's reaction that a linear narrative model fails to capture. The misfit has a simple source. The statement's price impact is multiplied by the size of its echo cascade, and $s^{\text{echo}}$ pools the statement's echoes, which the crowd weights heavily, with the more numerous echoes of news, which it weights lightly. Adding Say times cascade size to the narrative model weakens the reversal from $t\approx-12$ to $t\approx-3.5$ on average. If we replace the estimate with the true absorption $\lambda(x+u)$, which a perfect narrative model would recover, the coefficient is insignificant in both groups: $t$ ranges from $-1.2$ to $+0.5$ where institutions trade and from $-1.4$ to $+0.8$ where they do not. \Cref{lem:robust} protects absorption against a wrong multiplier. It does not protect it against a price that is not linear in the channels the model uses.

Our experiments therefore neither confirm nor refute \cref{thm:absorption}: the simulated trades are too small relative to later information to reveal either sign. What they do show is practical. The sign of estimated absorption depends on the narrative model and must be estimated, never assumed. In the unit-weight composite, the absorption term enters with the wrong sign for this market, and removing it doubles the composite's IC from $0.049$ to $0.096$.

The signed L\'evy area has the predicted positive sign in $27$ of the $29$ markets, but it is significant in only $9$ ($t=+1.0$ in the main market). It is more useful for detecting false alarms (AUC $0.63$) than for predicting returns.

\paragraph{Q4: the theory adds little forecasting power, but there was little to add.} \Cref{fig:ic} ranks twelve forecasts. Following the tone loses money. Following institutional statements loses money too, because half of the institutions are in regimes whose words should be faded, and even the others' statements are already priced through their echoes. Fading the price reaction is a strong baseline (IC $0.162$). A linear regression on the raw voices does slightly better ($0.169$). Adding the theory's features to that regression changes almost nothing ($0.171$, at most about $0.005$ better in any seed). For gradient boosting the features help a little: $0.109$ becomes $0.118$ in the main market, and the features improve boosting by more than $0.001$ in $18$ of the $29$ markets. The unit-weight composite of \cref{sec:synthesis} is weak ($0.049$), mostly because of the absorption term discussed above.

This comparison has little power, and we do not read much into it. Forecasting with the true mispricing $v-p$, which no method can observe, reaches only $0.181$, so there is about $0.01$ of IC between the linear baseline and perfection. With so little room, the result is consistent with the theory adding nothing and also with it adding a great deal of what is left. Two points are nonetheless clear. First, \cref{thm:followfade} says the optimal forecast is linear in the voices, and a regression free to learn the weights comes close to the ceiling, as that theorem suggests. Second, the unit-weight composite is not a good trading rule; the composite of \cref{sec:synthesis} should be used with fitted weights. The theory's clearest contributions in these experiments are the signs it predicts and the diagnostics of Q1 and Q2, not extra forecasting power.

\begin{figure}[ht]
\centering
\begin{tikzpicture}
\begin{axis}[xbar, width=0.62\textwidth, height=8.2cm, bar width=7pt,
  xmin=-0.15, xmax=0.21, ymin=0.4, ymax=12.6, y dir=reverse,
  ytick={1,...,12},
  yticklabels={Linear: raw + theory, Linear: raw voices, Fade the price reaction, Fade the tone, Boosting: raw + theory, Boosting: raw voices, News minus echo, Theory composite (unit weights), Follow Do, Composite with firm-level absorption, Follow Say, Follow the tone},
  xtick={-0.1,-0.05,0,0.05,0.1,0.15,0.2},
  xticklabel style={/pgf/number format/fixed, /pgf/number format/precision=2},
  xlabel={\footnotesize out-of-sample information coefficient (mean $\pm$ s.d., five seeds)},
  tick label style={font=\footnotesize}, axis lines=left,
  legend style={font=\scriptsize, draw=none, fill=none, at={(0.99,0.02)}, anchor=south east}, legend cell align=left]
\addplot[fill=do!70, draw=none, bar shift=0pt, error bars/.cd, x dir=both, x explicit]
  coordinates {(0.171,1) +- (0.018,0) (0.169,2) +- (0.017,0) (0.118,5) +- (0.019,0) (0.109,6) +- (0.015,0)};
\addlegendentry{fitted models}
\addplot[fill=say!65, draw=none, bar shift=0pt, error bars/.cd, x dir=both, x explicit]
  coordinates {(0.091,7) +- (0.018,0) (0.049,8) +- (0.010,0) (0.027,10) +- (0.012,0)};
\addlegendentry{theory, unit weights}
\addplot[fill=gray!55, draw=none, bar shift=0pt, error bars/.cd, x dir=both, x explicit]
  coordinates {(0.162,3) +- (0.018,0) (0.119,4) +- (0.013,0) (0.035,9) +- (0.011,0) (-0.047,11) +- (0.006,0) (-0.119,12) +- (0.013,0)};
\addlegendentry{single-voice rules}
\draw[black] (axis cs:0,0.4) -- (axis cs:0,12.6);
\end{axis}
\end{tikzpicture}
\caption{Out-of-sample forecasts in the main simulated market. The fitted models use the training half. ``Raw voices'' are Say, Do, tone, article count and the price reaction; ``theory'' adds the features of \cref{tab:features}.}
\label{fig:ic}
\end{figure}

\paragraph{Robustness.} \Cref{tab:robust} changes one assumption at a time on three new seeds. Fat-tailed values and power-law echo delays (a misspecified kernel) leave the echo, news and detection results intact. Partial disclosure, in which only $30\%$ of the trade is disclosed and positioning is measured with twice the noise, leaves the echo and news results intact but lowers the detection AUC to $0.857$. A saturating crowd, which prices $1.5\tanh(s/1.5)$ instead of $s$, halves the IC of every forecast, but detection is unaffected and the echo and news signs survive. The controlled test of reversed words (\cref{tab:tests}, fourth row) does not: under the saturating crowd its $t$-statistic has the wrong sign in all three seeds. When meaning is weak ($\varkappa=5$) or ignored, the echo share is underestimated by about six percentage points and the echo test weakens, as \cref{thm:semantic} leads one to expect. More events per firm strengthen the echo test, somewhat more slowly than the square root of the sample: over two seeds the $t$-statistic is $-5.4$, $-6.9$ and $-9.3$ with $20$, $40$ and $80$ events, while detection stays at an AUC of $0.93$--$0.94$.

\begin{table}[ht]
\centering
\caption{Robustness: one change at a time, three seeds each. ``Reference'' is the main market on the same seeds. IC: out-of-sample information coefficient. Echo share: estimated (true).}
\label{tab:robust}
\small
\setlength{\tabcolsep}{3.6pt}
\begin{tabular}{@{}lrrrrrrrc@{}}
\toprule
& \multicolumn{3}{c}{IC} & \multicolumn{3}{c}{$t$-statistic} & & \\
\cmidrule(lr){2-4}\cmidrule(lr){5-7}
Variant & fade price & linear & + theory & echo & news & L\'evy & AUC & echo share \\
\midrule
Reference & $0.184$ & $0.187$ & $0.190$ & $-7.5$ & $+3.6$ & $+3.6$ & $0.941$ & $0.425$ ($0.430$) \\
Fat tails ($t_3$ values) & $0.134$ & $0.137$ & $0.141$ & $-8.1$ & $+2.4$ & $+0.5$ & $0.924$ & $0.423$ ($0.428$) \\
Saturating crowd & $0.090$ & $0.099$ & $0.101$ & $-5.9$ & $+3.2$ & $-0.0$ & $0.942$ & $0.425$ ($0.430$) \\
Power-law echo delays & $0.171$ & $0.176$ & $0.177$ & $-7.9$ & $+2.1$ & $+1.9$ & $0.939$ & $0.417$ ($0.429$) \\
30\% of trade disclosed & $0.184$ & $0.183$ & $0.191$ & $-7.5$ & $+4.1$ & $+2.9$ & $0.857$ & $0.425$ ($0.430$) \\
Weak meaning ($\varkappa=5$) & $0.173$ & $0.187$ & $0.188$ & $-7.2$ & $+1.9$ & $+2.2$ & $0.941$ & $0.373$ ($0.429$) \\
Timing only (reference markets) & $0.184$ & $0.187$ & $0.189$ & $-6.0$ & $+2.6$ & $+3.7$ & $0.940$ & $0.368$ ($0.430$) \\
\bottomrule
\end{tabular}
\end{table}

\paragraph{Q5: return-aligned embeddings organise text by consequence.}\label{app:text} For the embedding results we generated a corpus of short synthetic headlines. Each headline mixes four topic words, two tone words, one cue that decides whether the move is \emph{overdone} (it reverses within a week) or \emph{fundamental} (it continues), three filler words and a company code. Topic words are numerous and have nothing to do with outcomes, so a generic bag-of-words geometry groups headlines by topic. Each headline has a three-part outcome: the next-day move, the week-ahead move and the change in implied volatility. We trained a $16$-dimensional linear projection of TF-IDF vectors with the loss of \cref{thm:racl} ($\tau=0.1$, bandwidth by the median heuristic, $600$ Adam steps on batches of $256$) on $2{,}000$ headlines, and tested it on $1{,}000$ new ones.

Training moves the geometry from topic to consequence (\cref{tab:text,fig:text}a). In TF-IDF space the ten nearest neighbours of a test headline share its topic $85\%$ of the time and its consequence (tone and cue together) $51\%$ of the time. After training they share its consequence $100\%$ of the time and its topic $34\%$ of the time, close to the chance level of $25\%$. The analog forecast of \cref{prop:analog} predicts the week-ahead move with an IC of $0.892$, against $0.192$ for TF-IDF neighbours and a ceiling of $0.936$ set by the outcome noise. Linear regression on the same words fails completely ($-0.02$), because the outcome depends on the \emph{interaction} of tone and cue, which a linear model of word counts cannot express. Gradient boosting captures the interaction ($0.837$) but falls short of the embedding. That comparison favours the embedding, which is trained on all three outcomes: the next-day move reveals the tone and the volatility change reveals the cue, while boosting sees only the week-ahead target.

The certificate is the negative result (\cref{fig:text}b). After training, the mean excess loss per anchor is $0.021$--$0.026$ across seeds. That is small, but target probabilities over $149$ neighbours are of order $1/149\approx0.007$, and neither form of \cref{thm:robustnn} certifies a single comparison. Two things stand in the way. The median-heuristic bandwidth makes the targets almost flat, and the part of each outcome that the text cannot predict leaves an excess loss that no embedding can remove. When we remove both obstacles, using noise-free outcomes and a bandwidth of $0.3$ times the median, the exact form (ii) certifies $45\%$ of the ordered comparisons with no errors, while the Pinsker form (i) still certifies none. With the sharp bandwidth and realistic outcome noise, the excess loss stays at $0.14$ and nothing is certified, even though the analog IC rises to $0.931$, close to its ceiling. The certificate never made a wrong claim in any run. It becomes informative, however, only when outcomes are close to predictable from the text, which real returns are not.

\begin{table}[ht]
\centering
\caption{Text experiment: $1{,}000$ test headlines, mean $\pm$ s.d.\ over three seeds. Neighbour columns: share of the ten nearest training headlines with the same consequence (tone and cue) or the same topic; chance is $0.25$ for both. The ceiling for the IC is $0.936$.}
\label{tab:text}
\small
\begin{tabular}{@{}lrrr@{}}
\toprule
Method & Week-ahead IC & Same consequence & Same topic \\
\midrule
TF-IDF neighbours (generic) & $0.192\pm0.003$ & $0.506$ & $0.846$ \\
Random 16-d projection (generic) & $0.039\pm0.016$ & $0.327$ & $0.622$ \\
Ridge regression on words (supervised) & $-0.019\pm0.016$ & -- & -- \\
Gradient boosting on words (supervised) & $0.837\pm0.008$ & -- & -- \\
Return-aligned embedding, 16-d & $\mathbf{0.892\pm0.009}$ & $\mathbf{1.000}$ & $0.344$ \\
\bottomrule
\end{tabular}
\end{table}

\begin{figure}[ht]
\centering
\begin{minipage}[t]{0.48\textwidth}
\centering
\begin{tikzpicture}
\begin{semilogxaxis}[width=6.6cm, height=5.0cm, xmin=0.8, xmax=800, ymin=0, ymax=1.05,
  xtick={1,10,100,600}, xticklabels={1,10,100,600},
  xlabel={\footnotesize training step}, ylabel={\footnotesize share or IC},
  tick label style={font=\footnotesize}, axis lines=left,
  legend style={font=\scriptsize, draw=none, fill=none, at={(0.5,-0.30)}, anchor=north}, legend cell align=left]
\addplot[do, very thick, mark=*, mark size=1.5pt] coordinates {(1,0.388) (25,0.529) (50,0.550) (100,0.662) (200,0.992) (400,1.000) (600,1.000)};
\addlegendentry{neighbours: same consequence}
\addplot[gray, very thick, mark=square*, mark size=1.4pt] coordinates {(1,0.554) (25,0.501) (50,0.438) (100,0.382) (200,0.339) (400,0.339) (600,0.344)};
\addlegendentry{neighbours: same topic}
\addplot[echo, very thick, mark=triangle*, mark size=1.8pt] coordinates {(1,0.139) (25,0.201) (50,0.295) (100,0.551) (200,0.878) (400,0.900) (600,0.892)};
\addlegendentry{analog IC}
\addplot[echo, dashed, domain=0.8:800, samples=2] {0.192};
\node[font=\scriptsize, text=echo, anchor=south east] at (axis cs:700,0.192) {TF-IDF IC};
\end{semilogxaxis}
\node[font=\bfseries] at (-0.9,4.1) {a};
\end{tikzpicture}
\end{minipage}\hfill
\begin{minipage}[t]{0.48\textwidth}
\centering
\begin{tikzpicture}
\begin{semilogxaxis}[width=6.6cm, height=5.0cm, xmin=0.8, xmax=800, ymin=-0.02, ymax=0.6,
  xtick={1,10,100,600}, xticklabels={1,10,100,600},
  ytick={0,0.1,0.2,0.3,0.4,0.5},
  yticklabel style={/pgf/number format/fixed, /pgf/number format/precision=1},
  xlabel={\footnotesize training step}, ylabel={\footnotesize certified share of comparisons},
  tick label style={font=\footnotesize}, axis lines=left,
  legend style={font=\scriptsize, draw=none, fill=none, at={(0.5,-0.30)}, anchor=north}, legend cell align=left]
\addplot[do, very thick, mark=*, mark size=1.5pt] coordinates {(1,0) (25,0) (50,0.0549) (100,0.4602) (200,0.4915) (400,0.4915) (600,0.4534)};
\addlegendentry{exact (ii), noise-free, sharp}
\addplot[say, very thick, dashed, mark=square*, mark size=1.3pt] coordinates {(1,0) (25,0) (50,0) (100,0) (200,0) (400,0) (600,0)};
\addlegendentry{Pinsker (i), same run}
\addplot[gray, thick, mark=triangle*, mark size=1.6pt, mark options={solid}] coordinates {(1,0) (25,0) (50,0) (100,0) (200,0) (400,0) (600,0)};
\addlegendentry{exact (ii), realistic noise}
\end{semilogxaxis}
\node[font=\bfseries] at (-0.9,4.1) {b};
\end{tikzpicture}
\end{minipage}
\caption{Return-aligned training. \textbf{a}: the ten nearest neighbours of test headlines switch from sharing a topic to sharing a consequence, and the analog forecast improves with them (mean of three seeds). \textbf{b}: share of ordered outcome comparisons certified by \cref{thm:robustnn} on $150$ training anchors. With noise-free outcomes and a bandwidth of $0.3$ times the median, the exact form certifies about half of them. The Pinsker form certifies none, and with realistic outcome noise neither form certifies any. No certified comparison was ever wrong.}
\label{fig:text}
\end{figure}

\section{Broader impacts and responsible use}\label{app:impact}

\paragraph{Intended use.} The framework is designed to help investors, researchers and supervisors read financial narratives critically, by asking whether words agree with revealed positions and whether a story is news or repetition. It uses public information and lawful disclosures only.

\paragraph{Interpretation.} A strategic reading means that narrative and capital moved in opposite directions in a particular order. It is not evidence of deception. Firms maintain information barriers between research and trading, and positioning can diverge from published views for legitimate reasons, such as hedging, client flow and differing horizons. Outputs should never be used to accuse individuals or firms without independent evidence.

\paragraph{Misuse.} Understanding how narratives move prices could in principle help someone design them. Creating or spreading narratives intended to move prices is market manipulation and illegal in most jurisdictions. We release no data, trained models or tools for generating text, and the simulator describes aggregate behaviour, not specific market participants.

\paragraph{Market effects.} If many participants faded the same echoes, the returns we describe would shrink, and crowding could create new risks. \Cref{thm:cycles} is itself a warning that reading rules and the texts they read co-evolve.

\end{document}